\spnewtheorem{prop}{Property}{\bfseries}{\itshape} 
\newcommand{\remove}[1]{}
\renewenvironment{proof}
{{\em Proof.\ }}{\hspace*{\fill}$\Box$\par\vspace{2mm}}
\newcommand{\conf}{\otimes}
\begin{document}
\title{Advances on Testing C-Planarity of\\ Embedded Flat Clustered Graphs
\thanks{Research partially supported by the Australian Research Council (grant DE140100708).}
}
\author{Markus Chimani\inst{1} \and Giuseppe Di Battista\inst{2} \and Fabrizio Frati\inst{3} \and Karsten Klein\inst{3}}
\institute{Theoretical Computer Science, University Osnabr\"uck, Germany\\
\email{markus.chimani@uni-osnabrueck.de}
\and Dipartimento di Ingegneria, University Roma Tre, Italy\\
\email{gdb@dia.uniroma3.it}
\and School of Information Technologies, The University of Sydney, Australia\\
\email {\{fabrizio.frati,karsten.klein\}@sydney.edu.au}}
\maketitle

\begin{abstract}
We show a polynomial-time algorithm for testing $c$-planarity of embedded flat clustered graphs with at most two vertices per cluster on each face.
\end{abstract}

\section{Introduction}
A \emph{clustered graph} $C(G,T)$ consists of a graph $G(V,E)$, called {\em underlying graph}, and of a rooted tree $T$, called {\em inclusion tree}, representing a cluster hierarchy on $V$. The vertices in $V$ are the leaves of $T$, and the inner nodes of $T$, except for the root, are called \emph{clusters}. The vertices that are descendants of a cluster $\alpha$ in $T$ {\em belong to} $\alpha$ or {\em are in} $\alpha$. A \emph{$c$-planar drawing} of $C$ is a planar drawing of $G$ together with a representation of each cluster $\alpha$  as a simple connected region $R_{\alpha}$ enclosing all and only the vertices that are in $\alpha$; further, the boundaries of no two such regions $R_{\alpha}$ and $R_{\beta}$ intersect; finally, only the edges connecting vertices in $\alpha$ to vertices not in $\alpha$ cross the boundary of $R_{\alpha}$, and each does so only once. A clustered graph is \emph{$c$-planar} if it admits a $c$-planar drawing.

Clustered graphs find numerous applications in computer science~\cite{s-gc-07}, thus theoretical questions on clustered graphs have been deeply investigated. From the visualization perspective, the most intriguing question is to determine the complexity of testing $c$-planarity of clustered graphs. Unlike for other planarity variants~\cite{s-ttphtpv-13}, like {\em upward planarity}~\cite{GT01} and {\em partial embedding planarity}~\cite{adfjkpr-tppeg-10}, the complexity of testing $c$-planarity remains unknown since the problem was posed nearly two decades ago~\cite{fce-pcg-95}.


Polynomial-time algorithms to test the $c$-planarity of a clustered graph $C$ are known if $C$ belongs to special classes of clustered graphs~\cite{cw-cccg-06,cdpp-cccc-05,df-ectefcgsf-09,fce-pcg-95,d-ltarcgp-98,cdfpp-cccg-08,gls-cecg-05,gjlmpw-actcg-02,jkkpsv-scceg-09,jstv-cfoe-08}, including {\em $c$-connected clustered graphs}, that are clustered graphs $C(G,T)$ in which, for each cluster $\alpha$, the subgraph $G[\alpha]$ of $G$ induced by the vertices in $\alpha$ is connected~\cite{fce-pcg-95,d-ltarcgp-98,cdfpp-cccg-08}. Effective ILP formulations and FPT algorithms for testing $c$-planarity have been presented~\cite{cgjkm-ssscp-09,ck-ssscp-12}. Generalizations of the $c$-planarity testing problem have also been considered~\cite{addfpr-rccp-14,afp-scgc-09,dgl-ocp-08}.

An important variant of the $c$-planarity testing problem is the one in which the clustered graph $C(G,T)$ is {\em flat} and {\em embedded}. That is, every cluster is a child of the root of $T$ and a planar embedding for $G$ (an order of the edges incident to each vertex) is fixed in advance; then, the $c$-planarity testing problem asks whether a $c$-planar drawing exists in which $G$ has the prescribed planar embedding. This setting can be highly regarded for several reasons. First, several NP-hard graph drawing problems are polynomial-time solvable in the fixed embedding scenario, e.g., {\em upward planarity testing}~\cite{bdlm-udtd-94,GT01} and {\em bend minimization in orthogonal drawings}~\cite{t-eggmnb-87,GT01}. Second, testing $c$-planarity of embedded flat clustered graphs generalizes testing $c$-planarity of triconnected flat clustered graphs. Third, testing $c$-planarity of embedded flat clustered graphs is strongly related to a seemingly different problem, that we call {\em planar set of spanning trees in topological multigraphs} ({\sc pssttm}): Given a non-planar topological multigraph $A$ with $k$ connected components $A_1,\dots,A_k$, do spanning trees $S_1,\dots,S_k$ of $A_1,\dots,A_k$ exist such that no two edges in $\bigcup_i S_i$ cross? Starting from an embedded flat clustered graph $C(G,T)$, an instance $A$ of the {\sc pssttm} problem can be constructed that admits a solution if and only if $C(G,T)$ is $c$-planar: $A$ is composed of the edges that can be inserted inside the faces of $G$ between vertices of the same cluster, where each cluster defines a multigraph $A_i$. The {\sc pssttm} problem is NP-hard, even if $k=1$~\cite{kln-nstl-91}.


Testing $c$-planarity of an embedded flat clustered graph $C(G,T)$ is a polynomial-time solvable problem if $G$ has no face with more than five vertices and, more in general, if $C$ is a {\em single-conflict} clustered graph~\cite{df-ectefcgsf-09}, i.e., the instance $A$ of the {\sc pssttm} problem associated with $C$ is such that each edge has at most one crossing. A polynomial-time algorithm is also known for testing $c$-planarity of embedded flat clustered graphs such that the graph induced by each cluster has at most two connected components~\cite{jjkl-ecgtcc-09}. Finally, the $c$-planarity of clustered cycles with at most three clusters~\cite{cdpp-cccc-05} or with each cluster containing at most three vertices~\cite{jkkpsv-scceg-09} can be tested in polynomial time.

\subsubsection*{Contribution and outline.} In this paper we show how to test $c$-planarity in cubic time for embedded flat clustered graphs $C(G,T)$ such that at most two vertices of each cluster are incident to any face of $G$. While this setting might seem unnatural at a first glance, its study led to a deep (in our opinion) exploration of some combinatorial properties of highly non-planar topological graphs. Namely, every instance $A$ of the {\sc pssttm} problem arising from our setting is such that there exists no sequence $e_1,e_2,\dots,e_h$ of edges in $A$ with $e_1$ and $e_h$ in the same connected component of $A$ and with $e_i$ crossing $e_{i+1}$, for every $1\leq i\leq h-1$; these instances might contain a quadratic number of crossings, which is not the case for single-conflict clustered graphs~\cite{df-ectefcgsf-09}. Within our setting, performing all the ``trivial local'' tests and simplifications results in the rise of nice global structures, called {\em $\alpha$-donuts}, whose study was interesting to us.

The paper is organized as follows. In Section~\ref{se:preliminaries} we introduce some preliminaries; in Section~\ref{se:outline} we give an outline of our algorithm; in Section~\ref{se:algorithm} we describe our algorithm and prove its correctness; finally, in Section~\ref{se:conclusions} we conclude.

\section{Saturators, Con-Edges, and Spanning Trees} \label{se:preliminaries}

A natural approach to test $c$-planarity of a clustered graph $C(G(V,E),T)$ is to search for a {\em saturator} for $C$. A set $S\subseteq V\times V$ is a saturator for $C$ if $C'(G'(V,E\cup S),T)$ is a $c$-connected $c$-planar clustered graph. Determining the existence of a saturator for $C$ is equivalent to testing the $c$-planarity of $C$~\cite{fce-pcg-95}. Thus, the core of the problem consists of determining $S$ so that $G'[\alpha]$ is connected, for each $\alpha\in T$, and so that $G'$ is planar.

In the context of embedded flat clustered graphs (see Fig.~\ref{fig:con-edges}(a)), the problem of finding saturators becomes seemingly simpler. Since the embedding of $G$ is fixed, the edges in $S$ can only be embedded inside the faces of $G$, in order to guarantee the planarity of $G'$. This implies that, for any two edges $e_1$ and $e_2$ that can be inserted inside a face $f$ of $G$, it is known {\em a priori} whether $e_1$ and $e_2$ can be both in $S$, namely only if their end-vertices do not alternate along the boundary of $f$. Also, $S$ can be assumed to contain only edges connecting vertices that belong to the same cluster, as edges connecting vertices belonging to different clusters ``do not help'' to connect any cluster. For the same reason, $S$ can be assumed to contain only edges connecting vertices belonging to distinct connected components of $G[\alpha]$, for each cluster $\alpha$.

\begin{figure}[tb]
\begin{center}
\begin{tabular}{c c c c}
\mbox{\includegraphics[scale=0.315]{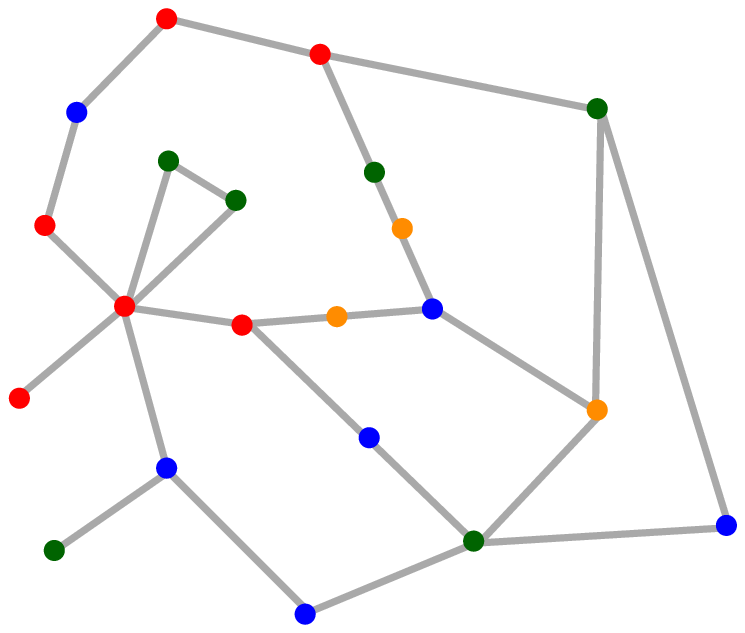}} \hspace{1mm} &
\mbox{\includegraphics[scale=0.315]{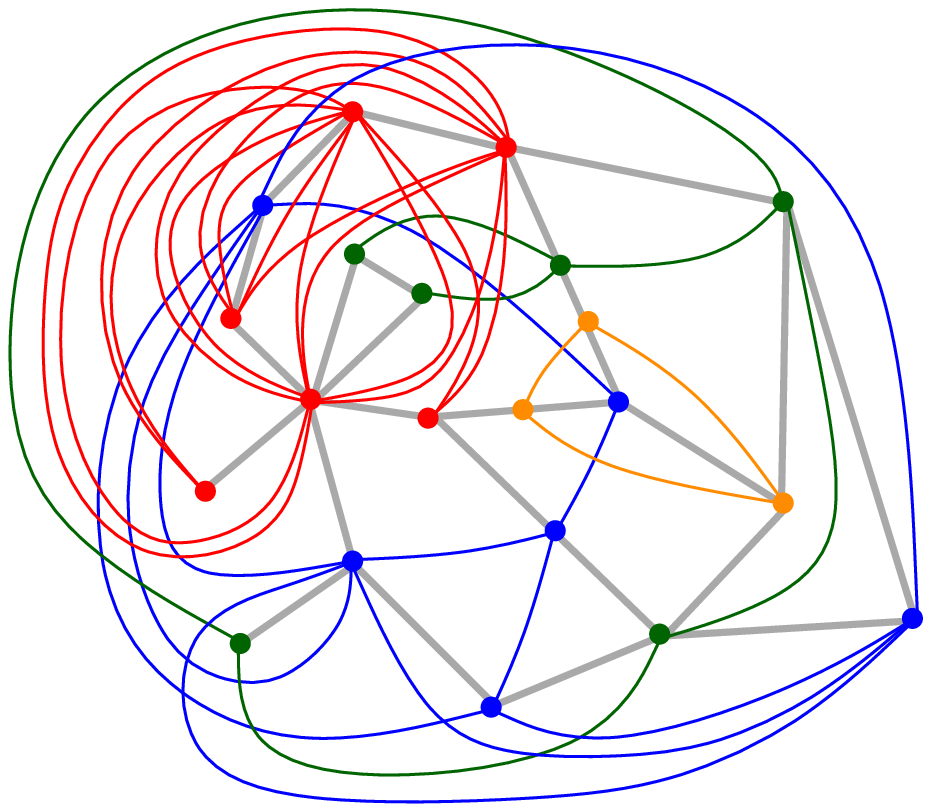}} \hspace{1mm} &
\mbox{\includegraphics[scale=0.315]{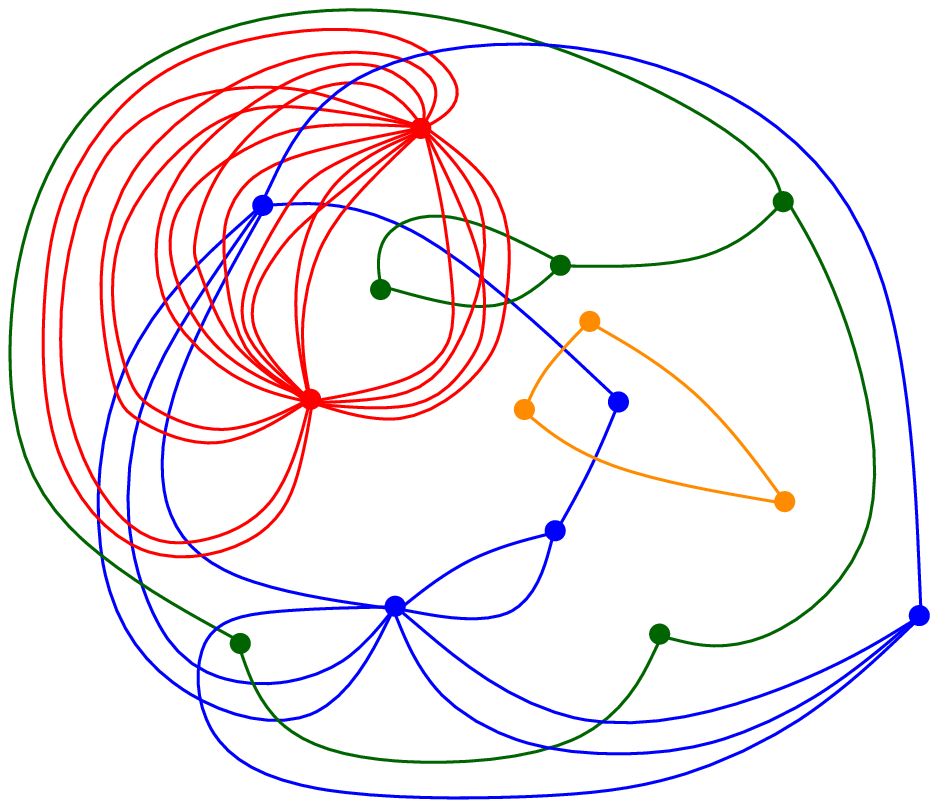}} \hspace{1mm} &
\mbox{\includegraphics[scale=0.315]{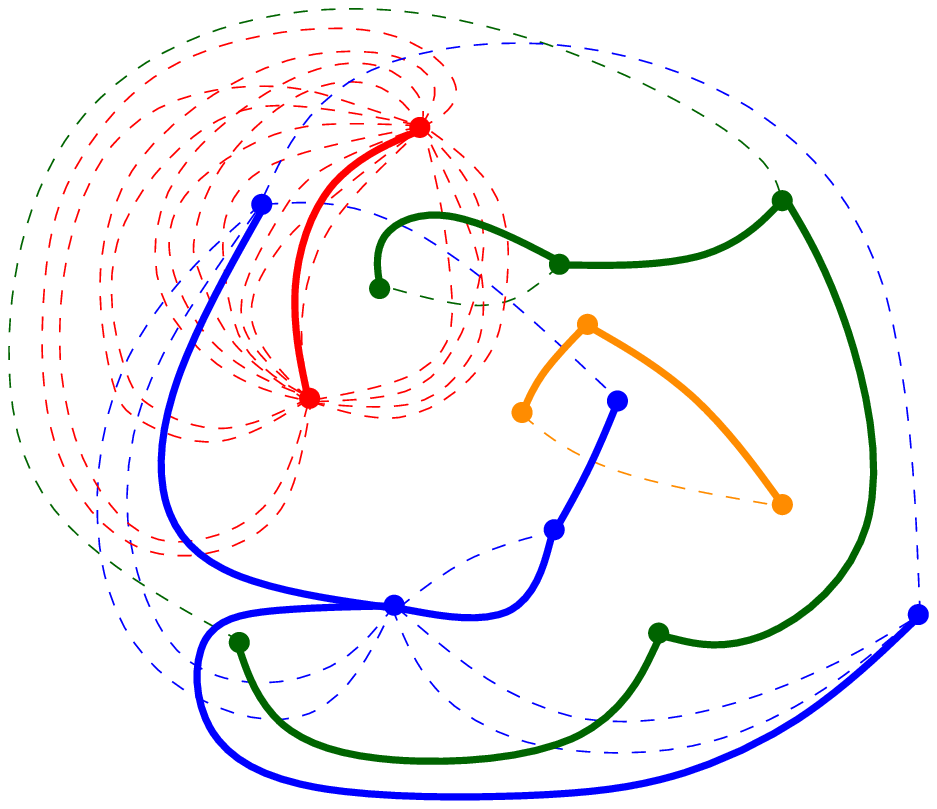}}\\
(a) \hspace{1mm} & (b) \hspace{1mm} & (c) \hspace{1mm} & (d)
\end{tabular}
\caption{(a) A clustered graph $C$. (b) Con-edges in $C$. (c) Multigraph $A$. (d) A planar set $S$ of spanning trees for $A$. Edges in $S$ are thick and solid, while edges in $A\setminus S$ are thin and dashed.}
\label{fig:con-edges}
\end{center}
\end{figure}

Consider a face $f$ of $G$ and let $B_f=(o_1,\dots,o_k)$ be the clockwise order of the occurrences of vertices along the boundary of $f$, where $o_i$ and $o_j$ might be occurrences of the same vertex $u$ (this might happen if $u$ is a cut-vertex of $G$). A {\em con-edge} (short for {\em connectivity-edge}) is a pair of occurrences $(o_i,o_j)$ of distinct vertices both belonging to a cluster $\alpha$, both incident to $f$, and belonging to different connected components of $G[\alpha]$ (see Fig.~\ref{fig:con-edges}(b)). If there are $\ell$ distinct pairs of occurrences of vertices $u$ and $v$ along a single face $f$, then there are $\ell$ con-edges connecting $u$ and $v$ in $f$, one for each pair of occurrences. A {\em con-edge for $\alpha$} is a con-edge connecting vertices in a cluster $\alpha$. Two con-edges $e$ and $e'$ in $f$ {\em have a conflict} or {\em cross} (we write $e \conf e'$) if the occurrences in $e$ alternate with the occurrences in $e'$ along the boundary of $f$.

The {\em multigraph $A$ of the con-edges} is an embedded multigraph that is defined as follows. Starting from $G$, insert all the con-edges inside the faces of $G$; then, for each cluster $\alpha$ and for each connected component $G_i[\alpha]$ of $G[\alpha]$, contract $G_i[\alpha]$ into a single vertex; finally, remove all the edges of $G$.  See Fig.~\ref{fig:con-edges}(c). With a slight abuse of notation, we denote by $A$ both the multigraph of the con-edges and the set of its edges. For each cluster $\alpha$, we denote by $A[\alpha]$ the subgraph of $A$ induced by the con-edges for $\alpha$. A {\em planar set of spanning trees for $A$} is a set $S\subseteq A$ such that: (i) for each cluster $\alpha$, the subset $S[\alpha]$ of $S$ induced by the con-edges for $\alpha$ is a tree that spans the vertices belonging to $\alpha$; and (ii) there exist no two edges in $S$ that have a conflict. See Fig.~\ref{fig:con-edges}(d). The {\sc pssttm} problem asks whether a planar set of spanning trees for $A$ exists.

The following lemma relates the $c$-planarity problem for embedded flat clustered graphs to the {\sc pssttm} problem.

\begin{lemma}[\cite{df-ectefcgsf-09}] \label{le:acyclic-saturator}
An embedded flat clustered graph $C(G,T)$ is $c$-planar if and only if: (1) $G$ is planar; (2) there exists a face $f$ in $G$ such that when $f$ is chosen as outer face for $G$ no cycle composed of vertices of the same cluster encloses a vertex of a different cluster; and (3) a planar set of spanning trees for $A$ exists.
\end{lemma}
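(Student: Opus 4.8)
The plan is to prove the two directions of the equivalence separately, in both of them routing through the known characterization of $c$-planarity for \emph{$c$-connected} flat clustered graphs with a fixed embedding: such a graph is $c$-planar if and only if some face can be chosen as outer face so that no cycle all of whose vertices lie in a single cluster encloses a vertex of a different cluster~\cite{fce-pcg-95,d-ltarcgp-98}.

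\textbf{Necessity.} First I would take a $c$-planar drawing $\Gamma$ of $C$ realizing the prescribed embedding of $G$, with region $R_\alpha$ for each cluster $\alpha$. Condition~(1) is immediate, since restricting $\Gamma$ to $G$ gives a planar drawing with the prescribed embedding. For condition~(2) I would pick $f$ to be a face of $G$ containing a point outside every $R_\alpha$ (such a face exists, since the regions $R_\alpha$ have pairwise disjoint boundaries and hence cannot cover the sphere), and choose it as outer face: if $\gamma$ is a cycle of $G$ all of whose vertices lie in a cluster $\alpha$, then each edge of $\gamma$ joins two vertices of $\alpha$, hence does not cross $\partial R_\alpha$ and, having both endpoints in $R_\alpha$, lies entirely in $R_\alpha$; so $\gamma\subseteq R_\alpha$, and since $R_\alpha$ is a disk, the side of $\gamma$ not containing $f$ is contained in $R_\alpha$ and thus holds only vertices of $\alpha$. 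For condition~(3) I would turn $\Gamma$ into a $c$-connected drawing by adding, inside each $R_\alpha$, a minimal set $S_\alpha$ of edges joining the connected components of $G[\alpha]$; the point to verify is that each such edge can be routed inside a face of $G$, which holds because the only parts of $G$ meeting the interior of $R_\alpha$ are the edges of $G[\alpha]$ and sub-arcs of the edges leaving $\alpha$. Then $S=\bigcup_\alpha S_\alpha$ is a planar set of spanning trees for $A$: each $S_\alpha$ is a spanning tree of $A[\alpha]$ by construction, and no two edges of $S$ conflict because the $S_\alpha$'s are drawn without crossings inside the pairwise internally disjoint regions $R_\alpha$.

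\textbf{Sufficiency.} Conversely, assume (1)--(3) hold, let $f$ be a face as in~(2), and let $S$ be a planar set of spanning trees for $A$. I would draw $G$ with its prescribed embedding and $f$ as outer face, then insert the con-edges of $S$ into the faces of $G$: the con-edges of $S$ lying in a common face are pairwise non-crossing chords of that face (no two conflict), hence can be added without crossings, yielding an embedded planar graph $G'$ and a clustered graph $C'=C(G',T)$. Since $S[\alpha]$ connects all components of $G[\alpha]$, the subgraph $G'[\alpha]$ is connected, so $C'$ is $c$-connected. The key observation is that $S[\alpha]$, after contracting each component of $G[\alpha]$ to a point, is a tree, hence lies in no cycle of $G'[\alpha]$; therefore every cycle of $G'[\alpha]$ is a cycle of $G[\alpha]$, and by~(2) none of them encloses a vertex of a different cluster when $f$ (or one of the sub-faces into which inserting $S$ may split it) is taken as outer face. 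The characterization of $c$-planarity for $c$-connected flat clustered graphs with a fixed embedding~\cite{fce-pcg-95,d-ltarcgp-98} then yields a $c$-planar drawing of $C'$ with $G'$ in its given embedding; deleting from it the edges of $S$ leaves a $c$-planar drawing of $C$ realizing the prescribed embedding of $G$, so $C$ is $c$-planar.

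\textbf{Where the difficulty lies.} The routine ingredients are condition~(1) and the chord-diagram argument used to insert $S$ planarly. The two steps requiring care are: in the necessity direction, certifying that the saturating edges $S_\alpha$ can be confined to faces of $G$, which needs a precise account of what lies inside $R_\alpha$; and in the sufficiency direction, the claim that augmenting $G$ by $S$ creates no new enclosing monochromatic cycle, together with the correct statement and use of the fixed-embedding form of the $c$-connected characterization. A minor technical nuisance is justifying that the outer face of a $c$-planar drawing may be taken to avoid every region $R_\alpha$.
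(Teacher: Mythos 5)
The paper does not prove this lemma at all: it is imported verbatim from~\cite{df-ectefcgsf-09}, so there is no in-paper argument to compare against. Your proof is, as far as I can tell, correct and is essentially the standard argument one would expect (and the one underlying the cited reference): necessity by reading conditions (1)--(3) off a $c$-planar drawing, sufficiency by saturating with $S$ and invoking the fixed-embedding characterization of $c$-planarity for $c$-connected clustered graphs. The one step that is genuinely under-justified is the one you yourself flag: in the necessity of (3), knowing that the interior of $R_\alpha$ meets $G$ only in $G[\alpha]$ and in sub-arcs of edges leaving $\alpha$ tells you \emph{what} lies inside $R_\alpha$, but not yet that the components of $G[\alpha]$ can be joined by arcs each confined to a single face of $G$. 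To close this you need an extra connectivity argument: the components of $G\cap R_\alpha$ together with the connected components of $R_\alpha\setminus G$ cover the connected set $R_\alpha$, so their incidence graph is connected; each component of $R_\alpha\setminus G$ lies in a single face of $G$ (it is connected and disjoint from $G$), so a spanning tree of the incidence structure yields non-crossing connecting arcs, whose endpoints can then be slid along edges or dangling sub-arcs to actual vertices of $\alpha$ without leaving the face. It is also worth making explicit that each such arc enters its face at specific \emph{occurrences} of its endpoints, which is what identifies it with a con-edge of $A$ (con-edges are occurrence pairs, not vertex pairs). The sufficiency direction is fine, including the two points that actually need care: that contracting the components of $G[\alpha]$ turns any cycle of $G'[\alpha]$ into a closed walk in the tree $S[\alpha]$, forcing it to avoid $S$ entirely, and that the enclosure relation for cycles of $G$ is unaffected by replacing $f$ with one of its sub-faces.
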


We now introduce the concept of {\em conflict graph} $K_A$, which is defined as follows. Graph $K_A$ has a vertex for each con-edge in $A$ and has an edge $(e,e')$ if $e\conf e'$. In the remainder of the paper we will show how to decide whether a set of planar spanning trees for $A$ exists by assuming that the following property holds for $A$.

\begin{prop} \label{pr:no-two-edges-same-structure}
No two con-edges for the same cluster belong to the same connected component of $K_A$.
\end{prop}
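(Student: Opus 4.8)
The plan is to reduce the statement to a single face of $G$ and then use the flat ``at most two per face'' restriction to pin down the con-edges of one cluster inside one face. First, two con-edges conflict only if they lie in a common face, and each con-edge lies in exactly one face; hence every con-edge on a path $e_1\conf e_2\conf\cdots\conf e_h$ of $K_A$ lies in one and the same face $f$. So it suffices to show that, for every face $f$ and every cluster $\alpha$, no $K_A$-path confined to $f$ joins two distinct con-edges for $\alpha$.

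Fix $f$ and $\alpha$. At most two vertices $u,v$ of $\alpha$ touch $f$ (otherwise there is nothing to prove), and since $u$ and $v$ lie in different components of $G[\alpha]$ they are non-adjacent. I would first show that the occurrences of $u$ and those of $v$ do not alternate along the boundary walk $B_f$: were they to alternate at occurrences $\mu,\nu,\mu',\nu'$, then a simple arc drawn inside the disk $f$ from $\mu$ to $\mu'$ becomes, in the plane, a closed curve based at $u$ that otherwise avoids $G$, and an alternation would force the single point $v$ to lie both inside and outside this curve, a contradiction. Hence, read cyclically along $B_f$, the occurrences $u^1,\dots,u^s$ of $u$ and $v^1,\dots,v^t$ of $v$ occupy two disjoint arcs, every con-edge for $\alpha$ in $f$ is one of the chords $(u^i,v^j)$, and checking the alternation condition yields that $(u^i,v^j)$ and $(u^k,v^l)$ conflict precisely when $i<k$ and $j<l$ (up to exchanging the two). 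In particular some two con-edges for $\alpha$ in $f$ conflict if and only if both $u$ and $v$ occur at least twice on $\partial f$; if $u$ or $v$ occurs only once, the con-edges for $\alpha$ in $f$ are pairwise conflict-free.

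So the property can fail inside a face only through a vertex of $\alpha$ that occurs several times on $\partial f$ — either via a direct conflict between two con-edges for $\alpha$, or because two con-edges for $\alpha$ share an occurrence and get linked through a con-edge of another cluster that ``jumps over'' that occurrence. This is precisely the situation targeted by the trivial local tests and simplifications announced in the introduction: I would argue that whenever a vertex of $\alpha$ occurs more than once on $f$, one can either delete con-edges for $\alpha$ that are redundant — joining two components of $G[\alpha]$ already joined by a con-edge whose conflict set is no larger — or read off a local obstruction to $c$-planarity, so that in the resulting instance the con-edges of any cluster inside any face form an independent set of $K_A$ that is not reconnected through other clusters; together with the single-face reduction this gives the claim. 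The single-face reduction and the chord-conflict analysis are routine once the separation argument is in place; the step I expect to be the main obstacle is showing that these purely local clean-ups suffice to restore the property globally, in particular that no residual interaction running through con-edges of several distinct clusters survives.
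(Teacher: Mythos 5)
Your setup is sound and matches what the paper does: this Property is not an unconditional fact but is enforced by a preprocessing reduction on $A$; confining each connected component of $K_A$ to a single face (conflicts are only defined within a face) and arguing that the occurrences of the two $\alpha$-vertices $u,v$ on the boundary of $f$ occupy disjoint arcs of $B_f$ are both correct and are used (the latter implicitly, as a w.l.o.g.\ renaming) in the paper. The gap is exactly where you flag it, and it is a real one. The paper's clean-up does not rest on deleting con-edges that are ``redundant'' in your domination sense, nor does it ever declare an obstruction to $c$-planarity at this stage. It uses a \emph{forcing} argument: if $o_i,o_j$ are occurrences of $u_{\varrho}$, $o_\ell$ is an occurrence of $v_{\varrho}$, and there are occurrences $o_p,o_q$ of the two vertices $x,y$ of another cluster $\tau$ on $f$ with $i<p<j$ and $j<q<\ell$, then \emph{every} con-edge $(x,y)$ crosses the con-edge $e_{\varrho}$ joining $o_j$ to $o_\ell$; since the con-edges $(x,y)$ form a separating set for $A[\tau]$, at least one of them lies in any planar set of spanning trees, so $e_{\varrho}$ lies in none and can be deleted outright. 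Only after all such forced deletions do the surviving parallel con-edges $(u_{\varrho},v_{\varrho})$ have \emph{identical} conflict sets, at which point keeping a single one is the (now trivial) domination step.

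Your proposed dichotomy cannot substitute for this. Two parallel con-edges $(u^1,v^1)$ and $(u^2,v^2)$ can have incomparable conflict sets (one crossed only by con-edges for $\tau$, the other only by con-edges for $\sigma$), so neither is ``redundant'' under your criterion; yet by your own chord analysis they conflict directly, so the Property fails for them. Nor is there any local obstruction to read off: the instance may well remain $c$-planar, the correct resolution being that one specific con-edge is excluded from \emph{every} solution rather than that no solution exists. The missing idea is precisely the ``crossed by an entire separating set of another cluster, hence never in $S$'' deletion; without it, the local clean-ups you describe do not restore the Property.
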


We now show that $A$ can be assumed w.l.o.g. to satisfy Property~\ref{pr:no-two-edges-same-structure}, given that $C(G,T)$ has at most two vertices per cluster incident to each face of $G$. Consider any face $f$ of $G$ and any cluster $\varrho$ such that two vertices $u_{\varrho}$ and $v_{\varrho}$ of $\varrho$ are incident to $f$.

First, no con-edge for $\varrho$ in $A$ that connects a pair of vertices different from $(u_{\varrho},v_{\varrho})$ belongs to the connected component of $K_A$ containing $(u_{\varrho},v_{\varrho})$, given that no vertex of $\varrho$ different from $u_{\varrho}$ and $v_{\varrho}$ is incident to $f$. However, it might be the case that several con-edges $(u_{\varrho},v_{\varrho})$ belong to the same connected component of $K_A$, which happens if $u_{\varrho}$, or $v_{\varrho}$, or both have several occurrences on the boundary of $f$. We show a simple reduction that gets rid of these multiple con-edges.

Denote by $B_f=(o_1,o_2,\dots,o_k)$ the clockwise order of the occurrences of vertices along the boundary of $f$ and assume w.l.o.g. that $o_i$, $o_j$, and $o_\ell$ are occurrences of  $u_{\varrho}$, $u_{\varrho}$, and $v_{\varrho}$, respectively, with $1\leq i< j< \ell\leq k$.

Suppose that there exist occurrences $o_p$ and $o_q$ in $B_f$ of vertices $x$ and $y$ belonging to a cluster $\tau$ with $\tau\neq \varrho$, with $i<p<j$, and with $j<q<\ell$, as in Fig.~\ref{fig:property1}(a). We claim that, if any planar set $S$ of spanning trees for $A$ exists, then $S$ does not contain the con-edge $e_{\varrho}=(u_{\varrho},v_{\varrho})$ connecting the occurrence $o_j$ of $u_{\varrho}$ and the occurrence $o_\ell$ of $v_{\varrho}$. Namely, all the con-edges $(x,y)$ have a conflict with $e_{\varrho}$; moreover, the con-edges $(x,y)$ form a separating set for $A[\tau]$, hence at least one of them belongs to $S$. Thus, $e_{\varrho}\notin S$, and this edge can be removed from $A$, as in Fig.~\ref{fig:property1}(b). Similar reductions can be performed if $\ell<q\leq k$ or $1\leq q<i$, and by exchanging the roles of $u_{\varrho}$ and $v_{\varrho}$.

\begin{figure}[tb]
\begin{center}
\begin{tabular}{c c c c}
\mbox{\includegraphics[scale=0.43]{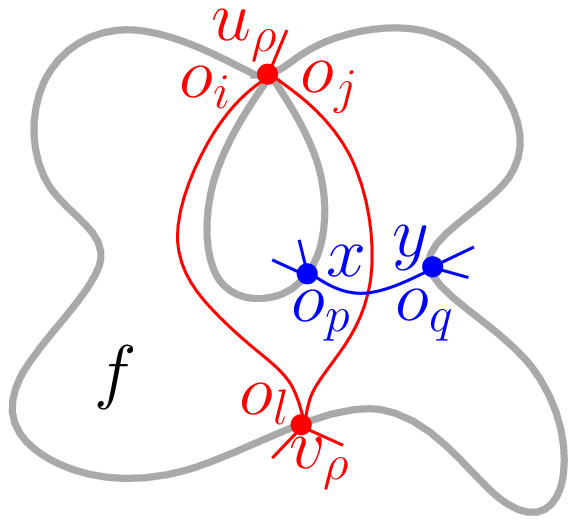}} \hspace{2mm} &
\mbox{\includegraphics[scale=0.43]{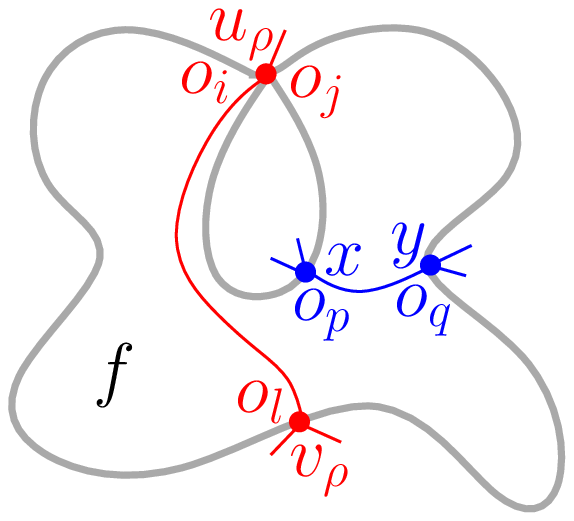}}  \hspace{2mm} &
\mbox{\includegraphics[scale=0.43]{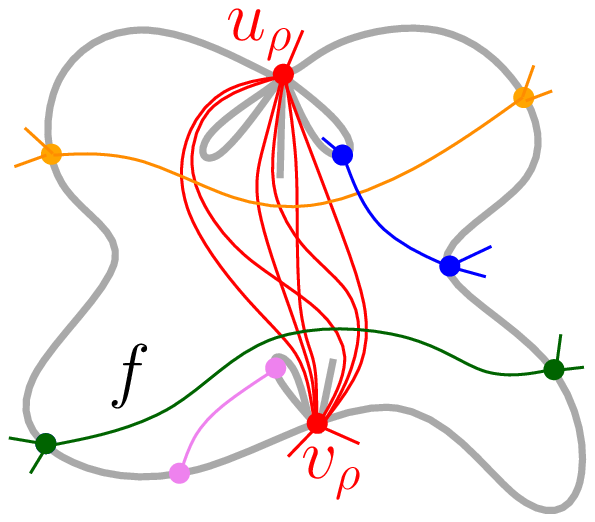}}  \hspace{2mm} &
\mbox{\includegraphics[scale=0.43]{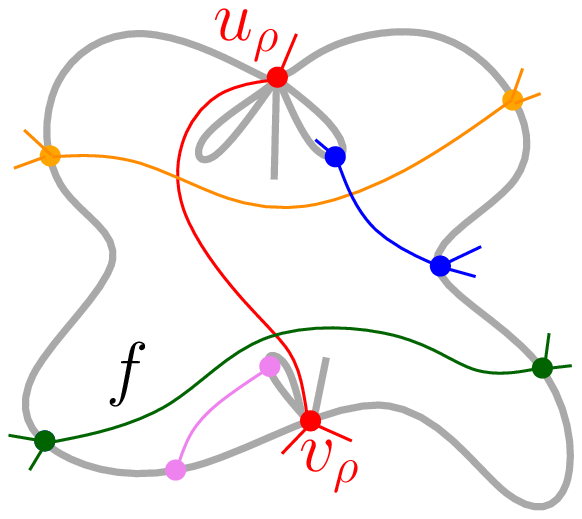}}\\
(a) \hspace{2mm} & (b) \hspace{2mm} & (c) \hspace{2mm} & (d)
\end{tabular}
\caption{Illustration for the reduction to a multigraph of the con-edges satisfying Property~\ref{pr:no-two-edges-same-structure}.}
\label{fig:property1}
\end{center}
\end{figure}

Rename the vertex occurrences in $B_f$ so that $o_1$ and $o_a$ are the first and the last occurrence of $u_{\varrho}$ in $B_f$, and so that $o_b$ and $o_c$ are the first and the last occurrence of $v_{\varrho}$ in $B_f$, with $1\leq a < b\leq c < k$. If no two occurrences $o_p$ and $o_q$ in $B_f$ as described above exist, the only con-edges $(u_{\varrho},v_{\varrho})$ left are crossed by con-edges connecting occurrences $o_p$ and $o_q$ of vertices $x$ and $y$ in $\tau$, respectively, such that $a<p<b$ and $c<q\leq k$. That is, any two con-edges $(u_{\varrho},v_{\varrho})$ cross the same set of con-edges for clusters different from $\varrho$ (see Fig.~\ref{fig:property1}(c)). Hence, a single edge $(u_{\varrho},v_{\varrho})$ can be kept in $A$, and all the other con-edges $(u_{\varrho},v_{\varrho})$ can be removed from $A$ (see Fig.~\ref{fig:property1}(d)).

After repeating this reduction for all the con-edges in $A$, an equivalent instance is eventually obtained in which Property~\ref{pr:no-two-edges-same-structure} is satisfied by $A$. Observe that the described simplification can be easily performed in $O(|C|^2)$ time. Thus, we get the following:

\begin{lemma} \label{le:pssttm}
Assume that the {\sc pssttm} problem can be solved in $f(|A|)$ time for instances satisfying Property~\ref{pr:no-two-edges-same-structure}. Then the $c$-planarity of any embedded flat clustered graph $C$ with at most two vertices per cluster on each face can be tested in $O(f(|A|)+|C|^2)$~time.
\end{lemma}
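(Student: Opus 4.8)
The plan is to assemble the lemma directly from the pieces already developed in this section. First I would recall Lemma~\ref{le:acyclic-saturator}, which reduces testing $c$-planarity of an embedded flat clustered graph $C(G,T)$ to three conditions: planarity of $G$, a rotation-of-outer-face condition on monochromatic cycles, and the existence of a planar set of spanning trees for the associated multigraph $A$ of the con-edges. Conditions (1) and (2) are purely combinatorial tests on $G$: planarity can be checked in linear time, and condition (2) can be checked in $O(|C|^2)$ time (for each choice of outer face $f$, verify that no cycle of same-cluster vertices separates a vertex of another cluster; there are $O(|C|)$ faces and each check is $O(|C|)$). So the first paragraph of the proof disposes of (1) and (2) within the claimed $O(|C|^2)$ budget.

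Next I would invoke the reduction carried out just above the statement: starting from the con-edge multigraph $A$ of $C$, one repeatedly removes redundant con-edges $(u_\varrho,v_\varrho)$ lying on a common face until Property~\ref{pr:no-two-edges-same-structure} holds, obtaining an equivalent instance $A'$ — equivalent in the sense that $A'$ admits a planar set of spanning trees if and only if $A$ does. The text already argues this reduction is correct (the removed con-edges are never needed in a solution, respectively are interchangeable) and runs in $O(|C|^2)$ time; I would cite that discussion rather than reprove it. The key observation to state explicitly is that $|A'| \le |A| = O(|C|^2)$, since each con-edge is a pair of vertex occurrences on a common face and the total number of vertex occurrences over all faces is $O(|C|)$ (by Euler's formula, $\sum_f |B_f| = 2|E| = O(|C|)$), so there are $O(|C|^2)$ con-edges; actually after the reduction one can even bound $|A'|$ more tightly, but $O(|C|^2)$ suffices.

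Finally I would combine: given the hypothesized $f(|A'|)$-time algorithm for {\sc pssttm} on instances satisfying Property~\ref{pr:no-two-edges-same-structure}, apply it to $A'$ to decide condition (3) in time $f(|A'|) \le f(|A|)$ (using monotonicity of $f$, or simply padding, since $|A'| \le |A|$). Adding the costs of the three conditions gives $O(f(|A|) + |C|^2)$, which is the claimed bound. I would word the last sentence carefully so that the $f(\cdot)$ is evaluated at $|A|$ as stated, noting $|A|=O(|C|^2)$ only if one wants an explicit bound in terms of $|C|$ alone.

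I do not expect a genuine obstacle here — the lemma is essentially a bookkeeping assembly of Lemma~\ref{le:acyclic-saturator} and the preceding reduction. The only mildly delicate point is making the running-time accounting airtight: one must be sure that building $A$ from $C$, performing the Property~\ref{pr:no-two-edges-same-structure} reduction, and checking condition (2) of Lemma~\ref{le:acyclic-saturator} all fit within $O(|C|^2)$, and that feeding the reduced instance $A'$ (rather than $A$) to the {\sc pssttm} oracle is legitimate because the reduction preserves the answer. Both facts are already established in the text, so the proof is short.
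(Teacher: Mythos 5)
Your proposal is correct and follows essentially the same route as the paper: invoke Lemma~\ref{le:acyclic-saturator}, test conditions (1) and (2) directly (the paper cites~\cite{df-ectefcgsf-09} for an $O(|C|)$ bound, comfortably within your $O(|C|^2)$ budget), apply the $O(|C|^2)$-time reduction described before the lemma to obtain an equivalent instance satisfying Property~\ref{pr:no-two-edges-same-structure}, and hand it to the hypothesized {\sc pssttm} algorithm. Your extra care about $|A'|\leq|A|$ and the size of $A$ is harmless bookkeeping that the paper leaves implicit.
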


\begin{proof}
Consider any embedded flat clustered graph $C$ with at most two vertices per cluster on each face. Conditions (1) and (2) in Lemma~\ref{le:acyclic-saturator} can be tested in $O(|C|)$ time (see~\cite{df-ectefcgsf-09}); hence, testing the $c$-planarity of $C$ is equivalent to solve the {\sc pssttm} problem for $A$. Finally, as described before the lemma, there exists an $O(|C|^2)$-time algorithm that modifies multigraph $A$ so that it satisfies Property~\ref{pr:no-two-edges-same-structure}.
\end{proof}

Before proceeding with the description of the algorithm, we state a direct consequence of Property~\ref{pr:no-two-edges-same-structure} that will be useful in the upcoming proofs. Refer to Fig.~\ref{fig:inside-a-face}. Consider a set $F\subseteq A$ of con-edges all belonging to the same connected component of $K_A$ and such that all their end-vertices are incident to the outer face of the subgraph of $A$ induced by $F$. Let $\Delta_F$ be the set of clusters that have con-edges in $F$. Then, it is possible to draw a closed curve $\cal C$ that passes through the end-vertices of all the edges in $F$, that contains all the con-edges in $F$ in its interior, and all the other con-edges for clusters in $\Delta_F$ in its exterior.

\begin{figure}[htb]
\begin{center}
\mbox{\includegraphics[scale=0.5]{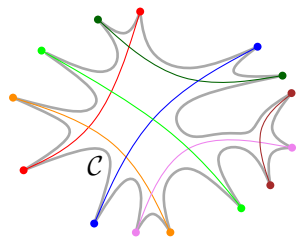}}
\caption{Curve $\cal C$ (gray) and edges in $F$ (multi-colored).}
\label{fig:inside-a-face}
\end{center}
\end{figure}

\section{Algorithm Outline} \label{se:outline}

In this section we give an outline of our algorithm for testing the existence of a planar set $S$ of spanning trees for $A$, where we assume that no two con-edges for the same cluster belong to the same connected component of $K_A$.

Our algorithm repeatedly tries to detect certain substructures in~$A$. When it does find one of such substructures, the algorithm either ``simplifies'' $A$ or concludes that $A$ does not admit any planar set of spanning trees. For example, if a cluster $\alpha$ exists such that $A[\alpha]$ is not connected, then the algorithm concludes that no planar set of spanning trees exists and terminates; as another example, if conflicting con-edges $e_{\alpha}$ and $e_{\beta}$ for clusters $\alpha$ and $\beta$ exist in $A$ such that $e_{\alpha}$ is a bridge for $A[\alpha]$, then the algorithm determines that $e_{\alpha}$ has to be in $S$ and that $e_{\beta}$ can be assumed not to be in $S$.

If the algorithm determines that certain edges have to be in $S$ or can be assumed not to be in $S$, these edges are contracted or removed, respectively. Given a set $A'\subseteq A$, the operation of {\em removing} $A'$ from $A$ consists of updating $A:=A\setminus A'$. Given a set $A'\subseteq A$, the operation of {\em contracting} the edges in $A'$ consists of identifying the end-vertices of each con-edge $e$ in $A'$ (all the con-edges different from $e$ and incident to the end-vertices of $e$ remain in $A$), and of updating $A:=A\setminus A'$.

The edges in $A'$ are removed from $A$ (contracted in $A$) only when this operation does not alter the possibility of finding a planar set of spanning trees for $A$. Also, contractions are only applied to con-edges that cross no other con-edges; hence after any contraction graph $K_A$ only changes because of the removal of the isolated vertices corresponding to the contracted edges.

As a consequence of a removal or of a contraction operation, the number of edges in $A$ decreases, that is, $A$ is ``simplified''. After any simplification due to the detection of a certain substructure in $A$, the algorithm will run again all previous tests for the detection of the other substructures. In fact, it is possible that a certain substructure arises from performing a simplification on $A$ (e.g., a bridge might be present in $A$ after a set of edges has been removed from $A$). Since detecting each substructure that leads to a simplification in $A$ can be performed in quadratic time, and since the initial size of $A$ is linear in the size of $C$, the algorithm has a cubic running time.

If none of the four tests (called {\sc Test 1--4}) and none of the eight simplifications (called {\sc Simplification 1--8}), that will be fully described in Section~\ref{se:algorithm}, applies to $A$, then $A$ is a {\em single-conflict} multigraph. That is, each con-edge in $A$ crosses at most one con-edge in $A$. A linear-time algorithm for deciding the existence of a planar set of spanning trees in a single-conflict multigraph $A$ is known~\cite{df-ectefcgsf-09}. Hence, our algorithm uses that algorithm~\cite{df-ectefcgsf-09} to conclude the test of the existence of a planar set of spanning trees in $A$. A pseudo-code description of our algorithm is presented in Algorithm~\ref{a:alg}.

\newcommand{\lIf}[2]{\State\algorithmicif\ {#1}\ \algorithmicthen\ {#2}}
\newcommand{\lElse}[1]{\State\algorithmicelse\ {#1}}
\renewcommand{\algorithmiccomment}[1]{\hfill{\color[rgb]{0,0.5,0}$\triangleright$ #1}}
\begin{algorithm}
\begin{algorithmic}[1]
\State $S=\emptyset$;
 \While{$\exists$ con-edge that crosses more than one con-edge in $A$}\label{step:restart}
     \If{$\exists$ cluster $\alpha$ such that $A[\alpha]$ is disconnected}
        \State \Return ``$\nexists$ planar set of spanning trees for $A$'' \Comment{{\sc Test 1} (L\ref{le:disconnected})}
     \EndIf
     \If{$\exists$ $e$ that is a bridge of $A[\alpha]$}
        \State Remove $L_1(e)$ from $A$, insert $e$ in $S$, contract $e$ in $A$, goto \eqref{step:restart} \Comment{{\sc Simpl.~1} (L\ref{le:bridge})}
     \EndIf
     \If{$\exists$ $k\geq 1,  e_1, \dots, e_{2k+1}, e_{2k+2}=e_1\in A$, s.t. $e_i\conf e_{i+1}$, for $1\leq i\leq 2k +1$}
        \State \Return ``$\nexists$ planar set of spanning trees for $A$'' \Comment{{\sc Test 2} (L\ref{le:conflicts-are-bipartite})}
     \EndIf
     \If{$\exists$ con-edge $e$ that is a self-loop}
        \State Remove $e$ from $A$, goto \eqref{step:restart} \Comment{{\sc Simpl. 2} (L\ref{le:self-loop})}
     \EndIf
     \If{$\exists$ con-edge $e\in A$ that does not cross any con-edge in $A$}
       \State Insert $e$ in $S$, contract $e$ in $A$, goto \eqref{step:restart} \Comment{{\sc Simpl. 3} (L\ref{le:non-conf})}
     \EndIf
     \If{$\exists$ $e_{\alpha} \in A[\alpha]$, $e_{\beta} \in A[\beta]$, $e_{\gamma} \in A[\gamma]$ with $e_{\alpha} \conf e_{\beta}, e_{\gamma}$, $\exists$ facial cycle ${\cal C}_{\alpha}$ of $A[\alpha]$, and \\\hspace{0.38cm} $\nexists$ $e'_{\alpha} \in A[\alpha]$,  $e'_{\beta} \in A[\beta]$, $e'_{\gamma} \in A[\gamma]$ with $e'_{\alpha} \conf e'_{\beta}, e'_{\gamma}$ s.t. $e_{\alpha},e'_{\alpha}\in {\cal C}_{\alpha}$}
        \State Remove $e_{\alpha}$ from $A$, goto \eqref{step:restart} \Comment{{\sc Simpl. 4} (L\ref{le:deviation})}
     \EndIf
     \If{$\exists$ $e_{\alpha},e'_{\alpha}\in A[\alpha]$ sharing a face of $A[\alpha]$ delimited by cycle ${\cal C}_{\alpha}$ and both crossed first \\\hspace{0.38cm} by a con-edge for $\beta$ and then by a con-edge for $\gamma$ when traversing ${\cal C}_{\alpha}$ clockwise}
        \State \Return ``$\nexists$ planar set of spanning trees for $A$'' \Comment{{\sc Test 3} (L\ref{le:same-order})}
     \EndIf
     \If{$\exists$ $e_{\alpha},e'_{\alpha}, e''_{\alpha}$ in facial cycle ${\cal C}_{\alpha}$ of $A[\alpha]$, s.t. $e_{\alpha}$, $e''_{\alpha}$, $e'_{\alpha}$ are encountered in this  order \\\hspace{0.38cm} when traversing ${\cal C}_{\alpha}$ clockwise, $\exists$ $e_{\beta}, e'_{\beta}, e''_{\beta} \in A[\beta]$, $\exists$ $e_{\gamma}, e'_{\gamma}\in A[\gamma]$, s.t. $e_{\alpha} \conf e_{\beta},e_{\gamma}$, \\\hspace{0.38cm} $e'_{\alpha} \conf e'_{\beta},e'_{\gamma}$, and $e''_{\alpha} \conf e''_{\beta}$, and such that $e_{\alpha}$ is crossed first by $e_{\beta}$ and then by $e_{\gamma}$ when \\\hspace{0.38cm} traversing ${\cal C}_{\alpha}$ clockwise}
     \State \Return ``$\nexists$ planar set of spanning trees for $A$'' \Comment{{\sc Test 4} (L\ref{le:ababa})}
     \EndIf
     \If{$\exists$ $\alpha$-donut with spokes $e^i_{\alpha}$ and $e^{i+1}_{\alpha}$ s.t. $M(e^i_{\alpha})$ is isomorphic to $M(e^{i+1}_{\alpha})$}
          \State Remove $\bigcup_j H_j(e^i_{\alpha})$ from $A$, insert  $\bigcup_j L_j(e^i_{\alpha})$ in $S$, contract $\bigcup_j L_j(e^i_{\alpha})$ in $A$, \\\hspace{0.9cm} goto \eqref{step:restart} \Comment{{\sc Simpl. 5} (L\ref{le:isomorphic})}
     \EndIf
     \If{$\exists$ $\alpha$-donut $D_{\alpha}$ with spokes $e^i_{\alpha}$ and $e^{i+1}_{\alpha}$, $\exists$ $e^i_{\beta}\in L_1(e^i_{\alpha})\cap A[\beta]$, $e^{i+1}_{\beta}\in L_1(e^{i+1}_{\alpha}) \cap$ \\\hspace{0.38cm} $A[\beta]$, $\exists$ $e^i_{\gamma}\in H_1(e^i_{\alpha})\cap A[\gamma]$ s.t. $e^i_{\gamma}\conf e^i_{\beta}$, and $e^i_{\gamma}$ is in $f^{i+1}_{\alpha}$}
         \If{$\nexists$ con-edge $e^{i+1}_{\gamma}\in H_1(e^{i+1}_{\alpha})$ for $\gamma$ s.t. $e^{i+1}_{\gamma}\conf e^{i+1}_{\beta}$}
            \State Remove $\bigcup_j L_j(e^i_{\alpha})$ from $A$, insert $\bigcup_j H_j(e^i_{\alpha})$ in $S$, contract $\bigcup_j H_j(e^i_{\alpha})$ in $A$, \\\hspace{1.35cm} goto \eqref{step:restart} \Comment{{\sc Simpl. 6} (L\ref{le:non-isomorphic-differentT1})}
         \EndIf
         \If{$\exists$ $e^{i+1}_{\gamma}$$\in$$H_1(e^{i+1}_{\alpha})\cap A[\gamma]$ s.t. $e^{i+1}_{\gamma}\conf e^{i+1}_{\beta}$ and $\exists$ spoke $e^{i+2}_{\alpha}$$\neq$$e^i_{\alpha},e^{i+1}_{\alpha}$ of $D_{\alpha}$}
            \State Remove $\bigcup_j H_j(e^{i+2}_{\alpha})$ from $A$, insert $\bigcup_j L_j(e^{i+2}_{\alpha})$ in $S$, contract $\bigcup_j L_j(e^{i+2}_{\alpha})$ \\\hspace{1.35cm} in $A$, goto \eqref{step:restart} \Comment{{\sc Simpl. 7} (L\ref{le:non-isomorphic-sameT1})}
         \EndIf
     \EndIf
     \If{$\exists$ $\alpha$-donut with exactly two spokes $e^1_{\alpha}$ and $e^2_{\alpha}$, $\exists$  $j\geq 1$ s.t. (1) $\exists$ $e_{\mu}\in L_j(e^a_{\alpha})\cap A[\mu]$ \\\hspace{0.38cm} and $e_{\nu}\in H_{j-1}(e^a_{\alpha})\cap A[\nu]$ s.t. $e_{\mu}\conf e_{\nu}$, and $\nexists$  $g_{\mu}\in L_j(e^b_{\alpha})\cap A[\mu]$ s.t. $g_{\mu}\conf g_{\nu}$ with \\\hspace{0.38cm} $g_{\nu}\in$ $H_{j-1}(e^b_{\alpha})\cap A[\nu]$, or (2) $\exists$ $e_{\mu}\in H_j(e^a_{\alpha})\cap A[\mu]$ and $e_{\nu}\in L_{j}(e^a_{\alpha})\cap A[\nu]$ s.t. \\\hspace{0.38cm} $e_{\mu}\conf e_{\nu}$, and $\nexists$ $g_{\mu}\in H_j(e^b_{\alpha})\cap A[\mu]$ s.t. $g_{\mu}\conf g_{\nu}$ with $g_{\nu}\in L_{j}(e^b_{\alpha})\cap A[\nu]$}
         \State Let $j$ be the minimal integer satisfying (1) or (2)\Comment{{\sc Simpl. 8} (L\ref{le:non-isomorphic-differentTj})}
         \lIf{$e_{\mu}\in L_j(e^a_{\alpha})$} remove $\bigcup_j H_j(e^a_{\alpha})$ from $A$, insert $\bigcup_j L_j(e^a_{\alpha})$ in $S$, \\\hspace{3.55cm} contract $\bigcup_j L_j(e^a_{\alpha})$ in $A$, goto \eqref{step:restart}
         \lIf{$e_{\mu}\in H_j(e^a_{\alpha})$} remove $\bigcup_j L_j(e^a_{\alpha})$ from $A$, insert $\bigcup_j H_j(e^a_{\alpha})$ in $S$, \\\hspace{3.62cm} contract $\bigcup_j H_j(e^a_{\alpha})$ in $A$, goto \eqref{step:restart}
     \EndIf
\EndWhile
\State \Return the output of the algorithm in~\cite{df-ectefcgsf-09} on $A$ \Comment{(L\ref{le:no-simplification-single-conflict})}
\end{algorithmic}
\caption{Testing for the existence of a planar set $S$ of spanning trees for $A$\label{a:alg}. The comments specify each test and simplification, and the lemma proving its correctness.}
\end{algorithm}


\section{Algorithm} \label{se:algorithm}

To ease the reading and avoid text duplication, when introducing a new lemma we always assume, without making it explicit, that all the previously defined simplifications do not apply, and that all the previously defined tests fail. Also, we do not make explicit the removal and contraction operations that we perform, as they straight-forwardly follow from the statement of each lemma. Refer also to the description in Algorithm~\ref{a:alg}.

We start with the following test.

\begin{lemma}[{\sc Test 1}]\label{le:disconnected}
Let $\alpha$ be a cluster such that $A[\alpha]$ is disconnected. Then, there exists no planar set $S$ of spanning trees for $A$.
\end{lemma}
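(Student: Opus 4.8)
The plan is to argue directly from the definition of a planar set of spanning trees for $A$, so no real machinery is needed. Recall that, by definition, such a set $S$ must satisfy condition (i): for every cluster $\alpha$, the subset $S[\alpha]$ of $S$ consisting of the con-edges for $\alpha$ is a tree that spans all the vertices of $A$ belonging to $\alpha$. By the construction of $A$, those vertices are exactly the images of the connected components of $G[\alpha]$ under the contraction operation, and they are precisely the vertices incident to con-edges for $\alpha$; hence they are exactly the vertex set of $A[\alpha]$. This identification of the relevant vertex set is the one point I would make explicit.

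Given that, I would proceed as follows. Since $S\subseteq A$, we have $S[\alpha]\subseteq A[\alpha]$, so $S[\alpha]$ is a spanning subgraph of $A[\alpha]$. Assume $A[\alpha]$ is disconnected, and fix a bipartition $(V_1,V_2)$ of its vertex set, with $V_1,V_2\neq\emptyset$, such that no con-edge for $\alpha$ joins a vertex of $V_1$ to a vertex of $V_2$. Then no edge of $S[\alpha]$ joins $V_1$ to $V_2$ either, so $S[\alpha]$ is disconnected; in particular $S[\alpha]$ is not a tree spanning the vertices belonging to $\alpha$, contradicting condition (i). Hence no planar set of spanning trees for $A$ exists.

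As for the main obstacle: there essentially is none here, the claim being an immediate consequence of the definitions. The only subtlety — and the reason the lemma is stated at all — is the observation above that the vertex set on which $S[\alpha]$ is required to be spanning coincides with the vertex set of $A[\alpha]$, which is guaranteed by the way $A$ (and thus each $A[\alpha]$) is obtained from $G$ and $T$; once this is noted, the conclusion follows.
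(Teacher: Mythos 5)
Your argument is correct and is essentially the paper's own proof, which simply observes that no $S\subseteq A$ can have $S[\alpha]$ spanning the vertices of $\alpha$ when $A[\alpha]$ is disconnected; you merely spell out the inclusion $S[\alpha]\subseteq A[\alpha]$ and the identification of the relevant vertex set. No issues.
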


\begin{proof}
No set $S\subseteq A$ is such that $S[\alpha]$ induces a graph that spans the vertices belonging to $\alpha$. This proves the lemma.
\end{proof}


If a con-edge $e$ is a bridge for some graph $A[\alpha]$, then not choosing $e$ to be in $S$ would disconnect $A[\alpha]$, which implies that no planar set of spanning trees for $A$ exists.

\begin{lemma}[{\sc Simplification 1}]\label{le:bridge}
Let $e$ be a bridge of $A[\alpha]$. Then, for every planar set $S$ of spanning trees for $A$, we have $e\in S$.
\end{lemma}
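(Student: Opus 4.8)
The plan is to argue by contradiction and exploit the defining properties of a planar set of spanning trees. Suppose $S$ is a planar set of spanning trees for $A$ but $e \notin S$. By definition, $S[\alpha]$ must be a tree spanning all the vertices belonging to $\alpha$, and in particular $S[\alpha]$ is a connected spanning subgraph of $A[\alpha]$. Since $e \notin S$, we have $S[\alpha] \subseteq A[\alpha] \setminus \{e\}$. But $e$ is a bridge of $A[\alpha]$, so $A[\alpha] \setminus \{e\}$ is disconnected, with its vertex set partitioned into two nonempty parts $V_1, V_2$ such that no edge of $A[\alpha] \setminus \{e\}$ joins $V_1$ to $V_2$. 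Hence no subgraph of $A[\alpha] \setminus \{e\}$ can be connected while spanning all of $V_1 \cup V_2$, contradicting the requirement that $S[\alpha]$ be a spanning tree of the vertices in $\alpha$.

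First I would recall precisely the two parts in the definition of a planar set of spanning trees from Section~\ref{se:preliminaries}: part (i) says each $S[\alpha]$ spans the vertices belonging to $\alpha$ and is a tree; part (ii) is the conflict-freeness condition, which is irrelevant here. Then I would note that a planar set of spanning trees exists at all only if each $A[\alpha]$ is connected — but this is already guaranteed since {\sc Test 1} (Lemma~\ref{le:disconnected}) is assumed to have failed, so $A[\alpha]$ is connected and the notion of a bridge is meaningful. The core of the proof is then the elementary graph-theoretic fact that any spanning connected subgraph of a connected graph must contain every bridge; removing a bridge separates the graph.

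The only real content is spelling out that $S[\alpha]$, being required to span the vertices of $\alpha$ and be connected, cannot omit $e$. There is essentially no obstacle here — this is a one-line argument — and the main thing to be careful about is the bookkeeping about what ``spans the vertices belonging to $\alpha$'' means in the contracted multigraph $A$: the vertices of $A[\alpha]$ correspond to the connected components of $G[\alpha]$, and a spanning tree of $A[\alpha]$ corresponds exactly to a saturator that makes $G[\alpha]$ connected, so the statement that $S[\alpha]$ must be connected and touch every vertex of $A[\alpha]$ is immediate. I expect the author's proof to be a two- or three-sentence observation along exactly these lines, and the accompanying justification for the removal of $L_1(e)$ and the contraction of $e$ in Algorithm~\ref{a:alg} follows because, $e$ being forced into $S$, every con-edge in conflict with $e$ (collected in $L_1(e)$) cannot be in $S$ by part (ii), and a forced edge crossing nothing else can be contracted without affecting solvability, as discussed in Section~\ref{se:outline}.
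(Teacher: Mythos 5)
Your proof is correct and follows essentially the same argument as the paper: the paper also argues by contradiction, noting that $e\notin S$ forces $S[\alpha]$ to live inside the disconnected graph $A[\alpha]\setminus\{e\}$ (the paper phrases this step as an appeal to Lemma~\ref{le:disconnected} applied to $A\setminus\{e\}$, whereas you unfold the definition directly, but the content is identical).
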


\begin{proof}
Suppose, for a contradiction, that a planar set $S$ of spanning trees for $A$ exists such that $e\notin S$. Then $A[\alpha]\setminus\{e\}$ is disconnected. By Lemma~\ref{le:disconnected}, no planar set of spanning trees for $A\setminus\{e\}$ exists, a contradiction.
\end{proof}


The following lemma is used massively in the remainder of the paper.

\begin{lemma} \label{le:one-or-the-other}
Let $e_{\alpha},e_{\beta}\in A$ be con-edges such that $e_{\alpha}\conf e_{\beta}$. Let $S$ be a planar set of spanning trees for $A$ and suppose that $e_{\alpha} \notin S$. Then, $e_{\beta} \in S$.
\end{lemma}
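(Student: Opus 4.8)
The plan is to argue by contradiction: assume $S$ is a planar set of spanning trees for $A$ with $e_{\alpha}\notin S$ and $e_{\beta}\notin S$, and exhibit two edges of $S$ that cross.

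First I would record the consequences of Property~\ref{pr:no-two-edges-same-structure}. Since $e_{\alpha}\conf e_{\beta}$, the two con-edges lie on a common face of $G$ and belong to the same connected component $Q$ of $K_A$; hence $\alpha\neq\beta$, and $e_{\alpha}$ (resp.\ $e_{\beta}$) is the unique con-edge for $\alpha$ (resp.\ for $\beta$) in $Q$. In particular, no con-edge for $\alpha$ other than $e_{\alpha}$ crosses $e_{\alpha}$ or $e_{\beta}$ (such a con-edge would be forced into $Q$), and symmetrically no con-edge for $\beta$ other than $e_{\beta}$ crosses $e_{\alpha}$ or $e_{\beta}$; moreover the vertex sets of $A[\alpha]$ and $A[\beta]$ are disjoint. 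Next, let $u_{\alpha},v_{\alpha}$ be the end-vertices of $e_{\alpha}$ and $u_{\beta},v_{\beta}$ those of $e_{\beta}$. Since $S[\alpha]$ is a tree spanning the vertices of $\alpha$, it contains a (unique) path $P_{\alpha}$ between $u_{\alpha}$ and $v_{\alpha}$, which avoids $e_{\alpha}$ because $e_{\alpha}\notin S$; define $P_{\beta}\subseteq S[\beta]$ analogously between $u_{\beta}$ and $v_{\beta}$. Every edge of $P_{\alpha}$ is a con-edge for $\alpha$ distinct from $e_{\alpha}$, hence crosses neither $e_{\alpha}$ nor $e_{\beta}$; the same holds for $P_{\beta}$; and $P_{\alpha}$ and $P_{\beta}$ are vertex-disjoint.

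The final, topological step is where the real work lies. Because $P_{\alpha}\subseteq S$ is drawn without self-crossings and does not cross $e_{\alpha}$, the union $P_{\alpha}\cup e_{\alpha}$ is a simple closed curve $\mathcal D_{\alpha}$, which divides the plane into two regions. The con-edge $e_{\beta}$ crosses $e_{\alpha}$ an odd number of times (their end-vertices alternate along the face) and crosses $P_{\alpha}$ zero times, so $e_{\beta}$ crosses $\mathcal D_{\alpha}$ an odd number of times; hence $u_{\beta}$ and $v_{\beta}$ lie in different regions of $\mathcal D_{\alpha}$. Consequently the arc $P_{\beta}$, joining $u_{\beta}$ to $v_{\beta}$, must cross $\mathcal D_{\alpha}$; since no edge of $P_{\beta}$ crosses $e_{\alpha}$, some edge of $P_{\beta}$ crosses some edge of $P_{\alpha}$. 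As $P_{\alpha}$ and $P_{\beta}$ are vertex-disjoint, this is a genuine edge crossing between two edges of $S$, contradicting the planarity of $S$; therefore $e_{\beta}\in S$. (An equivalent route is to invoke the closed-curve statement following Lemma~\ref{le:pssttm} with $F=Q$: then $P_{\alpha}$ and $P_{\beta}$ both live in the region exterior to $\mathcal C$ and connect the two pairs of an alternating $4$-tuple on its boundary, so they must cross.) I expect the main obstacle to be making this separation/crossing-parity argument fully rigorous — ensuring the curves meet only where claimed and that the crossing produced is between distinct edges — rather than the combinatorial bookkeeping, which follows cleanly from Property~\ref{pr:no-two-edges-same-structure}.
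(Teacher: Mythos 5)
Your proposal is correct and follows essentially the same route as the paper's proof: assume both con-edges are missing from $S$, take the paths $P_{\alpha}$ and $P_{\beta}$ in $S$ joining their end-vertices, form the cycle $P_{\alpha}\cup e_{\alpha}$, and derive a contradiction from the Jordan curve theorem using Property~\ref{pr:no-two-edges-same-structure} to rule out a crossing with $e_{\alpha}$ and planarity of $S$ to rule out a crossing with $P_{\alpha}$. The extra bookkeeping you supply (uniqueness of the con-edges for $\alpha$ and $\beta$ in their component of $K_A$, vertex-disjointness of the paths, the crossing-parity phrasing) only elaborates steps the paper leaves implicit.
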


\begin{proof}
Assume, for a contradiction, that $S$ contains neither $e_{\alpha}$ nor $e_{\beta}$. Then, there exists a path $P_{\alpha}\not\ni e_{\alpha}$ ($P_{\beta}\not\ni e_{\beta}$) all of whose edges belong to $S$ connecting the end-vertices of $e_{\alpha}$ (resp.\ of $e_{\beta}$). Consider the cycle ${\cal C}_{\alpha}$ composed of $P_{\alpha}$ and $e_{\alpha}$. We have that $P_{\beta}$ cannot cross  ${\cal C}_{\alpha}$. In fact, $P_{\beta}$ cannot cross $P_{\alpha}$, as both such paths are composed of con-edges in $S$, and it cannot cross $e_{\alpha}$ by Property~\ref{pr:no-two-edges-same-structure}, given that $e_{\alpha}\conf e_{\beta}$ and $e_{\beta}\notin P_{\beta}$. However, the end-vertices of $e_{\beta}$ are on different sides of ${\cal C}_{\alpha}$, hence by the Jordan curve theorem $P_{\beta}$ does cross ${\cal C}_{\alpha}$, a contradiction.
\end{proof}


The algorithm continues with the following test.


\begin{lemma}[{\sc Test 2}]\label{le:conflicts-are-bipartite}
If the conflict graph $K_A$ is not bipartite, then there exists no planar set $S$ of spanning trees for~$A$.
\end{lemma}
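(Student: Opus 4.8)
The plan is to argue by contradiction: suppose $S$ is a planar set of spanning trees for $A$, and suppose $K_A$ contains an odd cycle, i.e. there exist con-edges $e_1, e_2, \dots, e_{2k+1}, e_{2k+2} = e_1$ with $e_i \conf e_{i+1}$ for every $1 \le i \le 2k+1$. First I would observe that, since any two consecutive edges in this cyclic sequence conflict, they cannot both lie in $S$ (two edges of $S$ never cross, by definition of a planar set of spanning trees). Hence the membership pattern ``$e_i \in S$'' versus ``$e_i \notin S$'' around the cycle has no two consecutive $\textsc{in}$'s.

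The key step is then to invoke Lemma~\ref{le:one-or-the-other}: if $e_i \notin S$ and $e_i \conf e_{i+1}$, then $e_{i+1} \in S$. Combined with the previous observation, this means that for each consecutive pair $(e_i, e_{i+1})$, exactly one of the two is in $S$ — they cannot both be in, and they cannot both be out. Therefore the sequence of truth values $b_i := [e_i \in S]$ must strictly alternate: $b_1, b_2, \dots$ is $\textsc{true}, \textsc{false}, \textsc{true}, \textsc{false}, \dots$ or the reverse. But the sequence closes up after $2k+1$ steps, with $e_{2k+2} = e_1$; an alternating binary sequence of odd length cannot close up consistently (we would need $b_1 = b_{2k+2} = \overline{b_1}$, since $2k+1$ is odd). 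This contradiction shows that no such $S$ exists.

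One technical point worth handling carefully: the cyclic sequence witnessing non-bipartiteness may repeat vertices of $K_A$ (i.e. repeat con-edges), but this does not affect the argument — the parity contradiction only uses that consecutive entries conflict and that the sequence has odd length, so I would simply take any closed odd walk in $K_A$, which exists precisely when $K_A$ is not bipartite. I expect the main (though modest) obstacle to be stating the alternation argument cleanly so that it transparently uses both directions — ``not both in'' from planarity of $S$, and ``not both out'' from Lemma~\ref{le:one-or-the-other} — and then extracting the odd-length parity contradiction without getting bogged down in indexing; there is no real combinatorial difficulty beyond this.
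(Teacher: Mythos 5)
Your proposal is correct and follows essentially the same route as the paper's proof: take an odd cycle in $K_A$, use the fact that $S$ contains no two conflicting edges together with Lemma~\ref{le:one-or-the-other} to force strict alternation of membership in $S$ around the cycle, and derive the parity contradiction. The remark about closed odd walks is a harmless refinement; the paper simply works with an odd cycle.
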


\begin{proof}
Assume, for a contradiction, that $K_A$ is not bipartite and that $S$ exists. Let $e_1, \dots, e_{2k+1},e_{2k+2}=e_1$ be a cycle in $K_A$ with an odd number of vertices (recall that vertices in $K_A$ are con-edges in $A$). Suppose that $e_1\in S$. Then, by repeated applications of Lemma~\ref{le:one-or-the-other} and of the fact that $S$ does not contain two conflicting edges, we get $e_2\notin S$, $e_3\in S$, $e_4\notin S$, $e_5\in S$, $\dots$, $e_{2k+1}\in S$, $e_{2k+2}=e_1\notin S$, a contradiction. The case in which $e_1\notin S$ can be discussed analogously.
\end{proof}



The contraction of con-edges in $A$ that have been chosen to be in $S$ might lead to self-loops in $A$, a situation that is dealt with in the following.
\begin{lemma}[{\sc Simplification 2}]\label{le:self-loop}
Let $e\in A$ be a self-loop. Then, for every planar set $S$ of spanning trees for $A$, we have $e\notin S$.
\end{lemma}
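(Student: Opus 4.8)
The plan is to argue directly from the definition of a planar set of spanning trees, with no need for the conflict machinery developed so far. Recall that a self-loop $e$ in $A$ arises from a con-edge for some cluster $\alpha$ whose two end-vertices have been identified by earlier contractions; in particular $e$ is still a con-edge for $\alpha$, so if $e\in S$ then $e\in S[\alpha]$. The first (and essentially only) step is to invoke condition (i) of the definition of a planar set of spanning trees: $S[\alpha]$ must be a tree spanning the vertices belonging to $\alpha$. A tree is by definition acyclic, whereas a single self-loop already constitutes a cycle of length one. Hence $S[\alpha]$ cannot be a tree if it contains $e$, contradicting the hypothesis that $S$ is a planar set of spanning trees for $A$. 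Therefore $e\notin S$ for every such $S$, which is exactly the claim.

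The one point deserving a line of care — and the closest thing to an obstacle here — is to make explicit that $e$, being a con-edge, lies in $A[\alpha]$ for the cluster $\alpha$ whose two now-identified vertices it joined, so that the contradiction is with condition (i) for that specific $\alpha$ rather than a vague appeal to acyclicity. Beyond that, no further work is required: unlike the other simplifications, removing $e$ is trivially safe since (as noted when contractions were introduced) self-loops created by contraction cross no other con-edge, so deleting $e$ changes $A[\beta]$ for no $\beta$ and changes $K_A$ only by deleting the isolated vertex corresponding to $e$. I would therefore expect the proof to be a two- or three-line direct argument, with the routine verification being merely that the reduction preserves the existence of a planar set of spanning trees, which follows immediately from the lemma statement.
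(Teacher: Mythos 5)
Your argument is correct and coincides with the paper's own proof, which is the one-line observation that a tree contains no self-loop; you simply spell out that the violated condition is (i) for the specific cluster $\alpha$ whose con-edge $e$ is. Nothing further is needed.
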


\begin{proof}
Since a tree does not contain any self-loop, the lemma follows.
\end{proof}


Next, we show a simplification that can be performed if a con-edge exists in $A$ that does not have a conflict with any other con-edge in $A$.
\begin{lemma}[{\sc Simplification 3}]\label{le:non-conf}
Let $e$ be any con-edge in~$A$ that does not have a conflict with any other con-edge in $A$. Then, there exists a planar set $S$ of spanning trees for $A$ if and only if there exists a planar set $S'$ of spanning trees for $A$ such that $e\in S'$.
\end{lemma}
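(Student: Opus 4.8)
The statement is an ``if and only if'' whose non-trivial direction is: if \emph{some} planar set $S$ of spanning trees for $A$ exists, then there is one that contains $e$. The plan is to start from an arbitrary planar set $S$ of spanning trees for $A$ and, if $e\notin S$, to modify $S$ into a planar set $S'$ of spanning trees with $e\in S'$ by a local exchange argument. Let $\alpha$ be the cluster with $e\in A[\alpha]$, and let $x,y$ be the end-vertices of $e$. Since $S[\alpha]$ is a spanning tree of the vertices in $\alpha$, there is a unique path $P$ in $S[\alpha]$ joining $x$ and $y$; since $e\notin S$, $P$ does not use $e$, and $P$ together with $e$ forms a cycle. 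Pick any edge $e'$ on $P$ and set $S' := (S\setminus\{e'\})\cup\{e\}$.

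First I would check that $S'[\alpha]$ is again a spanning tree of the vertices in $\alpha$: removing $e'$ from the tree $S[\alpha]$ splits it into two components, one containing $x$ and one containing $y$ (because $e'$ lies on the $x$--$y$ path $P$), and adding $e=(x,y)$ reconnects them without creating a cycle; for every other cluster, $S'[\beta]=S[\beta]$ is unchanged, hence still a spanning tree. So condition (i) in the definition of a planar set of spanning trees holds for $S'$. Next I would verify condition (ii), that $S'$ contains no two conflicting con-edges. The only new edge in $S'$ is $e$, and by hypothesis $e$ has a conflict with no con-edge of $A$ whatsoever, in particular with none in $S'$; and every pair of edges of $S'$ not involving $e$ is a pair of edges of $S$, which by assumption is conflict-free. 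Hence $S'$ is a planar set of spanning trees for $A$ containing $e$, proving the forward direction. The reverse direction is immediate: a planar set $S'$ of spanning trees for $A$ with $e\in S'$ is in particular a planar set of spanning trees for $A$.

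I do not expect any real obstacle here; the only point requiring a line of care is the claim that the edge $e'$ removed from $P$ lies on the $x$--$y$ path in the tree $S[\alpha]$, so that deleting it separates $x$ from $y$ — this is exactly why we choose $e'$ on $P$ rather than an arbitrary tree edge. Everything else is the standard fundamental-cycle swap for spanning trees, combined with the trivial observation that an edge with no conflicts can never violate condition (ii). One could also phrase the forward direction slightly differently — contract $e$ in $A$ (legitimate since $e$ crosses nothing, so $K_A$ only loses an isolated vertex), solve {\sc pssttm} on the smaller instance, and lift the solution back by adding $e$ — but the explicit exchange argument above is the most self-contained and is the version I would write up.
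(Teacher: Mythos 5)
Your proposal is correct and takes essentially the same route as the paper: the standard fundamental-cycle exchange (add $e$, remove an edge $e'\neq e$ of the resulting cycle in $S[\alpha]\cup\{e\}$), plus the observation that a conflict-free edge can never violate planarity of $S$. The only detail the paper makes explicit that you leave implicit is that $e$ is not a self-loop (so the cycle really does contain an edge $e'\neq e$ to swap out); this is guaranteed by the standing assumption that {\sc Simplification 2} has already been applied, and your phrasing ``end-vertices $x,y$'' with a nontrivial $x$--$y$ path silently relies on the same fact.
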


\begin{proof}
Let $S$ be any planar set of spanning trees for $A$. If $e\in S$, then there is nothing to prove. Suppose that $e\notin S$. Since $e$ does not cross any con-edge in $A$, we have that $S\cup \{e\}$ does not contain any two conflicting edges. Denote by $\alpha$ the cluster $e$ is a con-edge for. Since $S[\alpha]$ is a spanning tree, $S[\alpha]\cup \{e\}$ contains a cycle $\cal C$. Since we can assume that {\sc Simplification 2} does not apply to $A$ (it would have been performed before applying this lemma), we have that $\cal C$ contains at least one edge $e'$ different from $e$. Then, $S\cup\{e\}\setminus\{e'\}$ is a planar set of spanning trees for $A$. 
\end{proof}

In the next three lemmata we deal with the following setting. Assume that there exist con-edges $e_{\alpha}, e_{\beta}, e_{\gamma}\in A$ for distinct clusters $\alpha$, $\beta$, and $\gamma$, respectively, such that $e_{\alpha}\conf e_{\beta}$ and $e_{\alpha} \conf e_{\gamma}$. Since {\sc Test 2} fails on $A$, $e_{\beta}$ does not cross $e_{\gamma}$. Let ${\cal C}_{\alpha}$ be any of the two facial cycles of $A[\alpha]$ incident to $e_{\alpha}$, where a facial cycle of  $A[\alpha]$ is a simple cycle all of whose edges appear on the boundary of a single face of $A[\alpha]$.  Assume w.l.o.g.\ that $e_{\alpha}$ is crossed first by $e_{\beta}$ and then by $e_{\gamma}$ when ${\cal C}_{\alpha}$ is traversed clockwise.  See Fig.~\ref{fig:many-conflicts}(a).

\begin{figure}[tb]
\begin{center}
\begin{tabular}{c c c}
\mbox{\includegraphics[scale=0.39]{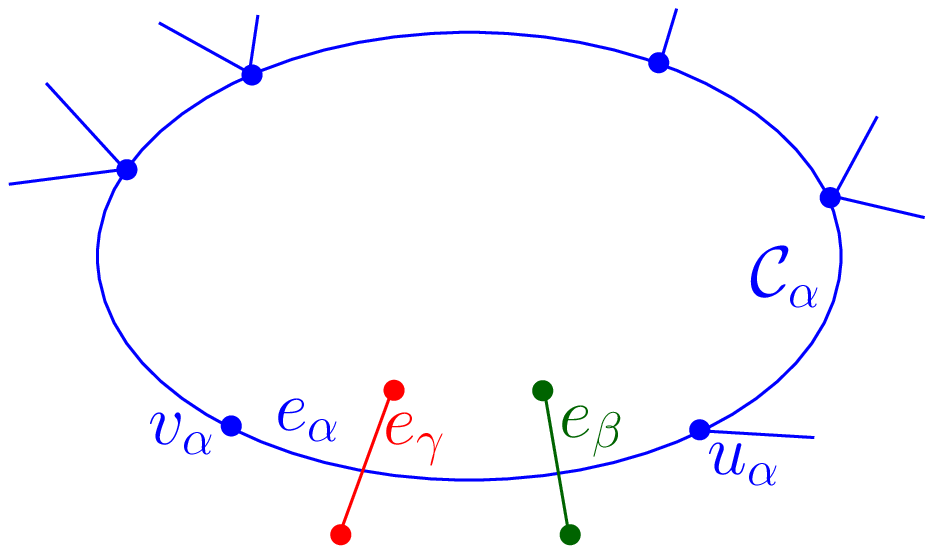}} \hspace{1mm} &
\mbox{\includegraphics[scale=0.39]{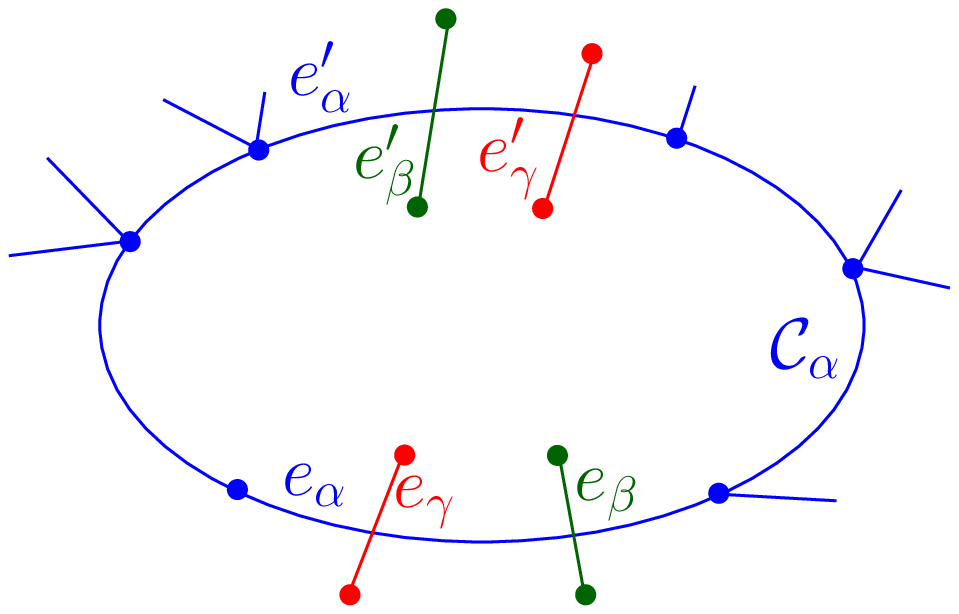}} \hspace{1mm} &
\mbox{\includegraphics[scale=0.39]{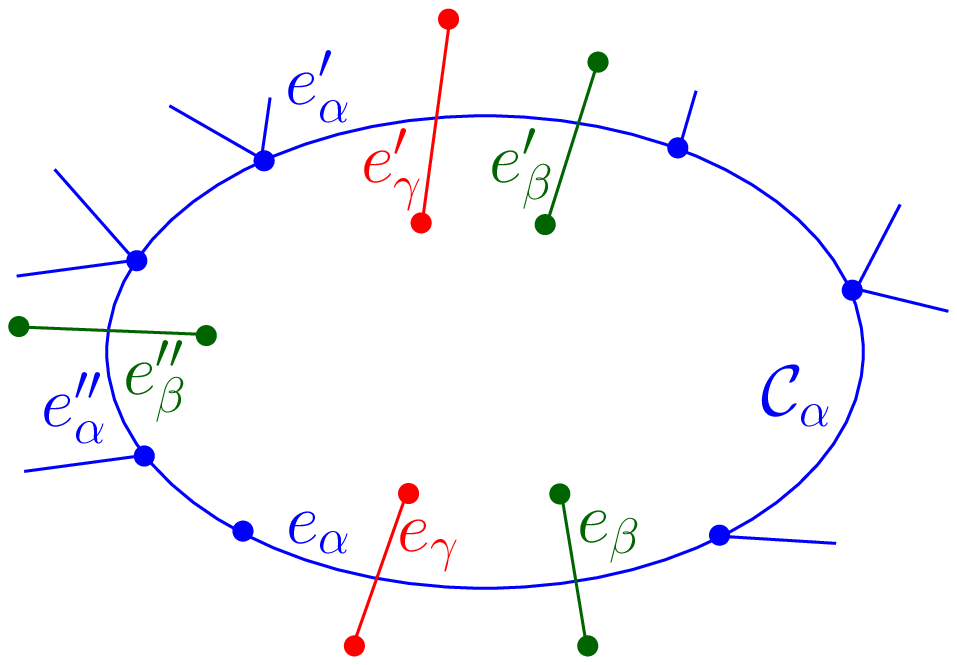}}\\
(a) \hspace{1mm} & (b) \hspace{1mm} & (c)
\end{tabular}
\caption{The setting for (a) Lemma~\ref{le:deviation}, (b) Lemma~\ref{le:same-order}, and (c) Lemma~\ref{le:ababa}.}
\label{fig:many-conflicts}
\end{center}
\end{figure}



The next lemma presents a condition in which we can delete an edge $e_a$ from $A$.

\begin{lemma}[{\sc Simplification 4}]\label{le:deviation}
Suppose that there exists no con-edge of ${\cal C}_{\alpha}$ different from $e_{\alpha}$ that has a conflict with both a con-edge for $\beta$ and a con-edge for $\gamma$. Then, for every planar set $S$ of spanning trees for $A$, we have $e_{\alpha}\notin S$.
\end{lemma}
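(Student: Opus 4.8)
The goal is to show that if there exists a planar set $S$ of spanning trees for $A$, then $e_{\alpha}\notin S$; equivalently, assuming $e_{\alpha}\in S$ leads to a contradiction. The natural strategy is to argue that $e_{\alpha}\in S$ forces a "cascade" of membership constraints around the facial cycle ${\cal C}_{\alpha}$ via Lemma~\ref{le:one-or-the-other}, and that this cascade, combined with the hypothesis that no con-edge of ${\cal C}_{\alpha}$ other than $e_{\alpha}$ conflicts with both a con-edge for $\beta$ and a con-edge for $\gamma$, is incompatible with the planarity / spanning-tree structure on the $\beta$- and $\gamma$-side.

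First I would set up the picture as in Fig.~\ref{fig:many-conflicts}(a): since {\sc Test 2} fails, $K_A$ is bipartite, so $e_{\beta}$ and $e_{\gamma}$ do not conflict; since $e_{\alpha}\conf e_{\beta}$ and $e_{\alpha}\conf e_{\gamma}$, the con-edges $e_{\beta}$ and $e_{\gamma}$ have their endpoints on opposite sides of the cycle ${\cal C}_{\alpha}$, and by Property~\ref{pr:no-two-edges-same-structure} no other con-edge for $\alpha$ crosses $e_{\beta}$ or $e_{\gamma}$. Now suppose $e_{\alpha}\in S$. Because $S$ contains no two conflicting edges, $e_{\beta}\notin S$ and $e_{\gamma}\notin S$. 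Then there are paths $P_{\beta}, P_{\gamma}$ in $S[\beta], S[\gamma]$ joining the endpoints of $e_{\beta}$, resp.\ $e_{\gamma}$, avoiding those edges. As in the proof of Lemma~\ref{le:one-or-the-other}, $P_{\beta}$ together with $e_{\beta}$ bounds a region, and likewise $P_{\gamma}$ with $e_{\gamma}$; the key point is that these $S$-paths cannot cross ${\cal C}_{\alpha}$ anywhere except that they must "escape" across it because the endpoints of $e_{\beta}$ (of $e_{\gamma}$) lie on opposite sides. Hence $P_{\beta}$ must use some con-edge that crosses a con-edge of ${\cal C}_{\alpha}$, and the same for $P_{\gamma}$: concretely, $P_{\beta}$ contains a con-edge for $\beta$ conflicting with some con-edge $e'_{\alpha}$ of ${\cal C}_{\alpha}$, and $P_{\gamma}$ contains a con-edge for $\gamma$ conflicting with some con-edge $e''_{\alpha}$ of ${\cal C}_{\alpha}$.

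The heart of the argument is then to show $e'_{\alpha}=e''_{\alpha}$, which, since that common edge conflicts with a con-edge for $\beta$ (namely the one on $P_{\beta}$) and with a con-edge for $\gamma$ (the one on $P_{\gamma}$), contradicts the hypothesis of the lemma unless $e'_{\alpha}=e_{\alpha}$; but $e_{\alpha}\in S$ while $P_{\beta}\not\ni e_{\beta}$ and $P_{\gamma}\not\ni e_{\gamma}$, and a con-edge of $S$ cannot conflict with a con-edge of $S$, so the con-edges of $P_{\beta}$ and $P_{\gamma}$ that conflict with ${\cal C}_{\alpha}$ cannot conflict with $e_{\alpha}\in S$ — ruling out $e'_{\alpha}=e_{\alpha}$ as well, and closing the contradiction. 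To see $e'_{\alpha}=e''_{\alpha}$ I would argue via the curve ${\cal C}$ from the consequence of Property~\ref{pr:no-two-edges-same-structure} stated just before Section~\ref{se:outline} (Fig.~\ref{fig:inside-a-face}): the con-edges of ${\cal C}_{\alpha}$ that are crossed, on the same face of $A[\alpha]$, separate that face into regions, and the endpoints of $e_{\beta}$ and $e_{\gamma}$ both lying on ${\cal C}_{\alpha}$ with $e_{\beta},e_{\gamma}$ both conflicting with the single edge $e_{\alpha}$ pins down which "gap" of ${\cal C}_{\alpha}$ the escaping $S$-paths must pass through; since both $P_{\beta}$ and $P_{\gamma}$ must reach the side of ${\cal C}_{\alpha}$ where the second endpoint of $e_{\beta}$, resp.\ $e_{\gamma}$, lies, and these endpoints are interleaved with $e_{\alpha}$ in the same way, the crossed edge is forced to be the same one.

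The main obstacle I expect is making the "escape forces a specific edge of ${\cal C}_{\alpha}$ to be crossed, and the same one for both $P_{\beta}$ and $P_{\gamma}$" step fully rigorous: one must handle the case where $P_{\beta}$ (or $P_{\gamma}$) crosses ${\cal C}_{\alpha}$ multiple times, and exclude that $P_{\beta}$ escapes via a con-edge for $\beta$ that conflicts with $e_{\alpha}$ itself (impossible since $e_{\alpha}\in S$) rather than with another edge of ${\cal C}_{\alpha}$. I would take the innermost or outermost crossing of $P_{\beta}$ with ${\cal C}_{\alpha}$, use that con-edges of $P_{\beta}$ are in $S$ and hence pairwise non-conflicting and non-conflicting with $S$-edges, and combine this with the bipartiteness of $K_A$ (so a con-edge for $\beta$ can conflict with a con-edge for $\alpha$ but the two "sides" are consistent) to isolate the unique edge $e'_{\alpha}$ of ${\cal C}_{\alpha}$ playing the role of the separating crossing. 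Once $e'_{\alpha}=e''_{\alpha}\neq e_{\alpha}$ is established, the contradiction with the lemma's hypothesis is immediate, and we conclude $e_{\alpha}\notin S$.
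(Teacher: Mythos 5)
Your setup is the same as the paper's: assume $e_{\alpha}\in S$, deduce $e_{\beta},e_{\gamma}\notin S$, obtain $S$-paths $P_{\beta},P_{\gamma}$ between the endpoints of $e_{\beta}$ and $e_{\gamma}$, and observe that each must cross ${\cal C}_{\alpha}$ through some con-edge other than $e_{\alpha}$ (it cannot cross $e_{\alpha}$, which is in $S$). But your central step --- that the escape edges $e'_{\alpha}$ and $e''_{\alpha}$ must coincide, which would then contradict the hypothesis directly --- is a genuine gap. There is no local or ``interleaving'' reason forcing $P_{\beta}$ and $P_{\gamma}$ to cross the \emph{same} edge of ${\cal C}_{\alpha}$: the endpoints of $e_{\beta}$ and $e_{\gamma}$ sit just inside and just outside the face bounded by ${\cal C}_{\alpha}$ near $e_{\alpha}$, and $P_{\beta}$ can perfectly well escape through one edge of ${\cal C}_{\alpha}$ while $P_{\gamma}$ escapes through a far-away one, without the two paths crossing each other and without any violation of planarity among the edges of $S$. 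Nothing about where the endpoints lie ``pins down a gap.'' In fact, under the lemma's hypothesis the two escape edges are necessarily \emph{distinct} (an edge crossed by $P_{\beta}$ conflicts with a con-edge for $\beta$, hence by hypothesis with no con-edge for $\gamma$, hence cannot be crossed by $P_{\gamma}$), so the equality you are trying to establish is exactly what cannot happen; you would be proving a false intermediate claim.

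The contradiction the paper actually derives is global, and it is the piece missing from your plan. Using the hypothesis, $e'_{\alpha}$ (the edge of ${\cal C}_{\alpha}$ crossed by $P_{\beta}$) does not cross $P_{\gamma}$. One then forms the closed curve ${\cal D}$ from $P_{\beta}$, $P_{\gamma}$, and the two arcs of the auxiliary curve ${\cal C}$ (the curve around $e_{\alpha},e_{\beta},e_{\gamma}$ guaranteed by Property~\ref{pr:no-two-edges-same-structure}) joining $u_{\beta}$ to $u_{\gamma}$ and $v_{\beta}$ to $v_{\gamma}$. Since $e'_{\alpha}$ crosses ${\cal D}$ exactly once (once on $P_{\beta}$, never on $P_{\gamma}$ by the hypothesis, never on the arcs of ${\cal C}$ by construction), its two endpoints --- both vertices of $\alpha$ --- lie on opposite sides of ${\cal D}$; yet no con-edge for $\alpha$ in $S$ can cross ${\cal D}$, so $S$ fails to connect $\alpha$. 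Your proposal never constructs this separating cycle or invokes the connectivity of $S[\alpha]$, and without it the argument does not close. If you replace the ``same edge'' claim with this separation argument, the rest of your outline is consistent with the paper's proof.
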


\begin{proof}
Denote by $u_{\alpha}$ and $v_{\alpha}$ (by $u_{\beta}$ and $v_{\beta}$, by $u_{\gamma}$ and $v_{\gamma}$) the end-vertices of $e_{\alpha}$ (resp.\ of $e_{\beta}$, resp.\ of $e_{\gamma}$). By Property~\ref{pr:no-two-edges-same-structure}, it is possible to draw a closed curve $\cal C$ passing through $u_{\alpha}$, $u_{\beta}$, $u_{\gamma}$, $v_{\alpha}$, $v_{\gamma}$, and $v_{\beta}$ in (w.l.o.g.) clockwise order, containing edges $e_{\alpha}$, $e_{\beta}$, and $e_{\gamma}$ in its interior, and containing every other con-edge for $\alpha$, $\beta$, and $\gamma$ in its exterior. See Fig.~\ref{fig:lemma-deviation}.

Suppose, for a contradiction, that there exists a planar set $S$ of spanning trees for $A$ such that $e_{\alpha}\in S$.
\begin{figure}[tb]
\begin{center}
\mbox{\includegraphics[scale=0.4]{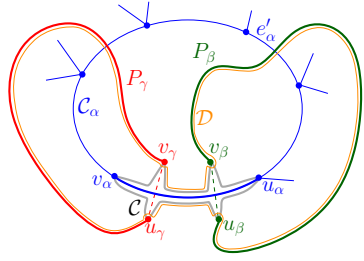}}
\caption{Illustration for the proof of Lemma~\ref{le:deviation}.}
\label{fig:lemma-deviation}
\end{center}
\end{figure}
Then, there exists a path $P_{\beta}$ ($P_{\gamma}$) all of whose edges belong to $S$ connecting $u_{\beta}$ and $v_{\beta}$ (resp.\ $u_{\gamma}$ and $v_{\gamma}$). Since $u_{\beta}$ and $v_{\beta}$ are on different sides of ${\cal C}_{\alpha}$, by the Jordan curve theorem $P_{\beta}$ crosses a con-edge $e'_{\alpha}\neq e_{\alpha}$ of ${\cal C}_{\alpha}$. Since no con-edge of ${\cal C}_{\alpha}$ different from $e_{\alpha}$ has a conflict with both a con-edge for $\beta$ and a con-edge for $\gamma$, it follows that $e'_{\alpha}$ does not cross $P_{\gamma}$. Consider the cycle ${\cal D}$ composed of $P_{\beta}$, of $P_{\gamma}$, of the path $P_u$ in $\cal C$ between $u_{\beta}$ and $u_{\gamma}$ not containing $u_{\alpha}$, and of the path $P_v$ in $\cal C$ between $v_{\beta}$ and $v_{\gamma}$ not containing $v_{\alpha}$. There exist vertices of $\alpha$ on both sides of cycle ${\cal D}$ (e.g., the end-vertices of $e'_{\alpha}$). However, no con-edge $g_{\alpha}$ for $\alpha$ in $S$ can cross ${\cal D}$. In fact, $g_{\alpha}$ cannot cross $P_{\beta}$ and $P_{\gamma}$, as such paths are composed of con-edges in $S$, and it cannot cross $P_u$ and $P_v$ by construction of $\cal C$. It follows that $S$ does not connect $\alpha$, a contradiction to the fact that $S$ is a planar set of spanning trees for $A$.
\end{proof}



The next two lemmata state conditions in which no planar set of spanning trees for $A$ exists. Their statements are illustrated in Figs.~\ref{fig:many-conflicts}(b) and~\ref{fig:many-conflicts}(c), respectively.

\begin{lemma}[{\sc Test 3}]\label{le:same-order}
Suppose that there exist con-edges $e'_{\alpha}, e'_{\beta}, e'_{\gamma}\in A$ for clusters $\alpha$, $\beta$, and $\gamma$, respectively, such that  $e'_{\alpha}\neq e_{\alpha}$, $e'_{\alpha}$ belongs to ${\cal C}_{\alpha}$, and $e'_{\alpha}\conf e'_{\beta}$ as well as $e'_{\alpha}\conf e'_{\gamma}$. Assume that $e'_{\alpha}$ is crossed first by $e'_{\beta}$ and then by $e'_{\gamma}$ when ${\cal C}_{\alpha}$ is traversed clockwise. Then, no planar set of spanning trees for $A$ exists.
\end{lemma}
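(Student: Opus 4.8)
The plan is to argue by contradiction: assume a planar set $S$ of spanning trees for $A$ exists and derive an impossibility. First I would record the basic structural facts. Since Property~\ref{pr:no-two-edges-same-structure} holds, $e_{\alpha}$ and $e'_{\alpha}$ lie in distinct connected components of $K_A$; hence $e_{\beta}\neq e'_{\beta}$ and $e_{\gamma}\neq e'_{\gamma}$, and no con-edge of the triple $\{e_{\alpha},e_{\beta},e_{\gamma}\}$ crosses any con-edge of the triple $\{e'_{\alpha},e'_{\beta},e'_{\gamma}\}$. Then, exactly as in the proof of Lemma~\ref{le:deviation}, I would use the closed-curve consequence of Property~\ref{pr:no-two-edges-same-structure} (the remark illustrated in Fig.~\ref{fig:inside-a-face}) to draw a closed curve $\Gamma$ through the six end-vertices of $e_{\alpha},e_{\beta},e_{\gamma}$ in the cyclic order $u_{\alpha},u_{\beta},u_{\gamma},v_{\alpha},v_{\gamma},v_{\beta}$ forced by the conflicts, with $e_{\alpha},e_{\beta},e_{\gamma}$ inside $\Gamma$ and every other con-edge for $\alpha$, $\beta$, or $\gamma$ outside $\Gamma$; and similarly a closed curve $\Gamma'$ for $e'_{\alpha},e'_{\beta},e'_{\gamma}$. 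Because the two triples have no mutual crossings, $\Gamma$ and $\Gamma'$ can be taken disjoint.

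The crux of the argument, and the place where the ``same order'' hypothesis enters, is the following geometric observation. The cycle $\mathcal{C}_{\alpha}$ meets the interior of $\Gamma$ precisely along $e_{\alpha}$, since all other con-edges of $\mathcal{C}_{\alpha}$ are con-edges for $\alpha$ and therefore lie outside $\Gamma$; thus the chord $e_{\alpha}$ splits the interior of $\Gamma$ into the part bounded by $e_{\alpha}$ and the sub-arc of $\Gamma$ carrying $u_{\beta},u_{\gamma}$, and the part bounded by $e_{\alpha}$ and the sub-arc carrying $v_{\beta},v_{\gamma}$. Since $e_{\alpha}$ is an edge of $\mathcal{C}_{\alpha}$, these two parts lie on opposite sides of $\mathcal{C}_{\alpha}$, so $\mathcal{C}_{\alpha}$ separates $\{u_{\beta},u_{\gamma}\}$ from $\{v_{\beta},v_{\gamma}\}$; moreover, the direction in which the clockwise traversal of $\mathcal{C}_{\alpha}$ runs along $e_{\alpha}$, together with the convention that $e_{\alpha}$ is crossed first by $\beta$ and then by $\gamma$, fixes which side of $\mathcal{C}_{\alpha}$ carries $u_{\beta},u_{\gamma}$, namely the side not containing the face of $A[\alpha]$ bounded by $\mathcal{C}_{\alpha}$. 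Repeating the argument at $\Gamma'$ and using that, by hypothesis, $e'_{\alpha}$ is crossed first by $\beta$ and then by $\gamma$ along the \emph{same} clockwise traversal of $\mathcal{C}_{\alpha}$, I would conclude that $\mathcal{C}_{\alpha}$ separates $\{u'_{\beta},u'_{\gamma}\}$ from $\{v'_{\beta},v'_{\gamma}\}$ in the parallel way: $u'_{\beta},u'_{\gamma}$ lie on the same side of $\mathcal{C}_{\alpha}$ as $u_{\beta},u_{\gamma}$, and $v'_{\beta},v'_{\gamma}$ on the same side as $v_{\beta},v_{\gamma}$.

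With this in hand I would finish by a case analysis on whether $e_{\alpha}\in S$ and whether $e'_{\alpha}\in S$. If $e_{\alpha}\notin S$, then Lemma~\ref{le:one-or-the-other} forces $e_{\beta},e_{\gamma}\in S$; if $e_{\alpha}\in S$, then $e_{\beta},e_{\gamma}\notin S$, so $S[\beta]$ (resp. $S[\gamma]$) contains a path $P_{\beta}$ from $u_{\beta}$ to $v_{\beta}$ (resp. $P_{\gamma}$ from $u_{\gamma}$ to $v_{\gamma}$) avoiding $e_{\beta}$ (resp. $e_{\gamma}$), drawn outside $\Gamma$; symmetrically for the primed objects at $\Gamma'$. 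In each case I would assemble these paths and forced edges, glued along sub-arcs of $\Gamma$, $\Gamma'$, and $\mathcal{C}_{\alpha}$, into a closed curve $\mathcal{D}$ — built exactly as the curve $\mathcal{D}$ in the proof of Lemma~\ref{le:deviation} — that no con-edge of $S$ for one of $\alpha$, $\beta$, $\gamma$ can cross, while the end-vertices of some forced $S$-edge (or of a con-edge of $\mathcal{C}_{\alpha}$ that must belong to $S[\alpha]$) lie on opposite sides of $\mathcal{D}$; by the Jordan curve theorem that cluster's spanning tree in $S$ is disconnected, contradicting Lemma~\ref{le:disconnected}. I expect the subtlety to be concentrated in two places: making the orientation-matching of the previous paragraph fully rigorous (tracking which side of $\mathcal{C}_{\alpha}$ each of the eight vertices falls on), and handling the case $e_{\alpha},e'_{\alpha}\in S$, where both triples are short-circuited by paths in $S[\beta]$ and $S[\gamma]$ and it is precisely the ``same order'' alignment that guarantees these paths must cross $\mathcal{C}_{\alpha}$ consistently, so that gluing them along $\mathcal{C}_{\alpha}$ traps vertices of $\beta$ (or of $\gamma$) on both sides of $\mathcal{D}$. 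The remaining cases should follow the template of Lemma~\ref{le:deviation} closely.
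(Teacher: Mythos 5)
Your overall strategy matches the paper's: build the closed curves around the two triples via Property~\ref{pr:no-two-edges-same-structure}, split on whether $e_{\alpha}$ or $e'_{\alpha}$ lies in $S$, and derive a Jordan-curve disconnection of some cluster. Your preliminary observations (the two triples lie in distinct components of $K_A$, hence have no mutual crossings, and ${\cal C}_{\alpha}$ separates $\{u_{\beta},u_{\gamma}\}$ from $\{v_{\beta},v_{\gamma}\}$ ``in parallel'' at the two sites) are correct, and the case $\{e_{\alpha},e'_{\alpha}\}\cap S\neq\emptyset$ does indeed go through essentially as in Lemma~\ref{le:deviation}: one forms ${\cal D}$ from $P_{\beta}$, $P_{\gamma}$ and two arcs of the local curve, and traps the vertices of $\alpha$ (not of $\beta$ or $\gamma$, as you suggest) on both sides. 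Note also that you have the difficulty inverted: the subcase $e_{\alpha},e'_{\alpha}\in S$ that you flag as delicate is subsumed by this argument, which only needs one of the two edges in $S$.

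The genuine gap is the case $e_{\alpha},e'_{\alpha}\notin S$, which you dismiss as ``following the template of Lemma~\ref{le:deviation} closely.'' It does not. In that case Lemma~\ref{le:one-or-the-other} forces $e_{\beta},e_{\gamma},e'_{\beta},e'_{\gamma}$ all \emph{into} $S$, so there are no detour paths $P_{\beta},P_{\gamma}$ around them and the curve of Lemma~\ref{le:deviation} cannot be formed at all. Your fallback of gluing the forced edges along sub-arcs of ${\cal C}_{\alpha}$ is also unsound: those arcs consist of con-edges for $\alpha$ that need not belong to $S$, and (precisely because {\sc Simplification 4} does not apply) some of them are crossed by con-edges for $\beta$ or $\gamma$, so a curve containing them separates nothing. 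What is actually needed here --- and is absent from your sketch --- is the paper's construction: take the $S[\alpha]$-paths $P^u_{\alpha}$ from $u_{\alpha}$ to $u'_{\alpha}$ and $P^v_{\alpha}$ from $v_{\alpha}$ to $v'_{\alpha}$, observe that their endpoints interleave along the facial cycle ${\cal C}_{\alpha}$ so the two paths must share vertices and form a tree $T_{\alpha}$, deduce that the $S[\beta]$-path from $u_{\beta}$ to $u'_{\beta}$ is forced to pass through $v_{\beta}$ and $v'_{\beta}$ (else it would cross the $u_{\alpha}$--$v_{\alpha}$ path in $T_{\alpha}$), and only then close up a cycle ${\cal D}$ from $P^u_{\alpha}$, $P^u_{\beta}$ and one arc of each of $\Gamma$ and $\Gamma'$ that isolates $u_{\gamma}$ from $u'_{\gamma}$. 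This is where the ``same order'' hypothesis does its real work, and your proposal as written does not reach it.
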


\begin{proof}
Denote by $u_{\alpha}$ and $v_{\alpha}$ (by $u_{\beta}$ and $v_{\beta}$, by $u_{\gamma}$ and $v_{\gamma}$) the end-vertices of $e_{\alpha}$ (resp.\ of $e_{\beta}$, resp.\ of $e_{\gamma}$). By Property~\ref{pr:no-two-edges-same-structure}, it is possible to draw a closed curve $\cal C$ passing through $u_{\alpha}$, $u_{\beta}$, $u_{\gamma}$, $v_{\alpha}$, $v_{\gamma}$, and $v_{\beta}$ in this clockwise order, containing edges $e_{\alpha}$, $e_{\beta}$, and $e_{\gamma}$ in its interior, and containing every other con-edge for $\alpha$, $\beta$, and $\gamma$ in its exterior.

Denote by $u'_{\alpha}$ and $v'_{\alpha}$ (by $u'_{\beta}$ and $v'_{\beta}$, by $u'_{\gamma}$ and $v'_{\gamma}$) the end-vertices of $e'_{\alpha}$ (resp.\ of $e'_{\beta}$, resp.\ of $e'_{\gamma}$). By Property~\ref{pr:no-two-edges-same-structure}, it is possible to draw a closed curve ${\cal C}'$ passing through $u'_{\alpha}$, $u'_{\beta}$, $u'_{\gamma}$, $v'_{\alpha}$, $v'_{\gamma}$, and $v'_{\beta}$ in this clockwise order, containing edges $e'_{\alpha}$, $e'_{\beta}$, and $e'_{\gamma}$ in its interior, and containing every other con-edge for $\alpha$, $\beta$, and $\gamma$ in its exterior. Assume, w.l.o.g., that the face of $A[\alpha]$ delimited by ${\cal C}_{\alpha}$ is to the right of ${\cal C}_{\alpha}$ when traversing such a cycle in clockwise direction. Finally assume, w.l.o.g., that $u_{\alpha}$, $v_{\alpha}$, $u'_{\alpha}$, and $v'_{\alpha}$ appear in this clockwise order along ${\cal C}_{\alpha}$ (possibly $v_{\alpha}=u'_{\alpha}$ and/or $v'_{\alpha}=u_{\alpha}$).

Suppose, for a contradiction, that there exists a planar set $S$ of spanning trees for $A$. The proof distinguishes two cases.

\paragraph{Case 1.} Let $e_{\alpha},e'_{\alpha} \notin S$. Refer to Fig.~\ref{fig:same-order}(a).

Consider the path $P^u_{\alpha}$ ($P^v_{\alpha}$) all of whose edges belong to $S$ connecting $u_{\alpha}$ and $u'_{\alpha}$ (resp.\ $v_{\alpha}$ and $v'_{\alpha}$). Since the end-vertices of $P^u_{\alpha}$ and $P^v_{\alpha}$ alternate along ${\cal C}_{\alpha}$ and since ${\cal C}_{\alpha}$ delimits a face of $A[\alpha]$, it follows that $P^u_{\alpha}$ and $P^v_{\alpha}$ share vertices (and possibly edges). Thus, the union of $P^u_{\alpha}$ and $P^v_{\alpha}$ is a tree $T_{\alpha}$ (recall that $S[\alpha]$ contains no cycle) whose only possible leaves are $u_{\alpha}$, $v_{\alpha}$, $u'_{\alpha}$, and $v'_{\alpha}$. Next, consider the path $P^u_{\beta}$ all of whose edges belong to $S$ connecting $u_{\beta}$ and $u'_{\beta}$. We claim that this path contains $v_{\beta}$. Indeed, if $P^u_{\beta}$ does not contain $v_{\beta}$, then it crosses the path connecting $u_{\alpha}$ and $v_{\alpha}$ in $T_{\alpha}$, thus contradicting the fact that $S$ is a planar set of spanning trees. An analogous proof shows that $P^u_{\beta}$ contains~$v'_{\beta}$.

Now, consider the cycle ${\cal D}$ composed of $P^u_{\alpha}$, of $P^u_{\beta}$, of the path $P_u$ in $\cal C$ between $u_{\alpha}$ and $u_{\beta}$ not containing $u_{\gamma}$, and of the path $P'_u$ in ${\cal C}'$ between $u'_{\alpha}$ and $u'_{\beta}$ not containing $u'_{\gamma}$. Cycle ${\cal D}$ contains vertices of $\gamma$ on both sides (e.g., $u_{\gamma}$ and $u'_{\gamma}$). However, no con-edge $g_{\gamma}$ for $\gamma$ in $S$ can cross ${\cal D}$. In fact, $g_{\gamma}$ cannot cross $P^u_{\alpha}$ or $P^u_{\beta}$, as such paths are composed of con-edges in $S$, and it cannot cross $P_u$ and $P'_u$ by construction of ${\cal C}$ and ${\cal C}'$. It follows that $S$ does not connect $\gamma$, a contradiction to the fact that $S$ is a planar set of spanning trees.

\begin{figure}[tb]
\begin{center}
\begin{tabular}{c c}
\mbox{\includegraphics[scale=0.38]{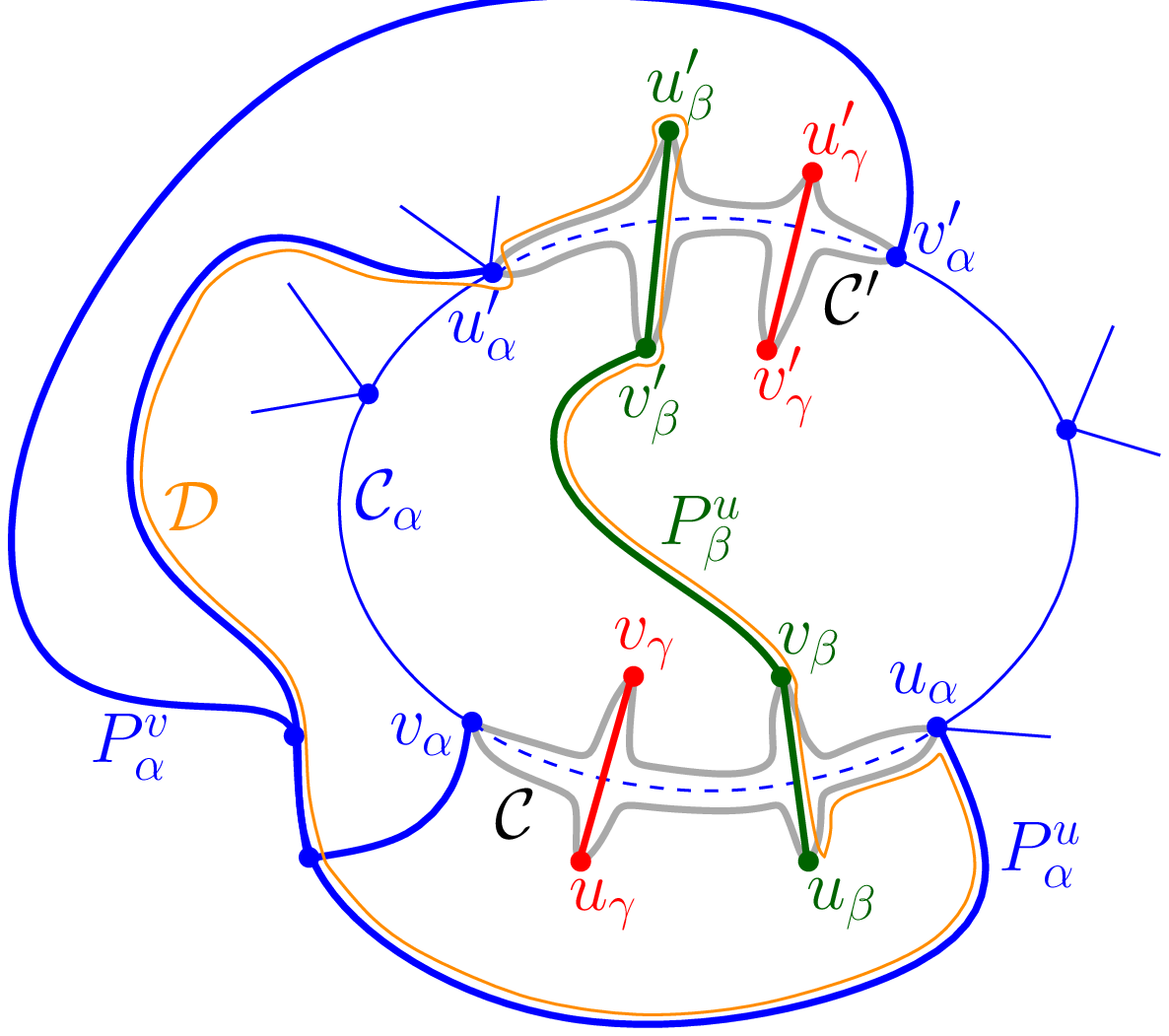}} \hspace{1mm} &
\mbox{\includegraphics[scale=0.38]{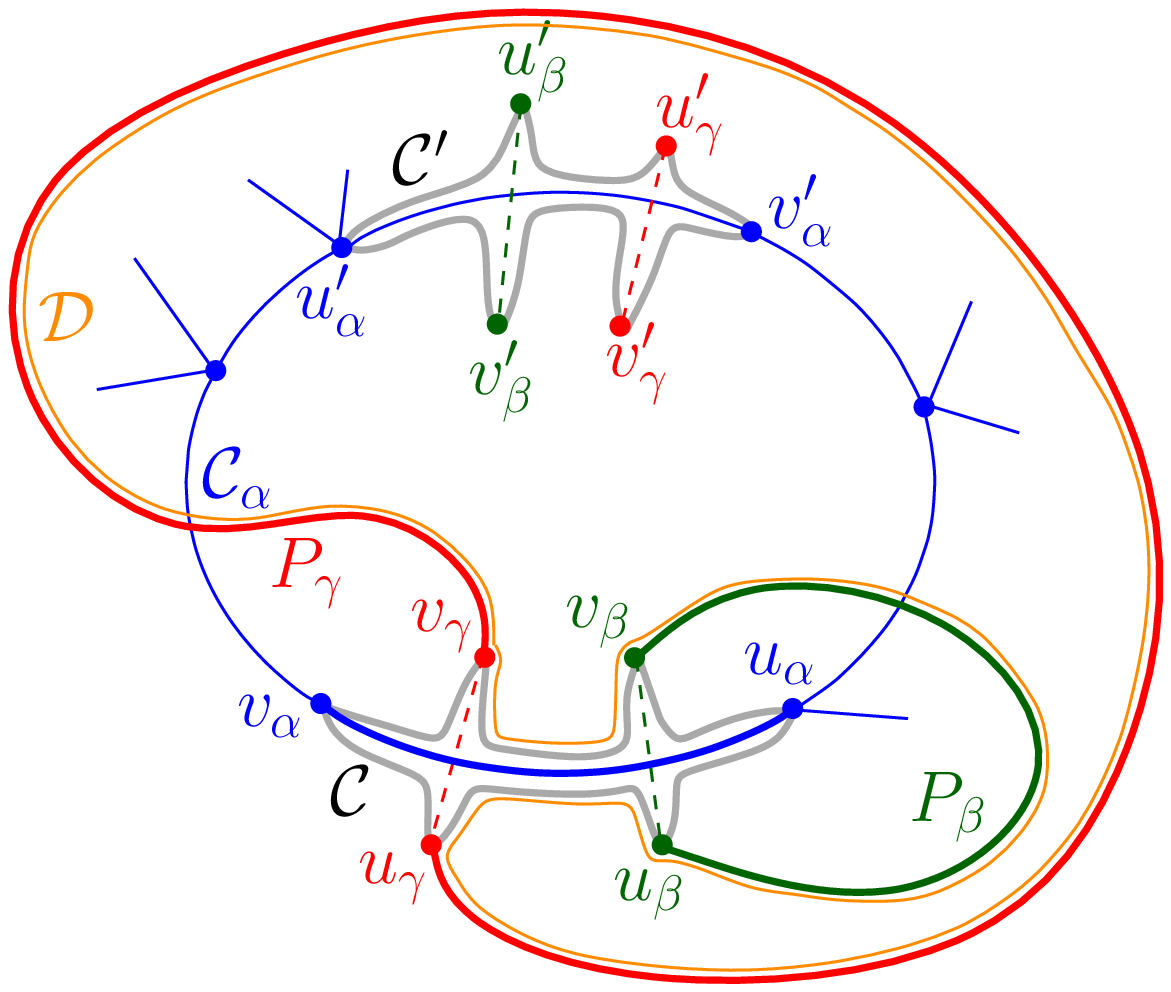}}\\
(a) \hspace{1mm} & (b)
\end{tabular}
\caption{Proof of Lemma~\ref{le:same-order}. (a) The case in which neither $e_{\alpha}$ nor $e'_{\alpha}$ belongs to $S$. (b) The case in which $e_{\alpha}\in S$.}
\label{fig:same-order}
\end{center}
\end{figure}

\paragraph{Case 2.} Let $\{e_{\alpha},e'_{\alpha}\}\cap S\neq \emptyset$. Suppose, w.l.o.g., that $e_{\alpha}\in S$. Refer to Fig.~\ref{fig:same-order}(b).

Consider the path $P_{\beta}$ ($P_{\gamma}$) all of whose edges belong to $S$ connecting $u_{\beta}$ and $v_{\beta}$ (resp.\ $u_{\gamma}$ and $v_{\gamma}$). Consider the cycle ${\cal C}_{\beta}$ composed of $P_{\beta}$ and $e_{\beta}$. We have that no con-edge $g_{\gamma}$ for $\gamma$ in $S$ can cross ${\cal C}_{\beta}$. In fact, $g_{\gamma}$ cannot cross $P_{\beta}$, as such a path is composed of con-edges in $S$, and it cannot cross $e_{\beta}$ by Property~\ref{pr:no-two-edges-same-structure}, given that $e_{\beta}$ and $e_{\gamma}$ belong to the same connected component of $K_A$ and do not cross, as otherwise {\sc Test 2} would succeed on $A$. It follows that ${\cal C}_{\beta}$ has $u_{\gamma}$ and $u'_{\gamma}$ on the same side, as otherwise $S$ would not connect $\gamma$, a contradiction to the fact that $S$ is a planar set of spanning trees. Since $u_{\gamma}$ and $u'_{\gamma}$ are on the same side of ${\cal C}_{\beta}$, it follows that $u_{\alpha}$ is on one side of ${\cal C}_{\beta}$ (call it {\em the small side} of ${\cal C}_{\beta}$), while $v_{\alpha}$, $u'_{\alpha}$, and $v'_{\alpha}$ are on the other side (call it {\em the large side} of ${\cal C}_{\beta}$). Analogously, the cycle ${\cal C}_{\gamma}$ composed of $P_{\gamma}$ and $e_{\gamma}$ has $v_{\alpha}$ on one side (call it {\em the small side} of ${\cal C}_{\gamma}$), and $u_{\alpha}$, $u'_{\alpha}$, and $v'_{\alpha}$ on the other side (call it {\em the large side} of ${\cal C}_{\gamma}$). Observe that the small side of ${\cal C}_{\beta}$ and the small side of ${\cal C}_{\gamma}$ are disjoint, as otherwise $P_{\beta}$ intersects ${\cal C}_{\gamma}$ or $P_{\gamma}$ intersects ${\cal C}_{\beta}$.

Now consider the cycle ${\cal D}$ composed of $P_{\beta}$, of $P_{\gamma}$, of the path $P_u$ in $\cal C$ between $u_{\beta}$ and $u_{\gamma}$ not containing $u_{\alpha}$, and of the path $P_v$ in $\cal C$ between $v_{\beta}$ and $v_{\gamma}$ not containing $v_{\alpha}$. Cycle ${\cal D}$ contains vertices of $\alpha$ on both sides. Namely, it contains $u_{\alpha}$ and $v_{\alpha}$ on one side (the side of ${\cal D}$ containing the small side of ${\cal C}_{\beta}$ and the small side of ${\cal C}_{\gamma}$), and $u'_{\alpha}$ and $v'_{\alpha}$ on the other side. However, no con-edge $g_{\alpha}$ for $\alpha$ in $S$ crosses ${\cal D}$. In fact, $g_{\alpha}$ cannot cross $P_{\beta}$ and $P_{\gamma}$, as such paths are composed of con-edges in $S$, and it cannot cross $P_u$ and $P_v$ by construction of ${\cal C}$. It follows that $S$ does not connect $\alpha$, a contradiction to the fact that $S$ is a planar set of spanning trees.
\end{proof}



\begin{lemma}[{\sc Test 4}]\label{le:ababa}
Suppose that con-edges $e'_{\alpha},e''_{\alpha} \in A$ for $\alpha$ exist in ${\cal C}_{\alpha}$, and such that $e_{\alpha}$, $e''_{\alpha}$, and $e'_{\alpha}$ occur in this order along ${\cal C}_{\alpha}$, when clockwise traversing ${\cal C}_{\alpha}$. Suppose also that there exist con-edges $e'_{\beta},e''_{\beta} \in A$ for $\beta$ and $e'_{\gamma}\in A$ for $\gamma$ such that $e'_{\alpha}\conf e'_{\beta}$, $e'_{\alpha}\conf e'_{\gamma}$, and $e''_{\alpha}\conf e''_{\beta}$. Then, no planar set of spanning trees for $A$ exists.
\end{lemma}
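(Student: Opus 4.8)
The plan is to assume, for a contradiction, that a planar set $S$ of spanning trees for $A$ exists, and to exhibit one of $\alpha,\beta,\gamma$ whose vertices $S$ fails to connect. As in the proofs of Lemmata~\ref{le:deviation} and~\ref{le:same-order}, two devices are used throughout: Lemma~\ref{le:one-or-the-other}, which propagates membership in $S$ along conflicts (so $e_{\alpha}\in S$ forces $e_{\beta},e_{\gamma}\notin S$, $e_{\alpha}\notin S$ forces $e_{\beta},e_{\gamma}\in S$, and likewise $e'_{\alpha}$ controls $e'_{\beta},e'_{\gamma}$ and $e''_{\alpha}$ controls $e''_{\beta}$); and the closed curves provided by Property~\ref{pr:no-two-edges-same-structure} (cf.\ Fig.~\ref{fig:inside-a-face}), which confine the con-edges of one connected component of $K_{A}$ to a disk and thereby let us refute statements of the form ``some con-edge of a cluster $\chi$ in $S$ crosses the separating curve''. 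Two preliminary remarks: since $e_{\alpha},e''_{\alpha},e'_{\alpha}$ are con-edges for a single cluster, Property~\ref{pr:no-two-edges-same-structure} places them in three pairwise-distinct components of $K_{A}$, so $e_{\beta},e''_{\beta},e'_{\beta}$ are pairwise distinct, $e_{\gamma}\neq e'_{\gamma}$, all eight con-edges are distinct, no con-edge of one of these three components crosses a con-edge of another, and each of $e_{\alpha},e'_{\alpha},e''_{\alpha}$ crosses exactly one con-edge for $\beta$; and, since {\sc Test~3} fails, $e'_{\alpha}$ cannot be crossed first by $e'_{\beta}$ and then by $e'_{\gamma}$ when ${\cal C}_{\alpha}$ is traversed clockwise (this is the forbidden pattern of Lemma~\ref{le:same-order}, since $e_{\alpha}$ is crossed first by $e_{\beta}$ and then by $e_{\gamma}$), hence $e'_{\alpha}$ is crossed first by $e'_{\gamma}$ and then by $e'_{\beta}$ --- the reverse pattern.

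I would then set up the curves. Property~\ref{pr:no-two-edges-same-structure} applied to $\{e_{\alpha},e_{\beta},e_{\gamma}\}$ yields a closed curve ${\cal C}$ through $u_{\alpha},u_{\beta},u_{\gamma},v_{\alpha},v_{\gamma},v_{\beta}$ in this clockwise order, with $e_{\alpha},e_{\beta},e_{\gamma}$ inside and every other con-edge for $\alpha,\beta,\gamma$ outside; applied to $\{e'_{\alpha},e'_{\beta},e'_{\gamma}\}$ it yields a closed curve ${\cal C}'$ through $u'_{\alpha},u'_{\gamma},u'_{\beta},v'_{\alpha},v'_{\beta},v'_{\gamma}$ in this clockwise order (the roles of $\beta$ and $\gamma$ are swapped relative to ${\cal C}$, reflecting the reversed crossing pattern of $e'_{\alpha}$), with the analogous containment property; and, when $e''_{\alpha}$ crosses no con-edge for $\gamma$, applied to $\{e''_{\alpha},e''_{\beta}\}$ it yields a closed curve ${\cal C}''$ through the four endpoints of $e''_{\alpha}$ and $e''_{\beta}$. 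The proof then branches on the $S$-status of $e_{\alpha}$ and of $e'_{\alpha}$; note that, since $e_{\alpha},e''_{\alpha},e'_{\alpha}$ lie on the simple cycle ${\cal C}_{\alpha}$ and $S[\alpha]$ is acyclic, they are not all in $S$.

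When $e_{\alpha}\notin S$ or $e'_{\alpha}\notin S$, I would follow the template of the proof of Lemma~\ref{le:same-order} --- invoking, where appropriate, the curve ${\cal C}'$ with its reversed endpoint order --- to build, from paths of $S$ joining the endpoints of the absent con-edge(s) together with sub-arcs of ${\cal C}$, ${\cal C}'$, and ${\cal C}_{\alpha}$, a closed curve ${\cal D}$ with vertices of $\gamma$ or of $\alpha$ on both sides such that no con-edge of the respective cluster in $S$ can cross ${\cal D}$: not the $S$-paths ($S$ is planar), not the sub-arcs of ${\cal C}$ and ${\cal C}'$ (containment), and not $e_{\beta}$, $e_{\gamma}$, $e'_{\beta}$, or $e'_{\gamma}$, by the Property~\ref{pr:no-two-edges-same-structure} argument of Case~2 of Lemma~\ref{le:same-order}; this contradicts connectivity of that cluster in $S$. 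The genuinely new case --- the one in which the third con-edge $e''_{\alpha}$ enters crucially --- is $e_{\alpha},e'_{\alpha}\in S$, whence $e_{\beta},e_{\gamma},e'_{\beta},e'_{\gamma}\notin S$; after disposing, if necessary, of the sub-case $e''_{\alpha}\in S$ (then some other edge of ${\cal C}_{\alpha}$ is still absent from $S$ and a short variant of what follows applies), we may assume $e''_{\alpha}\notin S$ and hence $e''_{\beta}\in S$. Let $P_{\beta},P'_{\beta}\subseteq S[\beta]$ join the endpoints of $e_{\beta}$ and of $e'_{\beta}$; then the cycles ${\cal C}_{\beta}=P_{\beta}\cup e_{\beta}$ and ${\cal C}'_{\beta}=P'_{\beta}\cup e'_{\beta}$ do not cross one another, since $e_{\beta}$ and $e'_{\beta}$ lie in different components of $K_{A}$ (so neither crosses the other nor any con-edge for $\beta$ in $S$) and the two $S[\beta]$-paths are non-crossing. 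Using the clockwise order $e_{\alpha},e''_{\alpha},e'_{\alpha}$ along ${\cal C}_{\alpha}$, the opposite $\beta/\gamma$ patterns of $e_{\alpha}$ and $e'_{\alpha}$, and the containment properties of ${\cal C}$, ${\cal C}'$, and ${\cal C}_{\alpha}$, one determines on which sides of ${\cal C}_{\beta}$ and of ${\cal C}'_{\beta}$ the endpoints $u''_{\alpha},v''_{\alpha}$ of $e''_{\alpha}$ lie, and concludes that $e''_{\beta}$ --- which lies in $S[\beta]$ and crosses $e''_{\alpha}$ --- would be forced to cross ${\cal C}_{\beta}$ or ${\cal C}'_{\beta}$; but $e''_{\beta}$ crosses neither $e_{\beta}$ nor $e'_{\beta}$ (distinct components of $K_{A}$) nor any edge of $P_{\beta}$ or $P'_{\beta}$ ($S$ is planar), a contradiction. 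Equivalently, $P_{\beta}$, $P'_{\beta}$, and suitable sub-arcs of the curves assemble into a cycle ${\cal D}$ separating $u''_{\alpha}$ from $v''_{\alpha}$ that no con-edge of $S[\alpha]$ crosses, contradicting connectivity of $\alpha$.

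The crux, I expect, is this last case: pinning down the cyclic position of $u''_{\alpha}$ and $v''_{\alpha}$ relative to the two independent $S$-cycles ${\cal C}_{\beta}$ and ${\cal C}'_{\beta}$ is where the orientation conventions of ${\cal C}$, ${\cal C}'$, and ${\cal C}_{\alpha}$ must be handled with care, and where the assumption that $e''_{\alpha}$ lies \emph{strictly between} $e_{\alpha}$ and $e'_{\alpha}$ along ${\cal C}_{\alpha}$ --- together with the reversal forced by the failure of {\sc Test~3} --- is essential: it is precisely this betweenness that leaves $e''_{\beta}$ nowhere to be routed. The remaining cases, and the sub-case where $e''_{\alpha}$ also crosses a con-edge for $\gamma$ (in which ${\cal C}$ or ${\cal C}'$ replaces ${\cal C}''$), should be routine adaptations of the Jordan-curve bookkeeping already carried out in the proofs of Lemmata~\ref{le:deviation} and~\ref{le:same-order}.
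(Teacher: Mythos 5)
Your setup is right (the reversal of the $\beta/\gamma$ crossing order on $e'_{\alpha}$ forced by the failure of {\sc Test 3}, the three curves, the distinct components of $K_A$), but you have misidentified where the difficulty of this lemma lives, and the case you dismiss as routine is precisely the one your argument cannot handle. You claim that whenever $e_{\alpha}\notin S$ or $e'_{\alpha}\notin S$ you can ``follow the template of the proof of Lemma~\ref{le:same-order}''. That template, in its Case~1 (both $e_{\alpha},e'_{\alpha}\notin S$), builds the separating cycle ${\cal D}$ from $P^u_{\alpha}$, $P^u_{\beta}$, and the arcs $P_u\subseteq{\cal C}$ and $P'_u\subseteq{\cal C}'$, and concludes that $u_{\gamma}$ and $u'_{\gamma}$ lie on opposite sides of ${\cal D}$; this conclusion depends on $u'_{\gamma}$ lying between $u'_{\beta}$ and $v'_{\alpha}$ on ${\cal C}'$, i.e.\ on $e'_{\alpha}$ being crossed first by $e'_{\beta}$ and then by $e'_{\gamma}$ --- the \emph{same} order as $e_{\alpha}$. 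In the present lemma that order is reversed (as you yourself observe), ${\cal C}'$ visits $u'_{\alpha},u'_{\gamma},u'_{\beta},\dots$, and the construction of ${\cal D}$ no longer separates $u_{\gamma}$ from $u'_{\gamma}$; no contradiction follows. This is exactly why {\sc Test 4} carries the extra hypotheses on $e''_{\alpha}$ and $e''_{\beta}$: in the subcases $e_{\alpha},e'_{\alpha}\notin S$ (split further on whether $e''_{\alpha}\in S$), the paper must route the contradiction through $e''_{\beta}$ --- the path in $S[\beta]$ from $u_{\beta}$ to $u''_{\beta}$ is forced to pass through $v_{\beta}$ and $v''_{\beta}$ while avoiding $u'_{\beta}$, and a separating cycle is then assembled using an arc of ${\cal C}''$ chosen so as to avoid all con-edges for $\gamma$. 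None of this appears in your proposal.

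Conversely, the case you single out as ``the genuinely new case'' ($e_{\alpha},e'_{\alpha}\in S$) is the easy one: if even one of $e_{\alpha},e'_{\alpha}$ is in $S$, the argument of Case~2 of Lemma~\ref{le:same-order} goes through essentially verbatim and never touches $e''_{\alpha}$ or $e''_{\beta}$ (it traps $u'_{\alpha}$ on the opposite side of a cycle ${\cal D}$ from $u_{\alpha}$ and $v_{\alpha}$, contradicting connectivity of $\alpha$; this works for either crossing order on $e'_{\alpha}$). Your sketch for that case is also shaky on its own terms --- since $e''_{\beta}\in S[\beta]$ it may itself be an edge of $P_{\beta}$ or $P'_{\beta}$, so ``$e''_{\beta}$ is forced to cross ${\cal C}_{\beta}$ or ${\cal C}'_{\beta}$'' does not obviously yield a contradiction --- and your justification that not all three of $e_{\alpha},e''_{\alpha},e'_{\alpha}$ lie in $S$ is incorrect as stated (a facial cycle ${\cal C}_{\alpha}$ may have more than three edges, so a spanning tree of $A[\alpha]$ can contain all three). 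In short: the combinatorial content of {\sc Test 4} is concentrated in the ``both $e_{\alpha}$ and $e'_{\alpha}$ absent from $S$'' cases, where the reversal defeats the Lemma~\ref{le:same-order} argument and the middle edge $e''_{\alpha}$ with its $\beta$-conflict must be brought into play; your proposal leaves exactly that part unproved.
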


\begin{proof}
Denote by $u_{\alpha}$ and $v_{\alpha}$ (by $u_{\beta}$ and $v_{\beta}$, by $u_{\gamma}$ and $v_{\gamma}$) the end-vertices of $e_{\alpha}$ (resp.\ of $e_{\beta}$, resp.\ of $e_{\gamma}$). By Property~\ref{pr:no-two-edges-same-structure}, it is possible to draw a closed curve $\cal C$ passing through $u_{\alpha}$, $u_{\beta}$, $u_{\gamma}$, $v_{\alpha}$, $v_{\gamma}$, and $v_{\beta}$ in this clockwise order, containing edges $e_{\alpha}$, $e_{\beta}$, and $e_{\gamma}$ in its interior, and containing every other con-edge for $\alpha$, $\beta$, and $\gamma$ in its exterior.

Denote by $u'_{\alpha}$ and $v'_{\alpha}$ (by $u'_{\beta}$ and $v'_{\beta}$, by $u'_{\gamma}$ and $v'_{\gamma}$) the end-vertices of $e'_{\alpha}$ (resp.\ of $e'_{\beta}$, resp.\ of $e'_{\gamma}$), and by $u''_{\alpha}$ and $v''_{\alpha}$ (by $u''_{\beta}$ and $v''_{\beta}$) the end-vertices of $e''_{\alpha}$ (resp.\ of $e''_{\beta}$). By Property~\ref{pr:no-two-edges-same-structure}, it is possible to draw a closed curve ${\cal C}'$ passing through $u'_{\alpha}$, $u'_{\gamma}$, $u'_{\beta}$, $v'_{\alpha}$, $v'_{\beta}$, and $v'_{\gamma}$ in this clockwise order, containing edges $e'_{\alpha}$, $e'_{\beta}$, and $e'_{\gamma}$ in its interior, and containing every other con-edge for $\alpha$, $\beta$, and $\gamma$ in its exterior. Also, it is possible to draw a closed curve ${\cal C}''$ passing through $u''_{\alpha}$, $u''_{\beta}$, $v''_{\alpha}$, and $v''_{\beta}$ in this clockwise order, containing edges $e''_{\alpha}$ and $e''_{\beta}$ in its interior, and containing every other con-edge for $\alpha$ and $\beta$ in its exterior. We assume that ${\cal C}''$ is arbitrarily close to the drawing of $e''_{\alpha}$ and $e''_{\beta}$ so that ${\cal C}''$ intersects a con-edge for a cluster different from $\alpha$ and $\beta$ only if that con-edge intersects $e''_{\alpha}$ or $e''_{\beta}$. Assume, w.l.o.g., that the face of $A[\alpha]$ delimited by ${\cal C}_{\alpha}$ is to the right of ${\cal C}_{\alpha}$ when traversing such a cycle in clockwise direction. Finally assume, w.l.o.g., that $u_{\alpha}$, $v_{\alpha}$, $u''_{\alpha}$, $v''_{\alpha}$, $u'_{\alpha}$, and $v'_{\alpha}$ appear in this clockwise order along ${\cal C}_{\alpha}$ (possibly $v_{\alpha}=u''_{\alpha}$, $v''_{\alpha}=u'_{\alpha}$, and/or $v'_{\alpha}=u_{\alpha}$).

First, suppose that $e''_{\alpha}$ has a conflict with a con-edge $e''_{\gamma}$ for $\gamma$. Then, if $e''_{\alpha}$ is crossed first by $e''_{\beta}$ and then by $e''_{\gamma}$ when ${\cal C}_{\alpha}$ is traversed clockwise, we can conclude that no planar set of spanning trees for $A$ exists by Lemma~\ref{le:same-order} (with $e_{\alpha}$ and $e''_{\alpha}$ playing the role of the edges $e_{\alpha}$ and $e'_{\alpha}$ in the statement of Lemma~\ref{le:same-order}). Analogously, if $e''_{\alpha}$ is crossed first by $e''_{\gamma}$ and then by $e''_{\beta}$ when ${\cal C}_{\alpha}$ is traversed clockwise, we can conclude that no planar set of spanning trees for $A$ exists by Lemma~\ref{le:same-order} (with $e'_{\alpha}$ and $e''_{\alpha}$ playing the role of the edges $e_{\alpha}$ and $e'_{\alpha}$ in the statement of Lemma~\ref{le:same-order}). Thus, in what follows we assume that $e''_{\alpha}$ does not have a conflict with any con-edge for $\gamma$.

Suppose, for a contradiction, that there exists a planar set $S$ of spanning trees for $A$. The proof distinguishes three cases.

\paragraph{Case 1.} Let $e_{\alpha},e'_{\alpha},e''_{\alpha} \notin S$. Refer to Fig.~\ref{fig:ababa}(a).

Consider the path $P^1_{\alpha}$ ($P^2_{\alpha}$, $P^3_{\alpha}$) all of whose edges belong to $S$ connecting $u_{\alpha}$ and $v''_{\alpha}$ (resp.\ $v_{\alpha}$ and $u'_{\alpha}$, resp.\ $u''_{\alpha}$ and $v'_{\alpha}$). Since the end-vertices of $P^i_{\alpha}$ and $P^j_{\alpha}$ alternate along ${\cal C}_{\alpha}$, for every $i\neq j$ with $i,j\in\{1,2,3\}$, and since ${\cal C}_{\alpha}$ delimits a face of $A[\alpha]$, it follows that $P^i_{\alpha}$ and $P^j_{\alpha}$ share vertices (and possibly edges). Thus, the union of $P^1_{\alpha}$, $P^2_{\alpha}$, and $P^3_{\alpha}$ is a tree $T_{\alpha}$ whose leaves can only be from $\{u_{\alpha},v_{\alpha},u'_{\alpha},v'_{\alpha},u''_{\alpha},v''_{\alpha}\}$. Consider the path $P^u_{\beta}$ all of whose edges belong to $S$ connecting $u_{\beta}$ and $u''_{\beta}$. We claim that this path contains $v_{\beta}$. Indeed, if $P^u_{\beta}$ does not contain $v_{\beta}$, then it crosses the path connecting $u_{\alpha}$ and $v_{\alpha}$ in $T_{\alpha}$, thus contradicting the fact that $S$ is a planar set of spanning trees for $A$. Analogously, $P^u_{\beta}$ contains $v''_{\beta}$. Further, $P^u_{\beta}$ does not contain $u'_{\beta}$, as otherwise it would cross the path connecting $u'_{\alpha}$ and $v'_{\alpha}$ in $T_{\alpha}$. Next, consider the path $P''_a$ that is the path in ${\cal C}''$ between $v''_{\alpha}$ and $u''_{\beta}$ and not containing $u''_{\alpha}$. Also, consider the path $P''_b$ that is the path in ${\cal C}''$ between $v''_{\alpha}$ and $v''_{\beta}$ and not containing $u''_{\alpha}$. Not both $P''_a$ and $P''_b$ intersect a con-edge for $\gamma$, as otherwise by construction of ${\cal C}''$ con-edge $e''_{\alpha}$ would have a conflict with a con-edge for $\gamma$, which contradicts the assumptions. Assume that $P''_a$ does not cross any con-edge for $\gamma$, the other case being analogous.

Consider the cycle ${\cal D}$ composed of $P^1_{\alpha}$, of $P^u_{\beta}$, of $P''_a$, and of the path $P_u$ in $\cal C$ between $u_{\alpha}$ and $u_{\beta}$ not containing $u_{\gamma}$. Cycle ${\cal D}$ contains vertices of $\gamma$ on both sides (e.g., $u_{\gamma}$ and $u'_{\gamma}$). However, no con-edge $g_{\gamma}$ for $\gamma$ in $S$ crosses ${\cal D}$. In fact, $g_{\gamma}$ cannot cross $P^1_{\alpha}$ or $P^u_{\beta}$, as such paths are composed of con-edges in $S$, it cannot cross $P_u$ by construction of $\cal C$, and it cannot cross $P''_a$ by assumption. It follows that $S$ does not connect $\gamma$, a contradiction to the fact that $S$ is a planar set of spanning trees for $A$.

\begin{figure}[tb]
\begin{center}
\begin{tabular}{c c}
\mbox{\includegraphics[scale=0.38]{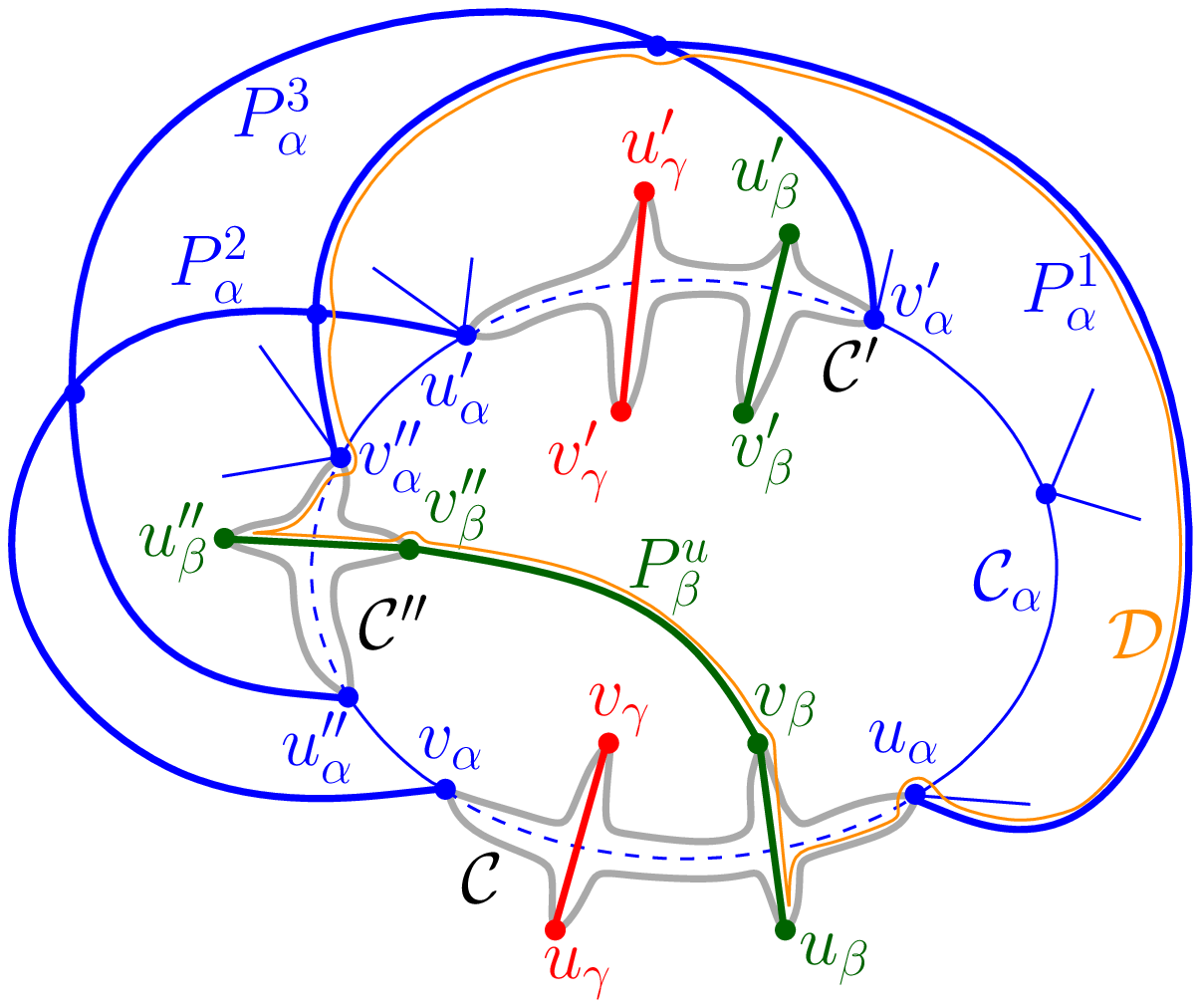}} \hspace{1mm} &
\mbox{\includegraphics[scale=0.38]{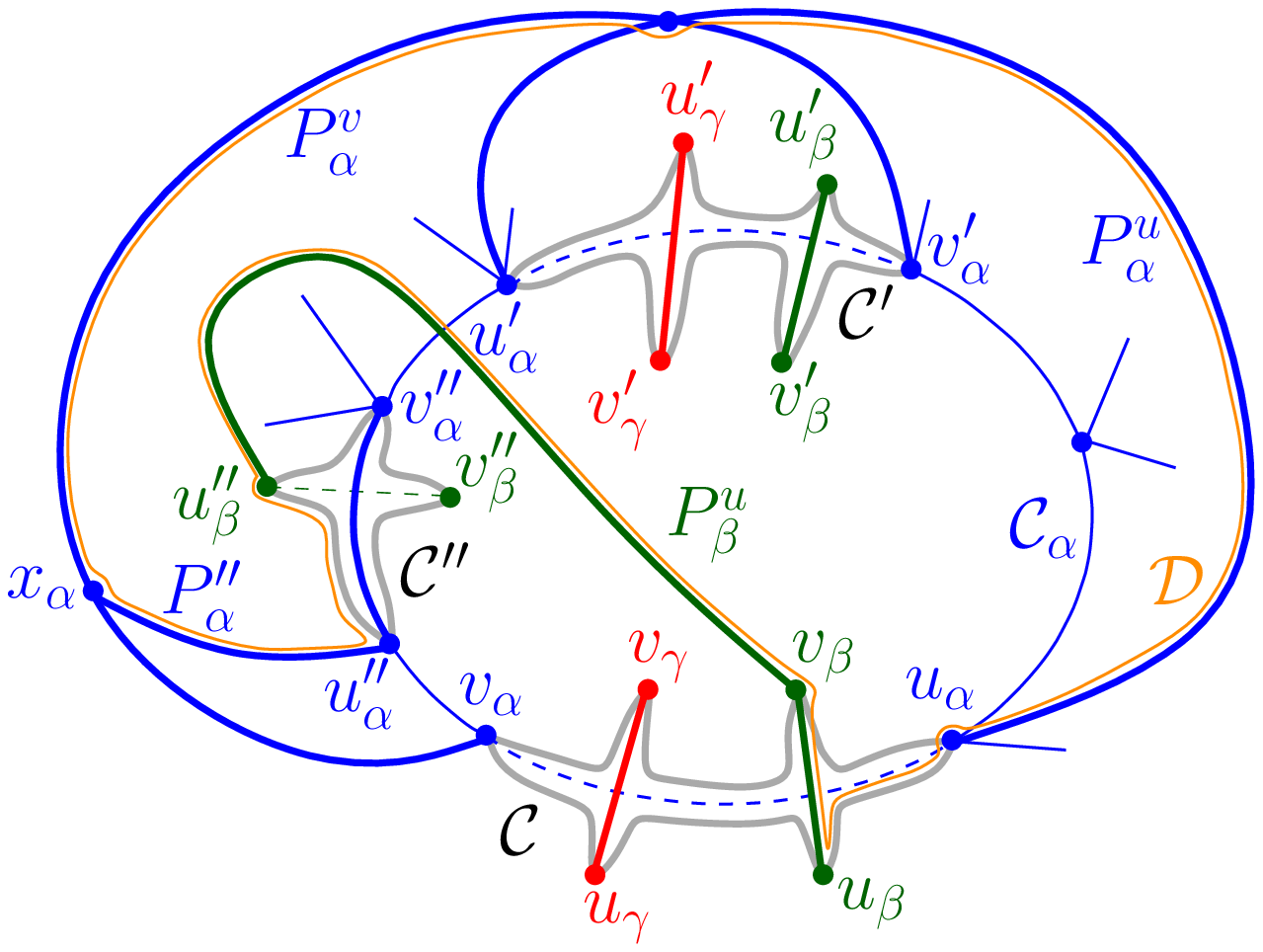}}\\
(a) \hspace{1mm} & (b)
\end{tabular}
\caption{Proof of Lemma~\ref{le:ababa}. (a) The case in which neither $e_{\alpha}$, nor $e'_{\alpha}$, nor $e''_{\alpha}$ belongs to $S$. (b) The case in which $e_{\alpha}$ and $e'_{\alpha}$ do not belong to $S$, while $e''_{\alpha}$ belongs to $S$.}
\label{fig:ababa}
\end{center}
\end{figure}

\paragraph{Case 2.} Let $e_{\alpha},e'_{\alpha}\notin S$, and $e''_{\alpha}\in S$. Refer to Fig.~\ref{fig:ababa}(b).

Consider the path $P^u_{\alpha}$ ($P^v_{\alpha}$) all of whose edges belong to $S$ connecting $u_{\alpha}$ and $u'_{\alpha}$ (resp.\ $v_{\alpha}$ and $v'_{\alpha}$). Since the end-vertices of $P^u_{\alpha}$ and $P^v_{\alpha}$ alternate along ${\cal C}_{\alpha}$ and since ${\cal C}_{\alpha}$ delimits a face of $A[\alpha]$, it follows that $P^u_{\alpha}$ and $P^v_{\alpha}$ share vertices (and possibly edges). Thus, the union of $P^u_{\alpha}$ and $P^v_{\alpha}$ is a tree $T_{\alpha}$ whose leaves can only be from $\{u_{\alpha},v_{\alpha},u'_{\alpha},v'_{\alpha}\}$. Also, let $P''_{\alpha}$ be a path whose edges belong to $S$ connecting $u''_{\alpha}$ and any vertex in $T_{\alpha}$, say $x_\alpha$. Assume that $P''_{\alpha}$ is minimal, i.e., no vertex of $P''_{\alpha}$ different from $x_\alpha$ belongs to $T_{\alpha}$.
Next, consider the path $P''_a$ in ${\cal C}''$ between $u''_{\alpha}$ and $u''_{\beta}$ and not containing $v''_{\alpha}$. Also, consider the path $P''_b$ in ${\cal C}''$ between $u''_{\alpha}$ and $v''_{\beta}$ and not containing $v''_{\alpha}$. Not both $P''_a$ and $P''_b$ intersect a con-edge for $\gamma$, as otherwise by construction of ${\cal C}''$ con-edge $e''_{\alpha}$ would have a conflict with a con-edge for $\gamma$, which contradicts the assumptions. Assume that $P''_a$ does not cross any con-edge for $\gamma$, the other case being analogous.
Now consider the path $P^u_{\beta}$ all of whose edges belong to $S$ connecting $u_{\beta}$ and $u''_{\beta}$. We claim that this path contains $v_{\beta}$. Indeed, if $P^u_{\beta}$ does not contain $v_{\beta}$, then it crosses the path connecting $u_{\alpha}$ and $v_{\alpha}$ in $T_{\alpha}$, thus contradicting the fact that $S$ is a planar set of spanning trees for $A$.

Consider the cycle ${\cal D}$ composed of $P^u_{\beta}$, of the path $P_u$ in $\cal C$ between $u_{\alpha}$ and $u_{\beta}$ not containing $u_{\gamma}$, of the path connecting $u_{\alpha}$ and $x_{\alpha}$ in $T_{\alpha}$, of $P''_{\alpha}$, and of $P''_a$. Cycle ${\cal D}$ contains vertices of $\gamma$ on both sides (e.g., $u_{\gamma}$ and $u'_{\gamma}$). However, no con-edge $g_{\gamma}$ for $\gamma$ in $S$ can cross ${\cal D}$. In fact, $g_{\gamma}$ cannot cross $P^u_{\beta}$, the path connecting $u_{\alpha}$ and $x_{\alpha}$ in $T_{\alpha}$, or $P''_{\alpha}$, as such paths are composed of con-edges in $S$, it cannot cross $P_u$ by construction of $\cal C$, and it cannot cross $P''_a$ by assumption. It follows that $S$ does not connect $\gamma$, a contradiction to the fact that $S$ is a planar set of spanning trees for $A$.

\begin{figure}[tb]
\begin{center}
\mbox{\includegraphics[scale=0.38]{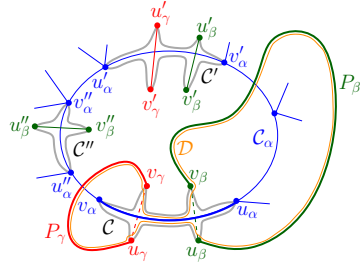}}
\caption{Proof of Lemma~\ref{le:ababa}, in the case in which $e_{\alpha}$ belongs to $A$.}
\label{fig:ababa-2}
\end{center}
\end{figure}

\paragraph{Case 3.} Let $\{e_{\alpha},e'_{\alpha}\}\cap S\neq \emptyset$. Suppose, w.l.o.g., that $e_{\alpha}\in S$. Refer to~Fig.~\ref{fig:ababa-2}. This case is similar to the second case in the proof of Lemma~\ref{le:same-order}.

Consider the path $P_{\beta}$ ($P_{\gamma}$) all of whose edges belong to $S$ connecting $u_{\beta}$ and $v_{\beta}$ (resp.\ $u_{\gamma}$ and $v_{\gamma}$).
Consider the cycle ${\cal C}_{\beta}$ composed of $P_{\beta}$ and $e_{\beta}$. We have that no con-edge $g_{\gamma}$ for $\gamma$ in $S$ crosses ${\cal C}_{\beta}$. In fact, $g_{\gamma}$ cannot cross $P_{\beta}$, as such a path is composed of con-edges in $S$, and it cannot cross $e_{\beta}$ by Property~\ref{pr:no-two-edges-same-structure}, given that $e_{\beta}$ and $e_{\gamma}$ belong to the same connected component of $K_A$ and do not cross, as otherwise {\sc Test 2} would succeed on $A$. It follows that ${\cal C}_{\beta}$ has $u_{\gamma}$ and $u'_{\gamma}$ on the same side, as otherwise $S$ would not connect $\gamma$, a contradiction to the fact that $S$ is a planar set of spanning trees for $A$.  Since $u_{\gamma}$ and $u'_{\gamma}$ are on the same side of ${\cal C}_{\beta}$, it follows that $u_{\alpha}$ is on one side of ${\cal C}_{\beta}$ (call it {\em the small side} of ${\cal C}_{\beta}$), while $v_{\alpha}$ and $u'_{\alpha}$ are on the other side (call it {\em the large side} of ${\cal C}_{\beta}$). Observe that, differently from the proof of Lemma~\ref{le:same-order}, it might be the case that $v'_{\alpha}$ is in the small side of ${\cal C}_{\beta}$, if $P_{\beta}$ contains con-edge $(u'_{\beta},v'_{\beta})$.
An analogous argument proves that the cycle ${\cal C}_{\gamma}$ composed of $P_{\gamma}$ and $e_{\gamma}$ has $v_{\alpha}$ on one side (call it {\em the small side} of ${\cal C}_{\gamma}$), and $u_{\alpha}$, $u'_{\alpha}$, and $v'_{\alpha}$ on the other side (call it {\em the large side} of ${\cal C}_{\gamma}$).
Observe that the small side of ${\cal C}_{\beta}$ and the small side of ${\cal C}_{\gamma}$ are disjoint, as otherwise $P_{\beta}$ intersects ${\cal C}_{\gamma}$ or $P_{\gamma}$ intersects ${\cal C}_{\beta}$.

Now consider the cycle ${\cal D}$ composed of $P_{\beta}$, of $P_{\gamma}$, of the path $P_u$ in $\cal C$ between $u_{\beta}$ and $u_{\gamma}$ not containing $u_{\alpha}$, and of the path $P_v$ in $\cal C$ between $v_{\beta}$ and $v_{\gamma}$ not containing $v_{\alpha}$. Cycle ${\cal D}$ contains vertices of $\alpha$ on both sides. Namely, it contains $u_{\alpha}$ and $v_{\alpha}$ on one side (the side of ${\cal D}$ containing the small side of ${\cal C}_{\beta}$ and the small side of ${\cal C}_{\gamma}$), and $u'_{\alpha}$ on the other side. However, no con-edge $g_{\alpha}$ for $\alpha$ in $S$ can cross ${\cal D}$. In fact, $g_{\alpha}$ cannot cross $P_{\beta}$ and $P_{\gamma}$, as such paths are composed of con-edges in $S$, and it cannot cross $P_u$ and $P_v$ by construction of $\cal C$. It follows that $S$ does not connect $\alpha$, a contradiction to the fact that $S$ is a planar set of spanning trees for $A$.
\end{proof}


If {\sc Simplifications 1--4} do not apply to $A$ and {\sc Tests 1--4} fail on $A$, then the con-edges for a cluster $\alpha$ that are crossed by con-edges for (at least) two other clusters have a nice structure, that we call {\em $\alpha$-donut} (see Fig.~\ref{fig:donut}). 

Consider a con-edge $e_{\alpha}\in A$ for $\alpha$ crossing con-edges $e_{\beta_1},\dots,e_{\beta_m}$ for clusters $\beta_1,\dots,\beta_m$, with $m\geq 2$. An $\alpha$-donut for $e_{\alpha}$ consists of a sequence $e^1_\alpha,\dots,e^{k}_\alpha,e^{k+1}_\alpha$ of con-edges for $\alpha$ with $k\geq 2$, called {\em spokes} of the $\alpha$-donut, of a sequence ${\cal C}^1_\alpha,\dots,{\cal C}^{k}_\alpha,{\cal C}^{k+1}_\alpha={\cal C}^{1}_\alpha$ of facial cycles in $A[\alpha]$, and of sequences $e^1_{\beta_j},\dots,e^k_{\beta_j}$ of con-edges for $\beta_j$, for each $1\leq j\leq m$, such that the following hold for every $1\leq i\leq k$: 

\begin{itemize}
\item[(a)] $e_{\alpha}$ is one of edges $e^1_\alpha,\dots,e^{k}_\alpha$;
\item[(b)] $e^i_\alpha\conf e^i_{\beta_j}$, for every $1\leq j\leq m$;
\item[(c)] ${\cal C}^i_\alpha$ and ${\cal C}^{i+1}_\alpha$ share edge $e^i_\alpha$;
\item[(d)] edge $e^i_{\alpha}$ is crossed by $e^i_{\beta_1},\dots,e^i_{\beta_m}$ in this order when ${\cal C}^i_{\alpha}$ is traversed clockwise;
\item[(e)] all the con-edges of ${\cal C}^{i+1}_{\alpha}$ encountered when clockwise traversing ${\cal C}^{i+1}_{\alpha}$ from $e^{i}_{\alpha}$ to $e^{i+1}_{\alpha}$ do not cross any con-edge for $\beta_2,\dots,\beta_m$; and
\item[(f)] all the con-edges of ${\cal C}^{i+1}_{\alpha}$ encountered when clockwise traversing ${\cal C}^{i+1}_{\alpha}$ from $e^{i+1}_{\alpha}$ to $e^{i}_{\alpha}$ do not cross any con-edge for $\beta_1,\dots,\beta_{m-1}$. 
\end{itemize}

We have the following.

\begin{lemma} \label{le:donut}
For every con-edge $e_{\alpha}\in A$ for $\alpha$, there exists an $\alpha$-donut for $e_{\alpha}$.
\end{lemma}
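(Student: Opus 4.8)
The plan is to build the $\alpha$-donut incrementally: starting from $e^1_\alpha:=e_\alpha$, I construct spokes $e^2_\alpha,e^3_\alpha,\dots$ together with their facial cycles ${\cal C}^i_\alpha$ and the accompanying con-edges $e^i_{\beta_j}$, and then argue that this sequence returns to $e^1_\alpha$ after finitely many---but at least two---steps. For the base case, let $\beta_1,\dots,\beta_m$ be the clusters whose con-edges are crossed by $e_\alpha$; since the algorithm is inside the main loop, $e_\alpha$ crosses more than one con-edge, so by Property~\ref{pr:no-two-edges-same-structure} these clusters are pairwise distinct and all different from $\alpha$, and $m\ge 2$ as the definition of an $\alpha$-donut requires. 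Because {\sc Simplification~1} does not apply, $e^1_\alpha:=e_\alpha$ is not a bridge of $A[\alpha]$ and hence lies on two facial cycles of $A[\alpha]$; I pick one as ${\cal C}^1_\alpha$, re-index $\beta_1,\dots,\beta_m$ by the clockwise order in which their con-edges cross $e^1_\alpha$ along ${\cal C}^1_\alpha$---this is condition~(d) for $i=1$---and let $e^1_{\beta_j}$ be the con-edge for $\beta_j$ crossed by $e^1_\alpha$, which gives condition~(b) for $i=1$.

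For the inductive step, assume $e^i_\alpha,{\cal C}^i_\alpha,e^i_{\beta_1},\dots,e^i_{\beta_m}$ satisfy (b) and (d). Let ${\cal C}^{i+1}_\alpha$ be the facial cycle of $A[\alpha]$ through $e^i_\alpha$ other than ${\cal C}^i_\alpha$, which yields (c). Traverse ${\cal C}^{i+1}_\alpha$ clockwise from $e^i_\alpha$ and let $e^{i+1}_\alpha$ be the first con-edge for $\alpha$ encountered that is crossed by con-edges for all of $\beta_1,\dots,\beta_m$, setting $e^{i+1}_{\beta_j}$ accordingly. I must show: (i) this $e^{i+1}_\alpha$ exists; (ii) the con-edges strictly between $e^i_\alpha$ and $e^{i+1}_\alpha$ on ${\cal C}^{i+1}_\alpha$ cross no con-edge for $\beta_2,\dots,\beta_m$, and symmetrically between $e^{i+1}_\alpha$ and $e^i_\alpha$ no con-edge for $\beta_1,\dots,\beta_{m-1}$ (conditions (e), (f)); and (iii) $e^{i+1}_\alpha$ is crossed by con-edges for $\beta_1,\dots,\beta_m$ in this clockwise order along ${\cal C}^{i+1}_\alpha$ (condition (d) for $i+1$). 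The engine for all three is the closed-curve consequence of Property~\ref{pr:no-two-edges-same-structure} (Fig.~\ref{fig:inside-a-face}) together with the Jordan curve theorem, which localize on which side of ${\cal C}^{i+1}_\alpha$ the con-edges for $\beta_1$ and for $\beta_m$ must lie, combined with the failure of {\sc Tests~3} and~4 and the non-applicability of {\sc Simplification~4}: for (i), since {\sc Simplification~4} (Lemma~\ref{le:deviation}) does not apply to $e^i_\alpha$ on ${\cal C}^{i+1}_\alpha$ with $\beta_1,\beta_m$ in the roles of $\beta,\gamma$, the cycle ${\cal C}^{i+1}_\alpha$ carries another con-edge for $\alpha$ crossed by con-edges for both $\beta_1$ and $\beta_m$; for (ii)--(iii), any con-edge between $e^i_\alpha$ and $e^{i+1}_\alpha$ crossing a con-edge for some $\beta_j$ with $2\le j\le m-1$, or any ``out-of-order'' crossing pattern at $e^{i+1}_\alpha$, would place $e^i_\alpha$ and the offending con-edge on the common facial cycle ${\cal C}^{i+1}_\alpha$ in exactly the forbidden configuration of {\sc Test~3} (Lemma~\ref{le:same-order}) or {\sc Test~4} (Lemma~\ref{le:ababa}), which is impossible since both tests fail on $A$. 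A by-product of this analysis is that every spoke is crossed by con-edges for exactly the clusters $\beta_1,\dots,\beta_m$.

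To close the construction up, call a con-edge for $\alpha$ a \emph{spoke} if it is crossed by con-edges for all of $\beta_1,\dots,\beta_m$; then $e_\alpha$ is a spoke, as is every $e^i_\alpha$ produced above. Each spoke lies on exactly two facial cycles of $A[\alpha]$ (it is not a bridge), and by conditions (e), (f) each facial cycle through a spoke carries exactly one further spoke on its boundary---every other con-edge on it is crossed by con-edges for at most one of $\beta_1,\dots,\beta_m$. Hence the spokes and these facial cycles, with the incidence relation, form a $2$-regular structure, i.e.\ a disjoint union of cycles; the cycle through $e^1_\alpha$, read off as ${\cal C}^1_\alpha,e^1_\alpha,{\cal C}^2_\alpha,e^2_\alpha,\dots$, is finite, returns to $e^1_\alpha$ and ${\cal C}^1_\alpha$, and by construction satisfies (a)--(f); its length $k$ in spokes is at least $2$ because ${\cal C}^1_\alpha$ and ${\cal C}^2_\alpha$, the two distinct facial cycles through $e^1_\alpha$, cannot coincide. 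This yields the required $\alpha$-donut for $e_\alpha$.

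I expect the inductive step---proving that the next spoke exists and that the arcs of ${\cal C}^{i+1}_\alpha$ between consecutive spokes have the prescribed restricted crossing pattern---to be the main obstacle; it needs a fairly delicate case analysis tying together the closed-curve consequence of Property~\ref{pr:no-two-edges-same-structure}, Jordan-curve arguments inside the single face of $G$ containing all the relevant con-edges, and the exclusion of the configurations of {\sc Tests~3} and~4 and of {\sc Simplification~4}. Once the $2$-regularity of the incidence structure is established, the closing-up step is routine combinatorial bookkeeping.
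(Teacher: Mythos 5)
Your proposal follows essentially the same route as the paper's proof: build the spokes one facial cycle at a time, using the non-applicability of {\sc Simplification~4} to produce the next spoke and the failure of {\sc Tests~3} and~4 to pin down the crossing order and conditions~(e)--(f), then close the sequence into a cycle by finiteness (the paper's observation that each facial cycle contains at most two con-edges crossed by con-edges for all of $\beta_1,\dots,\beta_m$, else {\sc Test~3} succeeds, is exactly your $2$-regularity argument). The ``delicate case analysis'' you defer is precisely where the paper works hardest: it invokes the non-applicability of {\sc Simplification~4} pairwise (first for $\beta_1,\beta_2$, then for $\beta_1,\beta_j$ with $j\geq 3$) and uses {\sc Test~4} to show that all these applications single out the \emph{same} edge $e^{i+1}_\alpha$, which is how one upgrades ``crossed by con-edges for two of the $\beta$'s'' to ``crossed by con-edges for all of them, in the required order.''
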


\begin{figure}[tb]
\begin{center}
\mbox{\includegraphics[scale=0.4]{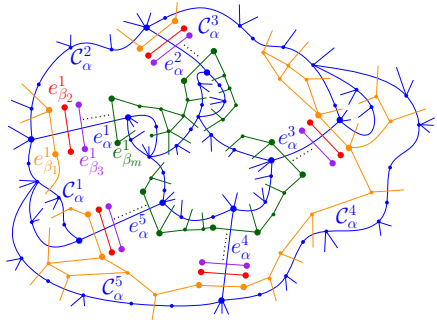}}
\caption{The $\alpha$-donut for $e_{\alpha}$. Only the con-edges of ${\cal C}^1_\alpha,\dots,{\cal C}^{k}_\alpha$, the con-edges for $\beta_1,\dots,\beta_m$ crossing the spokes of the $\alpha$-donut for $\alpha$, the con-edges for $\beta_1$ and $\beta_m$ inside the faces delimited by ${\cal C}^1_\alpha,\dots,{\cal C}^{k}_\alpha$, part of the con-edges for $\alpha$ incident to vertices in ${\cal C}^1_\alpha,\dots,{\cal C}^{k}_\alpha$, and part of the con-edges of $\beta_1$ and $\beta_m$ crossing the con-edges of ${\cal C}^1_\alpha,\dots,{\cal C}^{k}_\alpha$ are shown.}
\label{fig:donut}
\end{center}
\end{figure}

\begin{proof}
Let, w.l.o.g., $e^1_{\alpha}=e_{\alpha}$ and consider the two faces $f^1_{\alpha}$ and $f^2_{\alpha}$ of $A[\alpha]$ incident to $e^1_{\alpha}$. Since {\sc Simplification 1} does not apply to $A$, it follows that $f^1_{\alpha}\neq f^2_{\alpha}$. Let ${\cal C}^i_{\alpha}$ be the cycle delimiting $f^i_{\alpha}$, for $i=1,2$. Let $e^1_{\beta_1},\dots,e^1_{\beta_m}$ be the con-edges for clusters $\beta_1,\dots,\beta_m$, respectively, ordered as they cross $e^1_{\alpha}$ when clockwise traversing ${\cal C}^1_{\alpha}$. Thus, $e^1_{\alpha}$ is crossed by $e^1_{\beta_m},\dots,e^1_{\beta_1}$ in this order when ${\cal C}^2_{\alpha}$ is traversed clockwise.

Consider facial cycle ${\cal C}^2_{\alpha}$.

Since {\sc Simplification 4} does not apply to $A$, there exists at least one con-edge $e^2_{\alpha}$ in ${\cal C}^2_{\alpha}$ that is different from $e^1_{\alpha}$ and that is crossed by con-edges $e^2_{\beta_1}$ for $\beta_1$ and $e^2_{\beta_2}$ for $\beta_2$. Since {\sc Test 3} fails on $A$, it follows that $e^2_{\alpha}$ is crossed first by $e^2_{\beta_1}$ and then by $e^2_{\beta_2}$ when clockwise traversing ${\cal C}^2_{\alpha}$. Since {\sc Test 4} fails on $A$, it follows that all the con-edges of ${\cal C}^2_{\alpha}$ different from $e^1_{\alpha}$ and $e^{2}_{\alpha}$ encountered when clockwise traversing ${\cal C}^2_{\alpha}$ from $e^1_{\alpha}$ to $e^{2}_{\alpha}$ (from $e^2_{\alpha}$ to $e^1_{\alpha}$) do not have a conflict with any con-edge for $\beta_2$ (resp.\ for $\beta_1$).

\begin{figure}[tb]
\begin{center}
\begin{tabular}{c c}
\mbox{\includegraphics[scale=0.38]{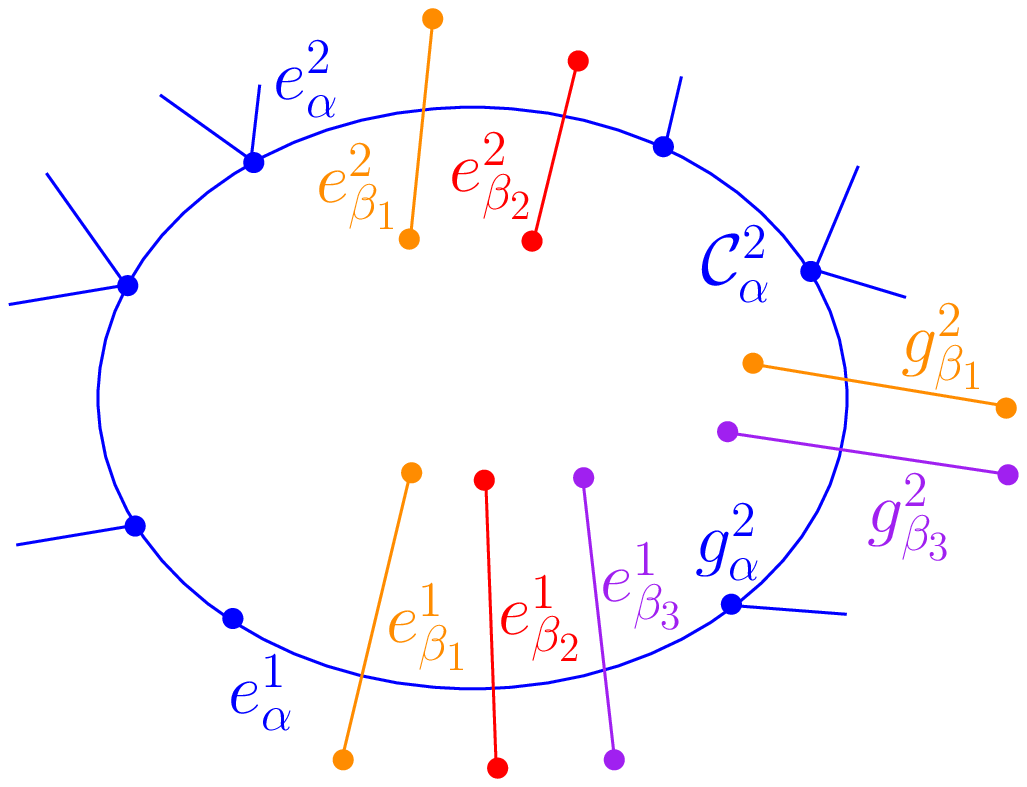}} \hspace{1cm} &
\mbox{\includegraphics[scale=0.38]{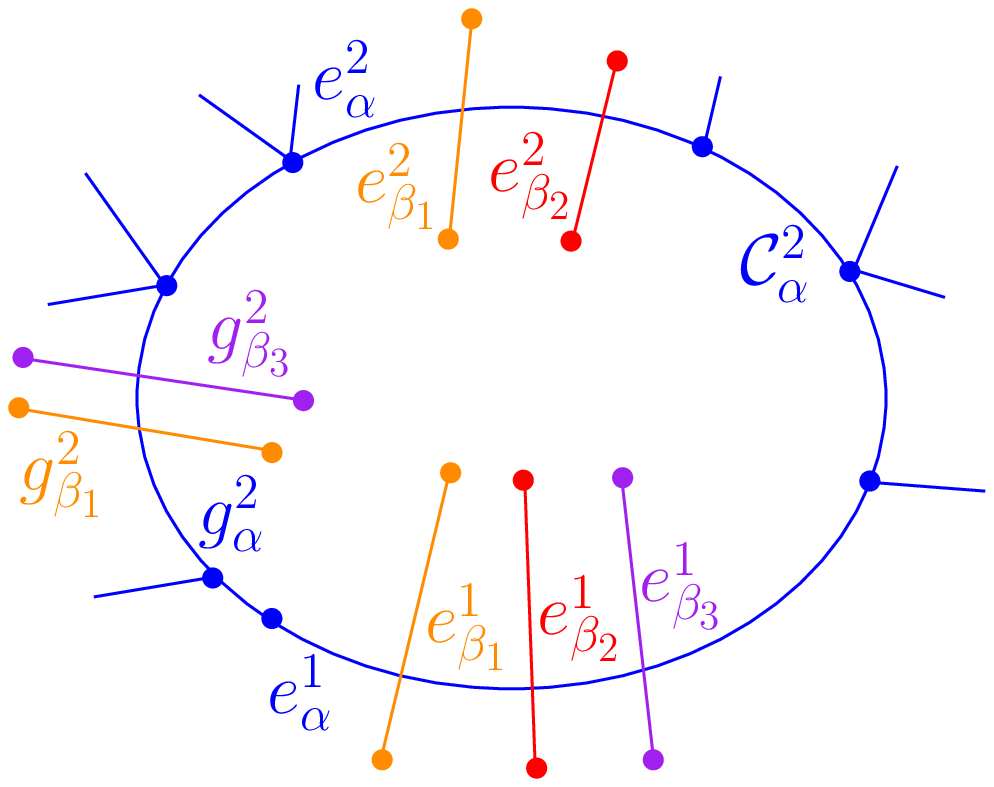}}\\
(a) \hspace{1cm} & (b)
\end{tabular}
\caption{Illustration for the proof of Lemma~\ref{le:donut}.}
\label{fig:donut-proof}
\end{center}
\end{figure}

Now, since {\sc Simplification 4} does not apply to $A$, there exists at least one con-edge $g^2_{\alpha}$ in ${\cal C}^2_{\alpha}$ that is different from $e^1_{\alpha}$ and that is crossed by con-edges $g^2_{\beta_1}$ for $\beta_1$ and $g^2_{\beta_3}$ for $\beta_3$. We prove that $g^2_{\alpha}=e^2_{\alpha}$. Suppose, for a contradiction, that $g^2_{\alpha}\neq e^2_{\alpha}$. If $g^2_{\alpha}$ is encountered when clockwise traversing ${\cal C}^2_{\alpha}$ from $e^2_{\alpha}$ to $e^1_{\alpha}$, as in Fig.~\ref{fig:donut-proof}(a), then {\sc Test 4} would succeed on $A$, with $\alpha$, $\beta_1$, and $\beta_2$ playing the roles of $\alpha$, $\beta$, and $\gamma$, respectively, in the statement of Lemma~\ref{le:ababa}, a contradiction. Hence, assume that $g^2_{\alpha}$ is encountered when clockwise traversing ${\cal C}^2_{\alpha}$ from $e^1_{\alpha}$ to $e^2_{\alpha}$, as in Fig.~\ref{fig:donut-proof}(b). Since {\sc Test 3} fails on $A$, it follows that $g^2_{\alpha}$ is crossed first by $g^2_{\beta_1}$ and then by $g^2_{\beta_3}$ when clockwise traversing ${\cal C}^2_{\alpha}$. However, this implies that {\sc Test 4} succeeds on $A$, with $\alpha$, $\beta_1$, and $\beta_3$ playing the roles of $\alpha$, $\beta$, and $\gamma$, respectively, in the statement of Lemma~\ref{le:ababa}, a contradiction. Thus, we get that $g^2_{\alpha}=e^2_{\alpha}$, hence $e^2_{\alpha}$ is crossed by a con-edge $e^2_{\beta_3}$ for $\beta_3$. Since {\sc Test 3} fails on $A$, it follows that $e^2_{\alpha}$ is crossed first by $e^2_{\beta_1}$, then by $e^2_{\beta_2}$, and then by $e^2_{\beta_3}$ when clockwise traversing ${\cal C}^2_{\alpha}$. Since {\sc Test 4} fails on $A$, it follows that all the con-edges of ${\cal C}^2_{\alpha}$ different from $e^1_{\alpha}$ and $e^{2}_{\alpha}$ encountered when clockwise traversing ${\cal C}^2_{\alpha}$ from $e^1_{\alpha}$ to $e^{2}_{\alpha}$ (from $e^2_{\alpha}$ to $e^1_{\alpha}$) do not have a conflict with any con-edges for $\beta_2,\beta_3$ (resp.\ for $\beta_1,\beta_2$).

The argument in the previous paragraph can be repeated for each $\beta_j$, with $j=4,5,\dots,m$, with $\alpha$, $\beta_1$, $\beta_{j-1}$, and $\beta_j$ playing the roles of  $\alpha$, $\beta_1$, $\beta_2$, and $\beta_3$. This leads to conclude that $e^2_{\alpha}$ is crossed by con-edges $e^2_{\beta_1},\dots,e^2_{\beta_m}$ for $\beta_1,\dots,\beta_m$, respectively, in this order when clockwise traversing ${\cal C}^2_{\alpha}$, and that all the con-edges of ${\cal C}^2_{\alpha}$ encountered when clockwise traversing ${\cal C}^2_{\alpha}$ from $e^1_{\alpha}$ to $e^2_{\alpha}$ (from $e^2_{\alpha}$ to $e^1_{\alpha}$) do not have a conflict with any con-edges for $\beta_2,\dots,\beta_m$ (resp.\ for $\beta_1,\dots,\beta_{m-1}$).

Now the same argument as the one we just presented for ${\cal C}^2_{\alpha}$ is repeated for ${\cal C}^3_{\alpha}$, that is the facial cycle that contains $e^2_{\alpha}$ and that is different from ${\cal C}^2_{\alpha}$. Again, this leads to conclude that there exists a con-edge $e^3_{\alpha}\neq e^2_{\alpha}$ for $\alpha$ that belongs to ${\cal C}^3_{\alpha}$, that there exist con-edges $e^3_{\beta_1},\dots,e^3_{\beta_m}$ for clusters $\beta_1,\dots,\beta_m$, respectively, that cross $e^3_{\alpha}$ in this order when clockwise traversing ${\cal C}^3_{\alpha}$, and that all the con-edges of ${\cal C}^3_{\alpha}$ encountered when clockwise traversing ${\cal C}^3_{\alpha}$ from $e^2_{\alpha}$ to $e^3_{\alpha}$ (from $e^3_{\alpha}$ to $e^2_{\alpha}$) do not have a conflict with any con-edges for $\beta_2,\dots,\beta_m$ (resp.\ for $\beta_1,\dots,\beta_{m-1}$).

Since the number of edges of $A[\alpha]$ is finite and since each facial cycle of $A[\alpha]$ does not contain more than two con-edges crossed by con-edges for all of $\beta_1,\dots,\beta_m$ (as otherwise {\sc Test 3} would succeed on $A$), eventually a facial cycle ${\cal C}^{k+1}_{\alpha}={\cal C}^{1}_{\alpha}$ of $A[\alpha]$ is considered in which the two con-edges that are crossed by con-edges for all of $\beta_1,\dots,\beta_m$ are $e^k_{\alpha}$ and $e^1_{\alpha}$. This concludes the proof of the lemma.
\end{proof}

Observe that the $\alpha$-donut for any con-edge $e_{\alpha}$ for $\alpha$ can be computed efficiently. The following is a consequence of Lemma~\ref{le:donut}.

\begin{lemma} \label{le:exactly-one}
Consider a con-edge $e_{\alpha}$ for $\alpha$ that has a conflict with $m\geq 2$ con-edges for other clusters. Let $e^1_\alpha,\dots,e^{k}_\alpha$ be the spokes of the $\alpha$-donut for $e_{\alpha}$. Then, if a planar set $S$ of spanning trees for $A$ exists, it contains exactly one of $e^1_\alpha,\dots,e^{k}_\alpha$.
\end{lemma}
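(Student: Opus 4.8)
The plan is to exploit the annular structure revealed by Lemma~\ref{le:donut}. The faces $f^1_\alpha,\dots,f^k_\alpha$ bounded by the facial cycles ${\cal C}^1_\alpha,\dots,{\cal C}^k_\alpha$ and glued along the spokes $e^1_\alpha,\dots,e^k_\alpha$ form a topological annulus $R$; its two boundary closed walks are a walk $Z_1$, whose con-edges (by donut property (e)) cross no con-edge for $\beta_2,\dots,\beta_m$, and a walk $Z_2$, whose con-edges (by donut property (f)) cross no con-edge for $\beta_1,\dots,\beta_{m-1}$; each spoke $e^i_\alpha$ joins a vertex $c_i$ of $Z_1$ to a vertex $d_i$ of $Z_2$ and, by donut property (d), is crossed by $e^i_{\beta_1},\dots,e^i_{\beta_m}$ in this order from $c_i$ to $d_i$. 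Besides the defining property of a planar set of spanning trees (no two conflicting con-edges lie in $S$), I will use two consequences of Property~\ref{pr:no-two-edges-same-structure}: (C1) no two con-edges for the same cluster conflict, since conflicting con-edges lie in a common connected component of $K_A$; and (C2) for every con-edge $f$ and every cluster $\delta$, at most one con-edge for $\delta$ conflicts with $f$, for the same reason. In particular $e^i_\alpha$ is the unique con-edge for $\alpha$ conflicting with $e^i_{\beta_1}$, $e^i_{\beta_1}$ is the unique con-edge for $\beta_1$ conflicting with $e^i_\alpha$, and, for $i\neq j$, the con-edges $e^i_{\beta_1}$ and $e^j_{\beta_1}$ are distinct and lie in distinct components of $K_A$ (otherwise $e^i_\alpha,e^j_\alpha$ would share a component, violating (C1)).

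\emph{At least one of the spokes belongs to $S$.} Suppose not. Then by Lemma~\ref{le:one-or-the-other} we get $e^i_{\beta_1}\in S$ for all $i$, and symmetrically $e^i_{\beta_m}\in S$ for all $i$. Fix the cycle ${\cal Z}_1:=e^1_\alpha\cup Q$, where $Q\subseteq S[\alpha]$ is the path joining the endpoints $c_1,d_1$ of $e^1_\alpha$ in the spanning tree $S[\alpha]$. Since no con-edge for $\alpha$ conflicts with a con-edge of $Q$ or with $e^1_\alpha$ (by (C1) and the defining property of $S$), the annulus $R$ cut along the spoke $e^1_\alpha$ lies, up to boundary effects, entirely on one side of ${\cal Z}_1$; hence all the spokes $e^2_\alpha,\dots,e^k_\alpha$, all the con-edges $e^2_{\beta_1},\dots,e^k_{\beta_1}$ (which cross those spokes but not ${\cal Z}_1$, by (C2) and the defining property of $S$), and the whole of $Z_1$ lie on that side. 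The remaining ingredient is to analyse where $S[\beta_1]$ (resp.\ $S[\beta_m]$) is forced to run, given that it must connect the endpoints of all the $e^i_{\beta_1}$ (resp.\ $e^i_{\beta_m}$) and cannot cross any con-edge of $S$; walking around the cyclic sequence of faces of $R$ and combining this with donut properties (e) and (f) yields that $S[\alpha]$ cannot connect $Z_1$ to $Z_2$ without using a spoke, so that $S[\alpha]$ is disconnected — a contradiction.

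\emph{At most one of the spokes belongs to $S$.} Suppose $e^i_\alpha,e^j_\alpha\in S$ with $i\neq j$. Then $e^i_{\beta_1},e^j_{\beta_1}\notin S$, so there are paths $P_i,P_j\subseteq S[\beta_1]$ joining, respectively, the endpoints of $e^i_{\beta_1}$ and of $e^j_{\beta_1}$; let ${\cal K}_i:=e^i_{\beta_1}\cup P_i$ and ${\cal K}_j:=e^j_{\beta_1}\cup P_j$. By (C2) and the defining property of $S$, the tree $S[\alpha]$ crosses ${\cal K}_i$ exactly once, through $e^i_\alpha$, and ${\cal K}_j$ exactly once, through $e^j_\alpha$; and ${\cal K}_i$ does not cross ${\cal K}_j$, since every con-edge of both is a con-edge for $\beta_1$ and, by (C1), no two con-edges for $\beta_1$ conflict. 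Removing $e^i_\alpha$ and $e^j_\alpha$ from the tree $S[\alpha]$ leaves three subtrees, and ${\cal K}_i,{\cal K}_j$ confine them to prescribed regions of the plane; tracking the positions of the spokes $e^l_\alpha\notin S$ lying (on the annulus $R$) between $e^i_\alpha$ and $e^j_\alpha$ — for which $e^l_{\beta_1}\in S$ by Lemma~\ref{le:one-or-the-other} — shows that re-attaching $e^i_\alpha$ and $e^j_\alpha$ creates a cycle in $S[\alpha]$, or, in the degenerate case $k=2$, disconnects $S[\beta_1]$ or $S[\beta_m]$. Either way the hypotheses are contradicted.

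The main obstacle, in both directions, is the topological bookkeeping: turning ``the spokes occur cyclically around the annulus'' into a rigorous separation/parity statement that survives the presence of facial cycles that are not simple or that share vertices, and propagating ``sidedness'' coherently once around the $\alpha$-donut. No further hypotheses enter: every non-crossing fact used above is a consequence of Property~\ref{pr:no-two-edges-same-structure} together with the standing assumption that {\sc Test 1--4} fail and {\sc Simplification 1--4} do not apply to $A$.
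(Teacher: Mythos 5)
Your split into ``at least one'' and ``at most one'' matches the paper, and the first half is essentially the paper's argument: each spoke $e^i_\alpha$ lies on both ${\cal C}^i_\alpha$ and ${\cal C}^{i+1}_\alpha$, so the spokes form an edge cut of $A[\alpha]$ separating the two boundary walks of the annulus, and any set spanning $\alpha$ must contain one of them. (Your detour through $e^i_{\beta_1},e^i_{\beta_m}\in S$ and the cycle ${\cal Z}_1$ plays no role in that conclusion and can be dropped.)

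The ``at most one'' half has a genuine gap. Your main-case argument uses only the single cluster $\beta_1$: you form ${\cal K}_i=e^i_{\beta_1}\cup P_i$ and ${\cal K}_j=e^j_{\beta_1}\cup P_j$, note that $S[\alpha]$ crosses each exactly once, and then assert that ``tracking the positions of the intermediate spokes shows that re-attaching $e^i_\alpha$ and $e^j_\alpha$ creates a cycle in $S[\alpha]$.'' That step is not substantiated, and I do not believe it can be: a plane tree may cross each of two mutually non-crossing closed curves exactly once without any cycle arising (e.g.\ a path that enters and leaves the region bounded by ${\cal K}_i$ and later enters the region bounded by ${\cal K}_j$), so no contradiction follows from the $\beta_1$-data alone. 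The hypothesis $m\geq 2$ is exactly what is needed here, and you invoke $\beta_m$ only in the degenerate case $k=2$, which is backwards. The paper instead works locally at a single spoke $e^x_\alpha\in S$: since $e^x_{\beta_1},e^x_{\beta_m}\notin S$, it takes the $S$-paths $P^x_{\beta_1}$ and $P^x_{\beta_m}$ replacing them, shows via a ``small side / large side'' analysis of the cycles $P^x_{\beta_1}\cup e^x_{\beta_1}$ and $P^x_{\beta_m}\cup e^x_{\beta_m}$ (which no con-edge in $S$ for the other of the two clusters may cross, by Property~\ref{pr:no-two-edges-same-structure} and the failure of {\sc Test 2}) that their small sides are disjoint and contain $u^x_\alpha$ and $v^x_\alpha$ respectively, and then closes $P^x_{\beta_1}\cup P^x_{\beta_m}$ into one cycle ${\cal D}$ using two short arcs of the local closed curve ${\cal C}^x$ around $e^x_\alpha$, $e^x_{\beta_1}$, $e^x_{\beta_m}$. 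This ${\cal D}$ separates $\{u^x_\alpha,v^x_\alpha\}$ from $\{u^y_\alpha,v^y_\alpha\}$ and cannot be crossed by any con-edge for $\alpha$ in $S$, so $S$ fails to connect $\alpha$. That construction --- a single separating cycle assembled from the replacement paths of \emph{two distinct} clusters crossing the same spoke --- is the idea missing from your proposal.
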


\begin{proof}
First, by Lemma~\ref{le:donut}, $e^i_\alpha$ belongs to both facial cycles ${\cal C}^i_\alpha$ and ${\cal C}^{i+1}_\alpha$, for every $1\leq i\leq k$, where ${\cal C}^{k+1}_\alpha={\cal C}^{1}_\alpha$. It follows that removing all of $e^1_\alpha,\dots,e^{k}_\alpha$ from $A$ disconnects $A[\alpha]$. Hence, $S$ contains at least one of $e^1_\alpha,\dots,e^{k}_\alpha$.

Suppose, for a contradiction, that a planar set $S$ of spanning trees for $A$ exists that contains at least two edges $e^x_\alpha$ and $e^y_\alpha$. Refer to Fig.~\ref{fig:exactly-one}. Denote by $u^x_\alpha$ and $v^x_\alpha$ (by $u^y_\alpha$ and $v^y_\alpha$) the end-vertices of $e^x_\alpha$ (resp.\ of $e^y_\alpha$), where we assume w.l.o.g. that edges $e^x_{\beta_1},e^x_{\beta_2},\dots,e^x_{\beta_m}$ are crossed in this order when traversing $e^x_\alpha$ from $u^x_\alpha$ to $v^x_\alpha$, and that edges $e^y_{\beta_1},e^y_{\beta_2},\dots,e^y_{\beta_m}$ are crossed in this order when traversing $e^y_\alpha$ from $u^y_\alpha$ to $v^y_\alpha$. Further, denote by $u^x_{\beta_1}$ and $v^x_{\beta_1}$, by $u^x_{\beta_m}$ and $v^x_{\beta_m}$, by $u^y_{\beta_1}$ and $v^y_{\beta_1}$, by $u^y_{\beta_m}$ and $v^y_{\beta_m}$,  the end-vertices of $e^x_{\beta_1}$, of $e^x_{\beta_m}$, of $e^y_{\beta_1}$, and of $e^y_{\beta_m}$, respectively.

\begin{figure}[tb]
\begin{center}
\mbox{\includegraphics[scale=0.45]{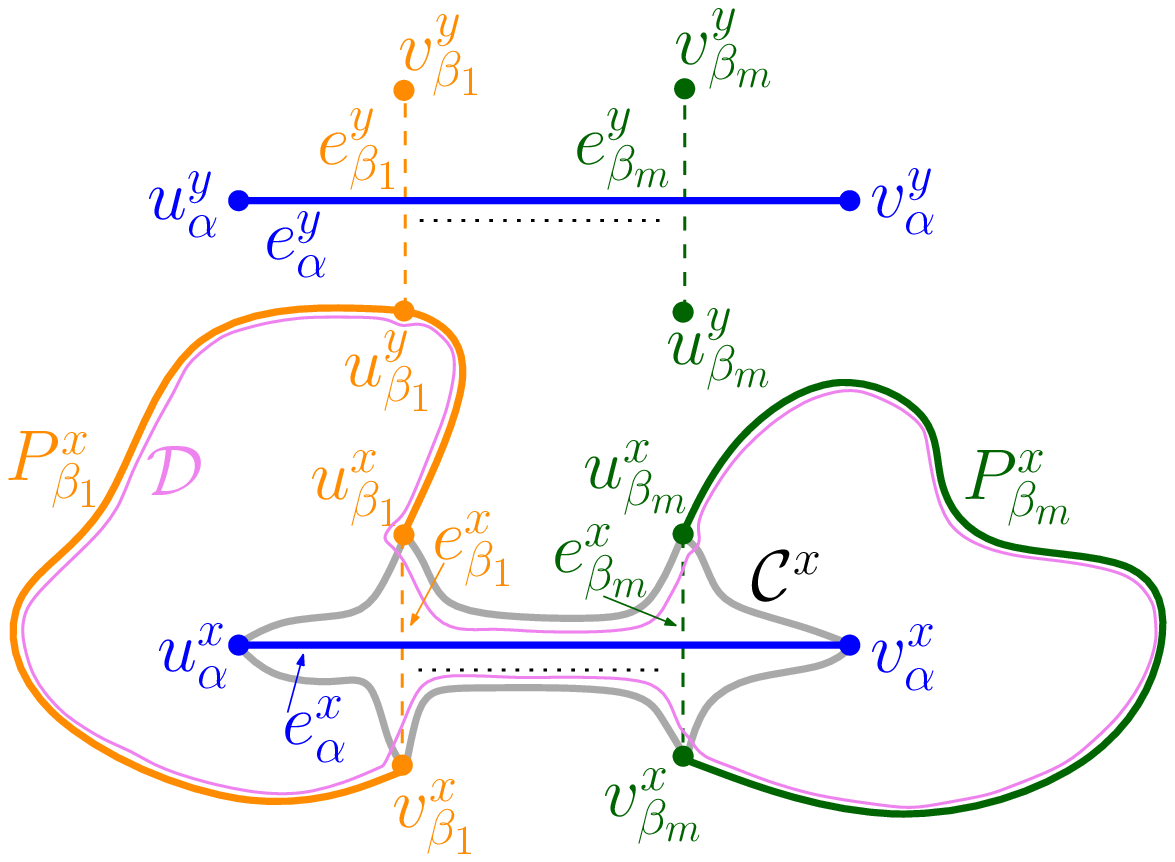}}
\caption{Illustration for the proof of Lemma~\ref{le:exactly-one}.}
\label{fig:exactly-one}
\end{center}
\end{figure}

Consider the path $P^x_{\beta_1}$ ($P^x_{\beta_m}$) all of whose edges belong to $S$ connecting $u^x_{\beta_1}$ and $v^x_{\beta_1}$ (resp.\ $u^x_{\beta_m}$ and $v^x_{\beta_m}$). Consider the cycle ${\cal C}^x_{\beta_1}$ composed of $P^x_{\beta_1}$ and $e^x_{\beta_1}$. We have that no con-edge $g_{\beta_m}$ for $\beta_m$ in $S$ crosses ${\cal C}^x_{\beta_1}$. In fact, $g_{\beta_m}$ cannot cross $P^x_{\beta_1}$, as such a path is composed of con-edges in $S$, and it cannot cross $e^x_{\beta_1}$ by Property~\ref{pr:no-two-edges-same-structure}, given that $e^x_{\beta_1}$ and $e^x_{\beta_m}$ belong to the same connected component of $K_A$ and do not cross, as otherwise {\sc Test 2} would succeed on $A$. It follows that ${\cal C}^x_{\beta_1}$ has $u^x_{\beta_m}$ and $u^y_{\beta_m}$ on the same side, as otherwise $S$ would not connect $\beta_m$, a contradiction to the fact that $S$ is a planar set of spanning trees for $A$.  Since $u^x_{\beta_m}$ and $u^y_{\beta_m}$ are on the same side of ${\cal C}^x_{\beta_1}$, since $u^y_{\beta_m}$, $u^y_{\alpha}$, and $v^y_{\alpha}$ are on the same side of ${\cal C}^x_{\beta_1}$, and since $u^x_{\beta_m}$ and $v^x_{\alpha}$ are on the same side of ${\cal C}^x_{\beta_1}$, it follows that $u^x_{\alpha}$ is on one side of ${\cal C}^x_{\beta_1}$ ({\em the small side} of ${\cal C}^x_{\beta_1}$), while $v^x_{\alpha}$, $u^y_{\alpha}$, and $v^y_{\alpha}$ are on the other side ({\em the large side} of ${\cal C}^x_{\beta_1}$). Analogously, the cycle ${\cal C}^x_{\beta_m}$ composed of $P^x_{\beta_m}$ and $e^x_{\beta_m}$ has $v^x_{\alpha}$ on one side ({\em the small side} of ${\cal C}^x_{\beta_m}$), and $u^x_{\alpha}$, $u^y_{\alpha}$, and $v^y_{\alpha}$ on the other side ({\em the large side} of ${\cal C}^x_{\beta_m}$). The small side of ${\cal C}^x_{\beta_1}$ and the small side of ${\cal C}^x_{\beta_m}$ are disjoint, as otherwise $P^x_{\beta_1}$ intersects ${\cal C}^x_{\beta_m}$, or $P^x_{\beta_m}$ intersects ${\cal C}^x_{\beta_1}$.

By Property~\ref{pr:no-two-edges-same-structure}, it is possible to draw a closed curve ${\cal C}^x$ passing through $u^x_\alpha$, $u^x_{\beta_1}$, $u^x_{\beta_m}$, $v^x_\alpha$, $v^x_{\beta_m}$, and $v^x_{\beta_1}$ in this circular order, containing edges $e^x_\alpha$, $e^x_{\beta_1}$, and $e^x_{\beta_m}$ in its interior, and containing every other con-edge for $\alpha$, $\beta_1$, and $\beta_m$ in its exterior. Now consider the cycle ${\cal D}$ composed of $P^x_{\beta_1}$, of $P^x_{\beta_m}$, of the path $P^x_u$ in ${\cal C}^x$ between $u^x_{\beta_1}$ and $u^x_{\beta_m}$ not containing $u^x_{\alpha}$, and of the path $P^x_v$ in ${\cal C}^x$ between $v^x_{\beta_1}$ and $v^x_{\beta_m}$ not containing $v^x_{\alpha}$. Cycle ${\cal D}$ contains vertices of $\alpha$ on both sides. Namely, it contains $u^x_{\alpha}$ and $v^x_{\alpha}$ on one side (the side of ${\cal D}$ containing the small side of ${\cal C}^x_{\beta_1}$ and the small side of ${\cal C}^x_{\beta_m}$), and $u^y_{\alpha}$ and $v^y_{\alpha}$ on the other side. However, no con-edge $g_{\alpha}$ for $\alpha$ in $S$ crosses ${\cal D}$. In fact, $g_{\alpha}$ cannot cross $P^x_{\beta_1}$ and $P^x_{\beta_m}$, as such paths are composed of con-edges in $S$, and it cannot cross $P^x_u$ and $P^x_v$ by construction of ${\cal C}^x$. It follows that $S$ does not connect $\alpha$, a contradiction to the fact that $S$ is a planar set of spanning trees for $A$.
\end{proof}

Consider a con-edge $e$ for a cluster $\alpha$. The {\em conflicting structure} $M(e)$ of $e$ is a sequence of sets $H_0(e), L_1(e), H_1(e), L_2(e), H_2(e), \dots$ of con-edges which correspond to the layers of a BFS traversal starting at $e$ of the connected component of $K_A$ containing $e$. That is: $H_0(e)=\{e\}$; then, for $i\geq 1$, $L_i(e)$ is the set of con-edges that cross con-edges in $H_{i-1}(e)$ and that are not in $L_{i-1}(e)$, and $H_i(e)$ is the set of con-edges that cross con-edges in $L_{i}(e)$ and that are not in $H_{i-1}(e)$.

We now study the conflicting structures of the spokes $e^1_\alpha,\dots,e^{k}_\alpha$ of the $\alpha$-donut for a con-edge $e_{\alpha}$ for $\alpha$. No two edges in a set $H_i(e_{\alpha})$ or in a set $L_i(e_{\alpha})$ have a conflict, as otherwise {\sc Test 2} would succeed. Also, by Lemma~\ref{le:one-or-the-other}, any planar set $S$ of spanning trees for $A$ contains either all the edges in $\bigcup_i H_i(e_{\alpha})$ or all the edges in $\bigcup_i L_i(e_{\alpha})$.

Assume that $e_{\alpha}$ has a conflict with at least two con-edges for other clusters. For any $1\leq i \leq k$, we say that $e^i_{\alpha}$ and $e^{i+1}_{\alpha}$ have {\em isomorphic conflicting structures} if $e^i_{\alpha}$ and $e^{i+1}_{\alpha}$ belong to isomorphic connected components of $K_A$ and if the vertices of these components that are in correspondence under the isomorphism represent con-edges for the same cluster. Formally, $e^i_{\alpha}$ and $e^{i+1}_{\alpha}$ have isomorphic conflicting structures if there exists a bijective mapping $\delta$ between the edges in $M(e^i_{\alpha})$ and the edges in $M(e^{i+1}_{\alpha})$ such that:
\begin{enumerate}
\item $e$ is a con-edge for a cluster $\varrho$ if and only if $\delta(e)$ is a con-edge for $\varrho$, for every $e\in M(e^i_{\alpha})$;
\item $e\in H_j(e^i_{\alpha})$ if and only if $\delta(e)\in H_j(e^{i+1}_{\alpha})$, for every $e\in M(e^i_{\alpha})$;
\item $e\in L_j(e^i_{\alpha})$ if and only if $\delta(e)\in L_j(e^{i+1}_{\alpha})$, for every $e\in M(e^i_{\alpha})$; and
\item $e\conf f$ if and only if $\delta(e)\conf \delta(f)$, for every $e,f\in M(e^i_{\alpha})$.
\end{enumerate}

Observe that the isomorphism of two conflicting structures can be tested efficiently.


We will prove in the following four lemmata that by examining the conflicting structures for the spokes of the $\alpha$-donut for $e_{\alpha}$, a decision on whether some spoke is or is not in $S$ can be taken without loss of generality. We start with the following:

\begin{lemma}[{\sc Simplification 5}]\label{le:isomorphic}
Suppose that spokes $e^i_{\alpha}$ and $e^{i+1}_{\alpha}$ have isomorphic conflicting structures. Then, there exists a planar set $S$ of spanning trees for $A$ if and only if there exists a planar set $S'$ of spanning trees for $A$ such that $e^i_{\alpha}\notin S'$.
\end{lemma}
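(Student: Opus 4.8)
The ``if'' direction is immediate, so the plan is to produce, from an arbitrary planar set $S$ of spanning trees for $A$, a planar set $S'$ with $e^i_{\alpha}\notin S'$. If $e^i_{\alpha}\notin S$, take $S'=S$; so assume $e^i_{\alpha}\in S$, which by Lemma~\ref{le:exactly-one} makes $e^i_{\alpha}$ the \emph{only} spoke of the $\alpha$-donut lying in $S$, hence $e^{i+1}_{\alpha}\notin S$. First I would record the shape of $S$ near the two spokes. Since $e^i_{\alpha}$ and $e^{i+1}_{\alpha}$ are con-edges for the same cluster, Property~\ref{pr:no-two-edges-same-structure} places them in distinct connected components $K^i$ and $K^{i+1}$ of $K_A$, with con-edge sets exactly $M(e^i_{\alpha})$ and $M(e^{i+1}_{\alpha})$; by hypothesis $\delta$ is an isomorphism between them preserving cluster labels and BFS layers, and since {\sc Test~2} fails both are bipartite, with bipartition $\{\bigcup_j H_j(e^i_{\alpha}),\bigcup_j L_j(e^i_{\alpha})\}$ and likewise for $K^{i+1}$. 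By Lemma~\ref{le:one-or-the-other} (as recalled before the lemma) $S$ restricted to any component of $K_A$ equals one side of its bipartition; since $e^i_{\alpha}\in S\cap H_0(e^i_{\alpha})$ this forces $S\cap M(e^i_{\alpha})=\bigcup_j H_j(e^i_{\alpha})$, and since $e^{i+1}_{\alpha}\notin S$ it forces $S\cap M(e^{i+1}_{\alpha})=\bigcup_j L_j(e^{i+1}_{\alpha})$.

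\noindent The candidate is the natural ``double flip'': $S':=\big(S\setminus(\bigcup_j H_j(e^i_{\alpha})\cup\bigcup_j L_j(e^{i+1}_{\alpha}))\big)\cup\bigcup_j L_j(e^i_{\alpha})\cup\bigcup_j H_j(e^{i+1}_{\alpha})$, so that indeed $e^i_{\alpha}\notin S'$. That $S'$ contains no two conflicting con-edges is the easy part: on each component of $K_A$ it equals one side of the bipartition, hence an independent set in $K_A$, and con-edges of different components never conflict. The substance is to show $S'[\varrho]$ is a spanning tree for every cluster $\varrho$. Let $\Delta$ be the set of clusters with a con-edge in $K^i$ (equivalently, by $\delta$, in $K^{i+1}$); for $\varrho\notin\Delta$ we have $S'[\varrho]=S[\varrho]$. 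For $\varrho\in\Delta$, Property~\ref{pr:no-two-edges-same-structure} gives a unique con-edge $e^i_{\varrho}$ for $\varrho$ in $K^i$ and a unique $e^{i+1}_{\varrho}=\delta(e^i_{\varrho})$ in $K^{i+1}$; since $\delta$ preserves layers while $S$ takes opposite sides of the two bipartitions, exactly one of them, $e^{\mathrm{in}}_{\varrho}$, lies in $S$ and the other, $e^{\mathrm{out}}_{\varrho}$, does not, so $S'[\varrho]=(S[\varrho]\setminus\{e^{\mathrm{in}}_{\varrho}\})\cup\{e^{\mathrm{out}}_{\varrho}\}$. The whole problem thus reduces to a single claim: $e^{\mathrm{out}}_{\varrho}$ reconnects the two components of the forest $S[\varrho]\setminus\{e^{\mathrm{in}}_{\varrho}\}$, equivalently $e^{\mathrm{in}}_{\varrho}$ lies on the fundamental cycle of $e^{\mathrm{out}}_{\varrho}$ with respect to $S[\varrho]$.

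\noindent For $\varrho=\alpha$ I would prove this claim by planar duality. By property~(c) of the donut, the spoke $e^l_{\alpha}$ is incident in $A[\alpha]$ to exactly the two faces bounded by ${\cal C}^l_{\alpha}$ and ${\cal C}^{l+1}_{\alpha}$ (distinct, as {\sc Simplification~1} deleted all bridges), so in the planar dual $A[\alpha]^*$ the dual spokes $(e^1_{\alpha})^*,\dots,(e^k_{\alpha})^*$ form a closed walk $W$ through the dual vertices ${\cal C}^1_{\alpha},\dots,{\cal C}^k_{\alpha}$. Now $A[\alpha]\setminus S[\alpha]$ dualizes to a spanning tree of $A[\alpha]^*$, and by Lemma~\ref{le:exactly-one} $e^i_{\alpha}$ is the only spoke in $S[\alpha]$, so $(e^i_{\alpha})^*$ is the only edge of $W$ outside that dual spanning tree; hence $W$ is the fundamental cycle of $(e^i_{\alpha})^*$, and $(e^{i+1}_{\alpha})^*\in W$ lies on it. By the standard cycle--cocycle duality this is precisely the statement that $e^i_{\alpha}$ lies on the fundamental cycle of $e^{i+1}_{\alpha}$ with respect to $S[\alpha]$, so $S'[\alpha]$ is a spanning tree.

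\noindent The main obstacle is the same claim for a general $\varrho\in\Delta$, where no ``$\varrho$-donut'' directly links $e^i_{\varrho}$ and $e^{i+1}_{\varrho}$. Here I would argue topologically, in the style of Lemmas~\ref{le:deviation}, \ref{le:same-order}, \ref{le:ababa} and \ref{le:exactly-one}: by the consequence of Property~\ref{pr:no-two-edges-same-structure} stated after Lemma~\ref{le:pssttm}, draw closed curves ${\cal C}^i$ and ${\cal C}^{i+1}$ enclosing exactly the con-edges of $K^i$, respectively $K^{i+1}$, and excluding every other con-edge for a cluster of $\Delta$; then the only con-edge for $\varrho$ inside ${\cal C}^i$ (resp.\ ${\cal C}^{i+1}$) is $e^i_{\varrho}$ (resp.\ $e^{i+1}_{\varrho}$), and the only vertices of $A[\varrho]$ on that curve are its endpoints. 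If $e^{\mathrm{out}}_{\varrho}$ failed to reconnect $S[\varrho]\setminus\{e^{\mathrm{in}}_{\varrho}\}$, the $S[\varrho]$-path $P$ joining the endpoints of $e^{\mathrm{out}}_{\varrho}$ would avoid $e^{\mathrm{in}}_{\varrho}$; I would then combine $P$, the two localization curves, the spoke $e^i_{\alpha}$ (which is in $S$ and hence crosses no con-edge of $S$), and the $\delta$-correspondence between $K^i$ and $K^{i+1}$ to build a closed curve that separates two vertices of some cluster while being crossed by no con-edge of $S$, contradicting that $S$ connects that cluster. I expect the precise choice of this separating curve, and the bookkeeping of which con-edges of $K^i$ and of $K^{i+1}$ pass from $S$ to $S'$, to be the delicate and lengthy part of the argument.
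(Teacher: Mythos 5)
Your skeleton matches the paper's: the ``double flip'' $S'$ obtained by swapping the $H$-side of $M(e^i_{\alpha})$ for its $L$-side and vice versa on $M(e^{i+1}_{\alpha})$, the observation that conflict-freeness of $S'$ is immediate from bipartiteness, and the reduction (via Property~\ref{pr:no-two-edges-same-structure} and layer-preservation of $\delta$) to the single claim that, for each affected cluster $\varrho$, the con-edge entering $S'$ reconnects the two components of $S[\varrho]$ minus the con-edge leaving it. For $\varrho=\alpha$ your tree--cotree duality argument (the dual spokes form a cycle, exactly one of them is a non-edge of the dual cotree, hence that cycle is its fundamental cycle) is correct and is a genuinely different, arguably cleaner, route than the paper's separating-set argument; it is a nice observation, though it only covers the easiest case.

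The gap is everything else. For every cluster $\varrho\neq\alpha$ in $\Delta$ you offer only a plan (``build a closed curve that separates two vertices of some cluster \dots I expect the precise choice of this separating curve \dots to be the delicate and lengthy part''), and this unexecuted plan is where essentially all of the work in the paper's proof lives. Two concrete reasons why the sketch as stated does not obviously go through. First, the argument is not uniform over $\Delta$: the paper must treat separately the clusters $\beta_\ell$ met at layer $L_1$ (where the cycle $P^{i+1}_{\alpha}\cup e^{i+1}_{\alpha}$, which necessarily contains $e^i_{\alpha}$ by the separating-set property of the spokes, does the job), the clusters with a con-edge $h^i_j\in H_j(e^i_{\alpha})$ for $j\geq 1$, and the clusters with a con-edge in $L_j(e^i_{\alpha})$ for $j\geq 2$; your single ``localization curves plus the spoke'' recipe only plausibly handles the $L_1$ case, since for deeper layers the spoke is not adjacent in $K_A$ to the edges involved. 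Second, and more importantly, for a cluster $\mu$ with $h^i_j\in H_j(e^i_{\alpha})$ the contradiction the paper derives is \emph{not} that $\mu$ itself is disconnected: assuming both end-vertices of $\delta(h^i_j)$ lie in one component of $S[\mu]\setminus\{h^i_j\}$, one builds two cycles ${\cal C}^{i+1}_{\mu}$ and ${\cal C}^{i+1}_{\tau}$ from $\delta(h^i_j)$ and $\delta(h^i_{j-1})$ together with their $S$-paths, shows their ``small sides'' are disjoint, and concludes that the \emph{third} cluster $\varrho$ owning the layer-$L_j$ edge $l^i_j$ with $h^i_{j-1}\conf l^i_j\conf h^i_j$ cannot be connected by $S$ (with a further case split on whether ${\cal C}^{i+1}_{\tau}$ contains $h^i_{j-1}$). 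Identifying which cluster gets disconnected, and why a vertex of it must lie outside both small sides, is the crux of the lemma; without it the proof is incomplete.
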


\begin{proof}
If there exists no planar set of spanning trees for $A$, there is nothing to prove. Otherwise, consider any planar set $S$ of spanning trees for $A$. If $e^i_{\alpha}\notin S$, there is nothing to prove. Otherwise, suppose that $e^i_{\alpha}\in S$. Since $S$ does not contain any two con-edges that have a conflict and by Lemma~\ref{le:one-or-the-other}, we have $\bigcup_j H_j(e^i_{\alpha})\subseteq S$ and $S\cap \bigcup_j L_j(e^i_{\alpha})=\emptyset$.

By Lemma~\ref{le:exactly-one}, exactly one of $e^1_\alpha,\dots,e^{k}_\alpha$ belongs to any planar set $S$ of spanning trees for $A$. Hence, $e^{i+1}_{\alpha}\notin S$. Since $S$ does not contain any two con-edges that have a conflict and by Lemma~\ref{le:one-or-the-other}, we have $\bigcup_j L_j(e^{i+1}_{\alpha})\subseteq S$ and $S\cap \bigcup_j H_j(e^{i+1}_{\alpha})=\emptyset$.

Consider the set $S'\subseteq A$ of con-edges obtained from $S$ by removing $\bigcup_j H_j(e^i_{\alpha})$ and $\bigcup_j L_j(e^{i+1}_{\alpha})$ and by adding $\bigcup_j L_j(e^i_{\alpha})$ and $\bigcup_j H_j(e^{i+1}_{\alpha})$. We claim that $S'$ is a planar set of spanning trees for $A$. The claim directly implies the lemma.

First, we prove that no two con-edges in $S'$ have a conflict. Since $S$ is a planar set of spanning trees for $A$, no two con-edges in $S'\cap S$ have a conflict. Consider any con-edge $e\in L_j(e^i_{\alpha})$, for some $j\geq 1$, and consider any con-edge $g\in S'$. If $g$ does not belong to $M(e^i_{\alpha})$, then $e$ and $g$ do not cross, since $e$ and $e^i_{\alpha}$ belong to the same connected component of $K_A$. Further, if $g$ belongs to $M(e^i_{\alpha})$ and does not belong to $H_{j-1}(e^i_{\alpha})$ or to $H_{j}(e^i_{\alpha})$, then $e$ and $g$ do not cross, by definition of conflicting structure. Finally, $g$ does not belong to $H_{j-1}(e^i_{\alpha})$ or to $H_{j}(e^i_{\alpha})$, given that $g\in S'$. It can be analogously proved that no edge $e \in H_j(e^{i+1}_{\alpha})$, for some $j\geq 1$, crosses any con-edge $g\in S'$.

Second, we prove that, for each cluster $\mu$, the graph induced by the con-edges in $S'[\mu]$ is a tree that spans the vertices in $\mu$. This is trivially proved for every cluster $\mu$ that has no con-edge in $M(e^i_{\alpha})$, given that in this case $S'[\mu]=S[\mu]$. Moreover, by Property~\ref{pr:no-two-edges-same-structure} and since $M(e^i_{\alpha})$ and $M(e^{i+1}_{\alpha})$ are isomorphic, each cluster $\mu$ having a con-edge in $M(e^i_{\alpha})$ has {\em exactly} one con-edge $e^i_{\mu}$ in $M(e^i_{\alpha})$ and one con-edge $e^{i+1}_{\mu}$ in $M(e^{i+1}_{\alpha})$. By the construction of $S'$ and by the fact that if $e^i_{\mu}$ is in $H_j(e^i_{\alpha})$ (in $L_j(e^i_{\alpha})$), then $e^{i+1}_{\mu}$ is in $H_j(e^{i+1}_{\alpha})$ (resp.\ in $L_j(e^{i+1}_{\alpha})$), it follows that $S'[\mu]$ is obtained from $S[\mu]$ by removing $e^r_{\mu}$ and by adding $e^a_{\mu}$, for some distinct $a,r\in\{i,i+1\}$. Since $S[\mu]$ induces a spanning tree of the vertices in $\mu$, in order to prove that $S'[\mu]$ induces a spanning tree of the vertices in $\mu$ it suffices to prove that the end-vertices of $e^a_{\mu}$ belong to distinct connected components of $S[\mu]\setminus \{e^r_{\mu}\}$. In the following we prove this statement.

For $1\leq j\leq k$, denote by $u^j_\alpha$ and $v^j_\alpha$ the end-vertices of $e^j_\alpha$; assume w.l.o.g. that edges $e^j_{\beta_1},e^j_{\beta_2},\dots,e^j_{\beta_m}$ are crossed in this order when traversing $e^j_\alpha$ from $u^j_\alpha$ to $v^j_\alpha$. Denote by $u^j_{\beta_\ell}$ and $v^j_{\beta_\ell}$ the end-vertices of $e^j_{\beta_\ell}$, for every $1\leq j\leq k$ and $1\leq \ell\leq m$.

\begin{itemize}
\item We start with cluster $\alpha$. Denote by $S^1[\alpha]$ and $S^2[\alpha]$ the two connected components of $S[\alpha]$ obtained by removing $e^i_\alpha$ from $S[\alpha]$. Since $e^1_\alpha,\dots,e^{k}_\alpha$ are a separating set for $S[\alpha]$, since $e^i_\alpha$ is the only edge among $e^1_\alpha,\dots,e^{k}_\alpha$ that belongs to $S$ (by assumption and by Lemma~\ref{le:exactly-one}), it follows that $u^{i+1}_\alpha$ and $v^{i+1}_\alpha$ are one in $S^1[\alpha]$ and the other one in $S^2[\alpha]$.


\item We now deal with cluster $\beta_\ell$, for any $1\leq \ell\leq m$. Denote by $S^1[\beta_\ell]$ and $S^2[\beta_\ell]$ the two connected components of $S[\beta_\ell]$ obtained by removing $e^{i+1}_{\beta_\ell}$ from $S[\beta_\ell]$. In the following we prove that $u^i_{\beta_\ell}$ and $v^i_{\beta_\ell}$ are one in $S^1[\beta_\ell]$ and the other one in $S^2[\beta_\ell]$.

Suppose, for a contradiction, that both $u^i_{\beta_\ell}$ and $v^i_{\beta_\ell}$ are in $S^1[\beta_\ell]$. Then, there exists a path $P^i_{\beta_\ell}$ between $u^i_{\beta_\ell}$ and $v^i_{\beta_\ell}$ all of whose edges belong to $S^1[\beta_\ell]$. See Fig.~\ref{fig:distinct-components-2}.

\begin{figure}[tb]
\begin{center}
\mbox{\includegraphics[scale=0.38]{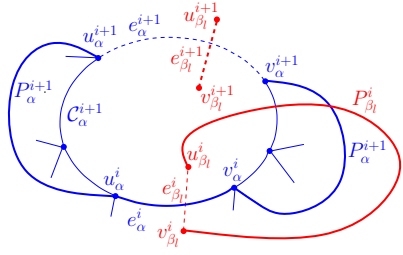}}
\caption{Illustration for the proof that the end-vertices of $e^{i}_{\beta_\ell}$ belong to distinct connected components of $S[\alpha]\setminus \{e^{i+1}_{\beta_\ell}\}$.}
\label{fig:distinct-components-2}
\end{center}
\end{figure}

Since $e^1_\alpha,\dots,e^{k}_\alpha$ are a separating set for $S[\alpha]$, since $e^i_\alpha\in S$, and since $e^{i+1}_\alpha\notin S$, it follows that the path $P^{i+1}_{\alpha}$ connecting $u^{i+1}_\alpha$ and $v^{i+1}_\alpha$ in $S$ contains edge $e^i_\alpha$. Since edges $e^i_\alpha$ and $e^{i}_{\beta_\ell}$ have a conflict, it follows that vertices $u^i_{\beta_\ell}$ and $v^i_{\beta_\ell}$ are on different sides of cycle $P^{i+1}_{\alpha}\cup e^{i+1}_{\alpha}$, hence path $P^i_{\beta_\ell}$ crosses cycle $P^{i+1}_{\alpha}\cup e^{i+1}_{\alpha}$ by the Jordan curve theorem. However, $P^i_{\beta_\ell}$ cannot cross $P^{i+1}_{\alpha}$, as all the edges of such paths belong to $S$, by assumption; moreover, $P^i_{\beta_\ell}$ cannot cross $e^{i+1}_{\alpha}$, as by Property~\ref{pr:no-two-edges-same-structure} this would imply that $P^i_{\beta_\ell}$ contains $e^{i+1}_{\beta_\ell}$, contradicting the fact that such an edge does not belong to $S^1[\beta_\ell]\cup S^2[\beta_\ell]$.


\item We now deal with any cluster $\mu$ such that there exists a con-edge $h^i_j$ for $\mu$ in $H_j(e^i_{\alpha})$, for some $j\geq 1$. Observe that $h^i_j\in S$. Denote by $l^i_j$ any con-edge in $L_j(e^i_{\alpha})$ such that $l^i_j \conf h^i_{j}$ and by $h^i_{j-1}$ any con-edge in $H_{j-1}(e^i_{\alpha})$ such that $h^i_{j-1} \conf l^i_j$; these edges exist by definition of conflicting structure and since $j\geq 1$. Denote by $S^1[\mu]$ and $S^2[\mu]$ the two connected components of $S[\mu]$ obtained by removing $h^i_j$ from $S[\mu]$. In the following we prove that the end-vertices of $\delta(h^i_j)$ are one in $S^1[\mu]$ and the other one in $S^2[\mu]$.

    Denote by $\varrho$ and $\tau$ the clusters $l^i_j$ and $h^i_{j-1}$ are con-edges for. Since $e^i_\alpha$ and $e^{i+1}_\alpha$ have isomorphic conflicting structures, we have that $\delta(h^i_j)$, $\delta(l^i_j)$, and $\delta(h^i_{j-1})$ are con-edges for $\mu$, $\varrho$, and $\tau$, respectively. Also, by assumption, $h^i_j$, $h^i_{j-1}$, and $\delta(l^i_{j})$ belong to $S$, while $l^i_j$, $\delta(h^i_{j-1})$, and $\delta(h^i_{j})$ do not.

    Suppose, for a contradiction, that both the end-vertices of $\delta(h^i_j)$ are in $S^1[\mu]$. The end-vertices of $h^i_j$ are one in $S^1[\mu]$ and the other one in $S^2[\mu]$, given that $S[\mu]$ is a tree and that $S^1[\mu]$ and $S^2[\mu]$ are obtained from $S[\mu]$ by removing edge $h^i_j$. Refer to Fig.~\ref{fig:distinct-components-3}.

    First, consider the path $P^{i+1}_{\mu}$ connecting the end-vertices of $\delta(h^i_j)$ and all of whose edges belong to $S^1[\mu]$. By assumption, $S^1[\mu]$ (and hence $P^{i+1}_{\mu}$) does not contain $h^i_j$ nor $\delta(h^i_{j})$. Then, consider the cycle ${\cal C}^{i+1}_{\mu}$ composed of $P^{i+1}_{\mu}$ and of $\delta(h^i_j)$. We have that con-edges $h^i_{j-1}$ and $\delta(h^i_{j-1})$ for $\tau$ do not cross ${\cal C}^{i+1}_{\mu}$ by Property~\ref{pr:no-two-edges-same-structure}; in fact, $h^i_{j-1}$ ($\delta(h^i_{j-1})$) belongs to the same connected component of $K_A$ as $h^i_{j}$ (resp.\  $\delta(h^i_{j})$) and it does not cross $h^i_{j}$ (resp.\ $\delta(h^i_{j})$), as otherwise {\sc Test 2} would succeed on $A$, hence it does not cross any con-edge for $\mu$. Also, we have that no con-edge $g_{\tau}$ for $\tau$ in $S$ crosses ${\cal C}^{i+1}_{\mu}$. In fact, $g_{\tau}$ cannot cross $P^{i+1}_{\mu}$, as such a path is composed of con-edges in $S$, and it cannot cross $\delta(h^i_{j})$ by Property~\ref{pr:no-two-edges-same-structure}, since $\delta(h^i_{j})$ and con-edge $\delta(h^i_{j-1})$ for $\tau$ belong to the same connected component of $K_A$ and do not cross. These observations immediately lead to a contradiction in the case in which $h^i_{j-1}$ and $\delta(h^i_{j-1})$ are on different sides of ${\cal C}^{i+1}_{\mu}$, as in such a case no path whose edges belong to $S$ can connect an end-vertex of $h^i_{j-1}$ with an end-vertex of $\delta(h^i_{j-1})$ without crossing ${\cal C}^{i+1}_{\mu}$. Hence, assume that $h^i_{j-1}$ and $\delta(h^i_{j-1})$ are on the same side of ${\cal C}^{i+1}_{\mu}$ (call it {\em the large side} of ${\cal C}^{i+1}_{\mu}$). Call {\em the small side} of ${\cal C}^{i+1}_{\mu}$ the side of ${\cal C}^{i+1}_{\mu}$ that does not contain $h^i_{j-1}$ and $\delta(h^i_{j-1})$.

\begin{figure}[tb] 
\begin{center}
\begin{tabular}{c c}
\mbox{\includegraphics[scale=0.38]{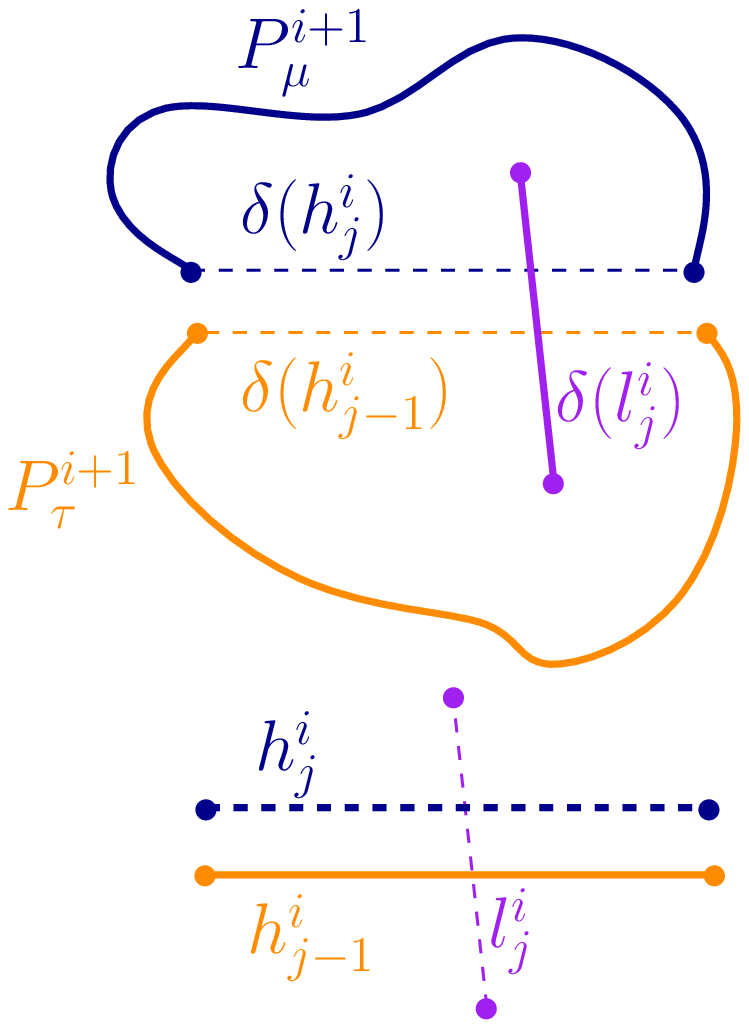}} \hspace{3mm} &
\mbox{\includegraphics[scale=0.38]{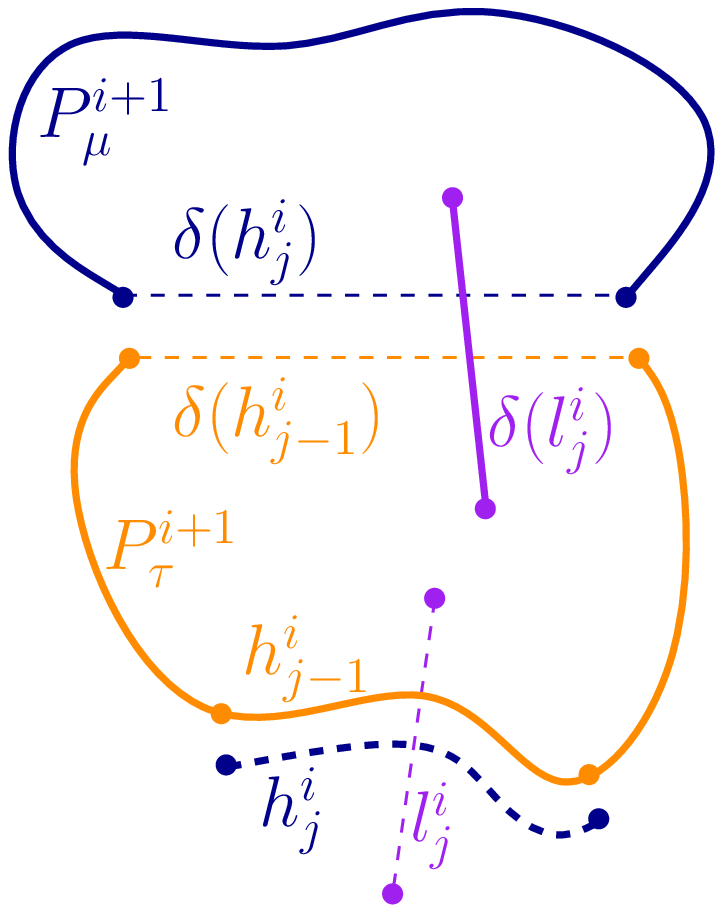}}\\
(a) \hspace{3mm} & (b)
\end{tabular}
\caption{Illustration for the proof that the end-vertices of $\delta(h^i_j)$ are one in $S^1[\mu]$ and the other one in $S^2[\mu]$.}
\label{fig:distinct-components-3}
\end{center}
\end{figure}

Next, consider the path $P^{i+1}_{\tau}$ connecting the end-vertices of $\delta(h^i_{j-1})$ and all of whose edges belong to $S$. Observe that $P^{i+1}_{\tau}$ does not coincide with $\delta(h^i_{j-1})$, given that $\delta(h^i_{j-1})\notin S$. (Observe that $P^{i+1}_{\tau}$ can possibly contain edge $h^i_{j-1}$, asymmetrically to  $P^{i+1}_{\mu}$ that does not contain $h^i_{j}$.)  Then, consider the cycle ${\cal C}^{i+1}_{\tau}$ composed of $P^{i+1}_{\tau}$ and of $\delta(h^i_{j-1})$. Analogously as for ${\cal C}^{i+1}_{\mu}$, it can be concluded that $h^i_{j}$ and $\delta(h^i_{j})$ are on the same side of ${\cal C}^{i+1}_{\tau}$ (call it {\em the large side} of ${\cal C}^{i+1}_{\tau}$). Call {\em the small side} of ${\cal C}^{i+1}_{\tau}$ the side of ${\cal C}^{i+1}_{\tau}$ that does not contain $h^i_{j}$ and $\delta(h^i_{j})$. Hence, ${\cal C}^{i+1}_{\mu}$ is in the large side of ${\cal C}^{i+1}_{\tau}$ and ${\cal C}^{i+1}_{\tau}$ is in the large side of ${\cal C}^{i+1}_{\mu}$, thus the small sides of  ${\cal C}^{i+1}_{\mu}$ and ${\cal C}^{i+1}_{\tau}$ have disjoint interiors.

Now, consider edge $\delta(l^i_j)$. Since it crosses $\delta(h^i_{j})$, then one of its end-vertices is in the small side of ${\cal C}^{i+1}_{\mu}$; also, since it crosses $\delta(h^i_{j-1})$, the other end-vertex is in the small side of ${\cal C}^{i+1}_{\tau}$. Hence, in order to obtain a contradiction, it suffices to prove that there exists a vertex $v_{\varrho}$ of $\varrho$ that is neither in the small side of  ${\cal C}^{i+1}_{\mu}$ nor in the small side of ${\cal C}^{i+1}_{\tau}$ (that is, $v_{\varrho}$ is simultaneously in the large side of  ${\cal C}^{i+1}_{\mu}$ and in the large side of  ${\cal C}^{i+1}_{\tau}$). In fact, if that is the case, then no path whose edges belong to $S$ can connect $v_{\varrho}$ with the end-vertices of $\delta(l^i_j)$, given that no con-edge for $\varrho$ in $S$ can cross an edge of $P^{i+1}_{\mu}\cup P^{i+1}_{\tau}$, hence $S$ does not connect $\varrho$, a contradiction.

We claim that at least one of the end-vertices of $l^i_j$ is simultaneously in the large side of  ${\cal C}^{i+1}_{\mu}$ and in the large side of  ${\cal C}^{i+1}_{\tau}$. First, observe that both the end-vertices of $l^i_j$ are in the large side of ${\cal C}^{i+1}_{\mu}$. In fact, $h^i_{j-1}$ is in the large side of ${\cal C}^{i+1}_{\mu}$, by assumption, and hence all of $h^i_{j-1}$, $h^i_{j}$, and $l^i_{j}$ are in the large side of ${\cal C}^{i+1}_{\mu}$, given that ${\cal C}^{i+1}_{\mu}$ does not contain $h^i_j$. Analogously, if ${\cal C}^{i+1}_{\tau}$ does not contain $h^i_{j-1}$ (as in Fig.~\ref{fig:distinct-components-3}(a)), then all of $h^i_{j-1}$, $h^i_{j}$, and $l^i_{j}$ are in the large side of ${\cal C}^{i+1}_{\tau}$; on the other hand, if ${\cal C}^{i+1}_{\tau}$ contains $h^i_{j-1}$ (as in Fig.~\ref{fig:distinct-components-3}(b)), then $l^i_{j}$ crosses ${\cal C}^{i+1}_{\tau}$, hence one of its end-vertices is in the small side of ${\cal C}^{i+1}_{\tau}$ and the other end-vertex is in the large side of ${\cal C}^{i+1}_{\tau}$. This proves the claim and hence the statement.


\item It remains to deal with any cluster $\mu$ such that there exists a con-edge $l^i_j$ for $\mu$ in $L_j(e^i_{\alpha})$, for some $j\geq 2$. Observe that $l^i_j\notin S$, while $\delta(l^i_j)\in S$. Denote by $h^i_{j-1}$ any con-edge in $H_{j-1}(e^i_{\alpha})$ such that $h^i_{j-1} \conf l^i_{j}$ and by $l^i_{j-1}$ any con-edge in $L_{j-1}(e^i_{\alpha})$ such that $l^i_{j-1} \conf h^i_{j-1}$. All these edges exist by definition of conflicting structure and since $j\geq 2$. Denote by $S^1[\mu]$ and $S^2[\mu]$ the two connected components of $S[\mu]$ obtained by removing $\delta(l^i_j)$ from $S[\mu]$. The following statement can be proved: The end-vertices of $l^i_j$ are one in $S^1[\mu]$ and the other one in $S^2[\mu]$. The proof is the same as the one for the case in which there exists a con-edge $h^i_j\in H_j(e^i_{\alpha})$ for $\mu$ in $S$, for some $j\geq 1$, with $l^i_j$, $h^i_{j-1}$, $l^i_{j-1}$, $\delta(l^i_j)$, $\delta(h^i_{j-1})$, and $\delta(l^i_{j-1})$ playing the roles of $\delta(h^i_{j})$, $\delta(l^i_j)$, $\delta(h^i_{j-1})$, $h^i_{j}$, $l^i_j$, and $h^i_{j-1}$, respectively.
\end{itemize}

This concludes the proof of the lemma. Together with Lemma~\ref{le:one-or-the-other}, it establishes the correctness of {\sc Simplification 5}.
\end{proof}


Next, we study non-isomorphic spokes. Let $e^i_\alpha$ be a spoke of the $\alpha$-donut for $e_{\alpha}$. Assume that $L_1(e^i_\alpha)$ contains a con-edge $e^i_{\beta}$ for a cluster $\beta$, and that $H_1(e^i_\alpha)$ contains a con-edge $e^i_{\gamma}$ for a cluster $\gamma$, where $e^i_\alpha\conf e^i_{\beta}$ and $e^i_{\beta} \conf e^i_{\gamma}$. By Property~\ref{pr:no-two-edges-same-structure}, since $e^i_\gamma$ and $e^i_\alpha$ belong to the same connected component of $K_A$ and do not cross (as otherwise {\sc Test 2} would succeed), it follows that $e^i_\gamma$ does not cross any con-edge for $\alpha$, hence it lies in one of the two faces $f^{i}_\alpha$ and $f^{i+1}_\alpha$ of $A[\alpha]$ that $e^i_\alpha$ shares with spokes $e^{i-1}_\alpha$ and $e^{i+1}_\alpha$, respectively. Assume w.l.o.g. that $e^i_\gamma$ lies in $f^{i+1}_\alpha$. By Lemma~\ref{le:donut}, $L_1(e^{i+1}_\alpha)$ contains a con-edge $e^{i+1}_{\beta}$ for $\beta$, where $e^{i+1}_\alpha\conf e^{i+1}_{\beta}$.

The next two lemmata discuss the case in which $M(e^{i+1}_\alpha)$ contains a con-edge for $\gamma$ that has a conflict with $e^{i+1}_{\beta}$ and the case in which it does not. We start with the latter.

\begin{lemma}[{\sc Simplification 6}] \label{le:non-isomorphic-differentT1}
Suppose that no con-edge $e^{i+1}_{\gamma}$ for $\gamma$ exists such that $e^{i+1}_{\gamma} \conf e^{i+1}_{\beta}$, and that a planar set $S$ of spanning trees for $A$ exists. Then, $e^i_\alpha\in S$.
\end{lemma}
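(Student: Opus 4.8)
The plan is to argue by contradiction: suppose that a planar set $S$ of spanning trees for $A$ exists with $e^i_\alpha\notin S$, and derive a contradiction from the interaction between the con-edge $e^i_\gamma\in H_1(e^i_\alpha)$ lying in the face $f^{i+1}_\alpha$ and the absence of any con-edge for $\gamma$ crossing $e^{i+1}_\beta$. Since $e^i_\alpha\notin S$, Lemma~\ref{le:one-or-the-other} together with the fact that $S$ contains no two conflicting con-edges forces $\bigcup_j L_j(e^i_\alpha)\subseteq S$ and $S\cap\bigcup_j H_j(e^i_\alpha)=\emptyset$; in particular $e^i_\beta\in S$ and $e^i_\gamma\notin S$. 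By Lemma~\ref{le:exactly-one} exactly one spoke of the $\alpha$-donut is in $S$, so $e^{i+1}_\alpha\notin S$ as well, whence $e^{i+1}_\beta\in S$ and every con-edge of $M(e^{i+1}_\alpha)$ in an $H$-layer is outside $S$.

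The key geometric step is to exploit the position of $e^i_\gamma$. First I would invoke the consequence of Property~\ref{pr:no-two-edges-same-structure} stated before Section~\ref{se:outline} to draw a closed curve through the end-vertices of $e^i_\alpha,e^i_\beta,e^i_\gamma$ (in the same cyclic order as in the proofs of Lemmas~\ref{le:deviation}--\ref{le:ababa}) that isolates these three con-edges from all other con-edges for $\alpha,\beta,\gamma$; since $e^i_\gamma$ lies in $f^{i+1}_\alpha$, this curve, together with the spoke $e^{i+1}_\alpha$ and the facial cycle ${\cal C}^{i+1}_\alpha$, separates the plane so that the end-vertices of $e^i_\gamma$ lie on opposite sides of the cycle $P\cup e^{i+1}_\alpha$, where $P$ is the $S$-path joining the end-vertices of $e^{i+1}_\alpha$ (this path exists and passes through $e^i_\alpha$ because $e^{i+1}_\alpha\notin S$ while $e^i_\alpha\in\bigcup_j L_j(e^i_\alpha)\subseteq S$, and the spokes form a separating set of $S[\alpha]$). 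Now consider the $S$-path $P_\beta$ joining the end-vertices of $e^{i+1}_\beta$. Since $e^{i+1}_\beta\conf e^{i+1}_\alpha$ and $e^{i+1}_\beta\in S$ while $e^{i+1}_\alpha\notin S$, the cycle $P_\beta\cup e^{i+1}_\alpha$ — or alternatively a cycle formed from $P_\beta$ and a suitable $S$-path through $e^i_\alpha$ — encloses exactly one side. The hypothesis that no con-edge for $\gamma$ crosses $e^{i+1}_\beta$ means that the whole $S$-subgraph $S[\gamma]$ must lie on a single side of the cycle ${\cal C}_\beta=P_\beta\cup e^{i+1}_\beta$: indeed $S[\gamma]$ cannot cross $P_\beta$ (its edges are in $S$) and cannot cross $e^{i+1}_\beta$ (no $\gamma$-con-edge crosses it, and by Property~\ref{pr:no-two-edges-same-structure} nothing in the component of $e^{i+1}_\beta$ in $K_A$ other than $e^{i+1}_\beta$ itself can). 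But $e^i_\gamma$ has its two end-vertices separated by ${\cal C}_\beta$ — this is where the location of $e^i_\gamma$ in $f^{i+1}_\alpha$ and its conflict with $e^i_\beta$ (which forces one end-vertex ``inside'' and one ``outside'' relative to the spoke $e^{i+1}_\alpha$ and hence relative to ${\cal C}_\beta$) is used — so $S[\gamma]$ cannot connect the two end-vertices of $e^i_\gamma$ without crossing ${\cal C}_\beta$, contradicting that $S[\gamma]$ spans all vertices of $\gamma$.

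The main obstacle I anticipate is making the separation claim in the previous paragraph fully rigorous: one has to pin down precisely which side of ${\cal C}_\beta$ each relevant vertex ($u^i_\gamma$, $v^i_\gamma$, the end-vertices of $e^i_\alpha$, and a ``far'' $\gamma$-vertex such as an end-vertex of $e^{i+1}_\gamma$'s analogue or simply another vertex of $\gamma$ outside the donut region) lies on, and this bookkeeping depends on the cyclic orders guaranteed by the $\alpha$-donut conditions (b), (d), (e), (f) in Lemma~\ref{le:donut} and on the fact that $e^i_\gamma$ sits in $f^{i+1}_\alpha$ rather than $f^i_\alpha$. I would handle this the same way the proofs of Lemmas~\ref{le:same-order}--\ref{le:ababa} do: fix the isolating closed curves via Property~\ref{pr:no-two-edges-same-structure}, name the cyclic order of the six relevant end-vertices, identify a ``small side'' and a ``large side'' of ${\cal C}_\beta$, check that the small side contains exactly one end-vertex of $e^i_\gamma$ while some other vertex of $\gamma$ lies on the large side, and then conclude that $S[\gamma]$ is disconnected. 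Once the sides are correctly identified, the contradiction is immediate and the lemma follows; combining it with Lemma~\ref{le:one-or-the-other} (so that the alternative ``$\bigcup_j H_j(e^i_\alpha)\subseteq S$'' is what remains) gives exactly the conclusion $e^i_\alpha\in S$. $\Box$
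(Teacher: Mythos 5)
There is a genuine gap — in fact several linked errors that make the argument unsound as written. First, ``by Lemma~\ref{le:exactly-one} exactly one spoke of the $\alpha$-donut is in $S$, so $e^{i+1}_\alpha\notin S$ as well'' is a non sequitur: having assumed $e^i_\alpha\notin S$, the unique spoke that \emph{is} in $S$ could perfectly well be $e^{i+1}_\alpha$, and nothing in the layer structure rules this out ($e^{i+1}_\alpha$ does not even belong to $M(e^i_\alpha)$, by Property~\ref{pr:no-two-edges-same-structure}). Second, your justification that the $S$-path $P$ joining the end-vertices of $e^{i+1}_\alpha$ ``passes through $e^i_\alpha$ because \dots $e^i_\alpha\in\bigcup_j L_j(e^i_\alpha)\subseteq S$'' is false: $e^i_\alpha$ is the sole element of $H_0(e^i_\alpha)$, not of any $L_j$, and under your contradiction hypothesis $e^i_\alpha\notin S$, so $P$ cannot contain it. Third, the cycle ${\cal C}_\beta=P_\beta\cup e^{i+1}_\beta$ is degenerate under your own claim that $e^{i+1}_\beta\in S$: since $S[\beta]$ is a tree, the $S$-path between the end-vertices of a tree edge is that edge itself; the paper only ever builds such cycles from con-edges \emph{not} in $S$. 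Finally, the separation step is asserted rather than proved: $e^i_\gamma$ conflicts with $e^i_\beta$, not with $e^{i+1}_\beta$, so for ${\cal C}_\beta$ to separate the end-vertices of $e^i_\gamma$ you would need $e^i_\beta$ to lie on $P_\beta$, which you never establish.

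For comparison, the paper's proof runs in a different direction and never needs $e^{i+1}_\alpha\notin S$. From $e^i_\alpha\notin S$ it gets $e^i_\beta\in S$ and $e^i_\gamma\notin S$, forms the cycle ${\cal C}^i_\gamma$ from $e^i_\gamma$ and its $S$-path, and shows this cycle crosses no con-edge for $\alpha$ (hence lies entirely inside $f^{i+1}_\alpha$) and does not cross $e^{i+1}_\beta$. It then forms ${\cal C}^i_\alpha$ from $e^i_\alpha$ and its $S$-path, which contains the unique spoke $e^j_\alpha\in S$ for some arbitrary $j\neq i$. The contradiction is about $\beta$, not $\gamma$: one end-vertex of $e^i_\beta$ is trapped in the small side of ${\cal C}^i_\alpha$ and the other in the small side of ${\cal C}^i_\gamma$, while an end-vertex of $e^j_\beta$ lies in both large sides, so $S$ cannot connect $\beta$. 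The hypothesis that no con-edge for $\gamma$ crosses $e^{i+1}_\beta$ enters only at the point where one rules out $j=i+1$ to show that $e^j_\beta$ does not cross $P^i_\gamma$. Your sketch is missing these structural facts (the confinement of ${\cal C}^i_\gamma$ to $f^{i+1}_\alpha$, the role of the unknown spoke $e^j_\alpha\in S$), and the contradiction you aim for (disconnecting $\gamma$) is not the one the hypothesis supports.
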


\begin{proof}
Suppose, for a contradiction, that a planar set $S$ of spanning trees for $A$ exists with  $e^i_\alpha \notin S$. Since no two conflicting edges both belong to $S$ and by Lemma~\ref{le:one-or-the-other}, we have $e^i_\beta \in S$ and $e^i_\gamma \notin S$. Refer to Fig.~\ref{fig:non-isomorphic-one}.

\begin{figure}[tb]
\begin{center}
\mbox{\includegraphics[scale=0.38]{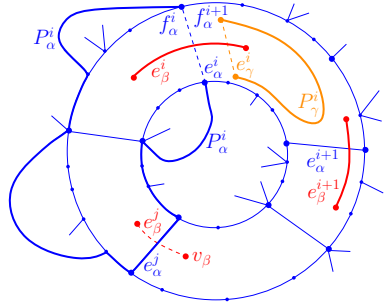}}
\caption{Illustration for the proof of Lemma~\ref{le:non-isomorphic-differentT1}.}
\label{fig:non-isomorphic-one}
\end{center}
\end{figure}

Consider the path $P^{i}_{\gamma}$ connecting the end-vertices of $e^i_\gamma$ and all of whose edges belong to $S$. By assumption, $P^{i}_{\gamma}$ does not coincide with $e^i_\gamma$. Then, consider the cycle ${\cal C}^{i}_{\gamma}$ composed of $P^{i}_{\gamma}$ and of $e^i_\gamma$. We have that ${\cal C}^{i}_{\gamma}$ does not cross any con-edge for $\alpha$ (including those not in $S$). Indeed, suppose that ${\cal C}^{i}_{\gamma}$ crosses a con-edge $e^*_{\alpha}$ for $\alpha$. Then, it contains vertices of $\alpha$ on both sides (e.g., the end-vertices of $e^*_{\alpha}$). However, no con-edge $g_{\alpha}$ for $\alpha$ in $S$ crosses ${\cal C}^{i}_{\gamma}$. In fact, $g_{\alpha}$ cannot cross $P^{i}_{\gamma}$, as such a path is composed of con-edges in $S$, and it cannot cross $e^i_\gamma$, as $e^i_\gamma$ and $e^i_\alpha$ belong to the same connected component of $K_A$ and do not cross; it follows that $S$ does not connect $\alpha$, a contradiction. We can hence assume that all the edges of ${\cal C}^{i}_{\gamma}$ lie inside a single face of $A[\alpha]$. Since $e^i_\gamma$ lies inside $f^{i+1}_{\alpha}$, we have that all the edges of ${\cal C}^{i}_{\gamma}$ lie inside $f^{i+1}_{\alpha}$. We emphasize that ${\cal C}^{i}_{\gamma}$ does not cross $e^{i+1}_{\beta}$, given that the latter crosses no con-edge for $\gamma$, by assumption.

Next, consider the path $P^{i}_{\alpha}$ connecting the end-vertices of $e^i_\alpha$ and all of whose edges belong to $S$. By assumption, $P^{i}_{\alpha}$ does not coincide with $e^i_\alpha$. By Lemma~\ref{le:exactly-one} and since $e^1_\alpha,\dots,e^{k}_\alpha$ form a separating set for $A[\alpha]$, we have that $P^{i}_{\alpha}$ contains exactly one of $e^1_\alpha,\dots,e^{k}_\alpha$, say $e^j_\alpha$ with $j\neq i$. Denote by $e^j_\beta$ the con-edge for $\beta$ that has a conflict with $e^j_\alpha$. This edge exists by Lemma~\ref{le:donut}. Also, denote by ${\cal C}^{i}_{\alpha}$ the cycle $P^{i}_{\alpha}\cup e^i_\alpha$.

Cycles ${\cal C}^{i}_{\alpha}$ and ${\cal C}^{i}_{\gamma}$ do not cross, since ${\cal C}^{i}_{\gamma}$ lies inside $f^{i+1}_{\alpha}$. Then, call the {\em large side} of ${\cal C}^{i}_{\alpha}$ the side that contains all the edges of ${\cal C}^{i}_{\gamma}$ (call the other side of ${\cal C}^{i}_{\alpha}$ its {\em small side}); also call the {\em large side} of ${\cal C}^{i}_{\gamma}$ the side that contains all the edges of ${\cal C}^{i}_{\alpha}$ (call the other side of ${\cal C}^{i}_{\gamma}$ its {\em small side}). Thus, thus the small sides of ${\cal C}^{i}_{\gamma}$ and ${\cal C}^{i}_{\alpha}$ have disjoint interiors.

Now, consider edge $e^i_{\beta}$. Since it crosses $e^i_{\alpha}$, one of its end-vertices is in the small side of ${\cal C}^{i}_{\alpha}$; also, since it crosses $e^{i}_{\gamma}$, the other end-vertex is in the small side of ${\cal C}^{i}_{\gamma}$. Hence, in order to obtain a contradiction, it suffices to prove that there exists a vertex $v_{\beta}$ of $\beta$ that is simultaneously in the large side of ${\cal C}^{i}_{\alpha}$ and in the large side of ${\cal C}^{i}_{\gamma}$. In fact, if that is the case, then no path whose edges belong to $S$ can connect $v_{\beta}$ with the end-vertices of $e^i_{\beta}$, given that no con-edge for $\beta$ in $S$ can cross an edge of $P^{i}_{\alpha} \cup P^{i}_{\gamma}$; thus, $S$ does not connect $\beta$, a contradiction.

We claim that one of the end-vertices of $e^j_\beta$ is simultaneously in the large side of ${\cal C}^{i}_{\alpha}$ and in the large side of ${\cal C}^{i}_{\gamma}$. First, we prove that both the end-vertices of $e^j_\beta$ are in the large side of ${\cal C}^{i}_{\gamma}$. Namely, since ${\cal C}^{i}_{\gamma}$ does not cross any con-edge for $\alpha$, the end-vertices of $e^j_\alpha$ are both in the large side of ${\cal C}^{i}_{\gamma}$. Hence, if one of the end-vertices of $e^j_\beta$ is not in the large side of ${\cal C}^{i}_{\gamma}$, it follows that $e^j_\beta$ crosses $P^{i}_{\gamma}$. Since all the edges of $P^{i}_{\gamma}$ lie in $f^{i+1}_{\alpha}$, we have that $e^j_\beta$ crosses $P^{i}_{\gamma}$ only if $j=i+1$. However, this contradicts the assumption that $e^{i+1}_\beta$ does not cross any con-edge for $\gamma$. Second, since $e^j_\beta$ crosses $P^{i}_{\alpha}$, one of its end-vertices is in the small side of ${\cal C}^{i}_{\alpha}$, while the other one, say $v_{\beta}$, is in the large side of ${\cal C}^{i}_{\alpha}$. Hence, $v_{\beta}$ is in the large side of both ${\cal C}^{i}_{\alpha}$ and ${\cal C}^{i}_{\gamma}$. This proves the claim and hence the lemma. Together with Lemma~\ref{le:one-or-the-other}, this lemma establishes the correctness of {\sc Simplification~6}.
\end{proof}



\begin{lemma}[{\sc Simplification 7}]\label{le:non-isomorphic-sameT1}
Suppose that a con-edge $e^{i+1}_\gamma$ for $\gamma$ exists with $e^{i+1}_{\gamma}\conf e^{i+1}_{\beta}$. If a planar set $S$ of spanning trees for $A$ exists, then either $e^i_\alpha\in S$ or $e^{i+1}_\alpha\in S$.
\end{lemma}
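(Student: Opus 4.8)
The plan is to argue by contradiction: assume a planar set $S$ of spanning trees for $A$ exists with $e^i_\alpha\notin S$ and $e^{i+1}_\alpha\notin S$, and derive a contradiction in the spirit of Lemma~\ref{le:non-isomorphic-differentT1}. As in that proof, since no two conflicting edges lie in $S$ and by Lemma~\ref{le:one-or-the-other}, from $e^i_\alpha\notin S$ we get $e^i_\beta\in S$ and $e^i_\gamma\notin S$; symmetrically, from $e^{i+1}_\alpha\notin S$ we get $e^{i+1}_\beta\in S$ and $e^{i+1}_\gamma\notin S$. Recall $e^i_\gamma$ lies in $f^{i+1}_\alpha$, the face of $A[\alpha]$ shared by spokes $e^i_\alpha$ and $e^{i+1}_\alpha$; the new ingredient relative to Lemma~\ref{le:non-isomorphic-differentT1} is that now $e^{i+1}_\gamma$ also exists (a con-edge for $\gamma$ crossing $e^{i+1}_\beta$), and by the same Property~\ref{pr:no-two-edges-same-structure} argument $e^{i+1}_\gamma$ lies in one of the two faces of $A[\alpha]$ incident to $e^{i+1}_\alpha$; I would first pin down that it lies in $f^{i+1}_\alpha$ as well (it crosses $e^{i+1}_\beta$, and $e^{i+1}_\beta$'s interaction with the donut's spokes forces this side, using conditions (d)--(f) of the $\alpha$-donut and the fact that $e^i_\gamma\in f^{i+1}_\alpha$). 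So both $e^i_\gamma$ and $e^{i+1}_\gamma$ are con-edges for $\gamma$ inside the single face $f^{i+1}_\alpha$; by Property~\ref{pr:no-two-edges-same-structure} they are distinct vertices of $K_A[\gamma]$'s distinct components, hence $e^i_\gamma\neq e^{i+1}_\gamma$ and they do not cross.

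Next I would build the cycles. Let $P^i_\gamma$ ($P^{i+1}_\gamma$) be the $S$-path between the end-vertices of $e^i_\gamma$ (resp.\ $e^{i+1}_\gamma$); neither coincides with its con-edge since $e^i_\gamma,e^{i+1}_\gamma\notin S$. Set ${\cal C}^i_\gamma=P^i_\gamma\cup e^i_\gamma$ and ${\cal C}^{i+1}_\gamma=P^{i+1}_\gamma\cup e^{i+1}_\gamma$. Exactly as in Lemma~\ref{le:non-isomorphic-differentT1}, neither cycle crosses any con-edge for $\alpha$ (otherwise $S$ fails to connect $\alpha$, since no con-edge for $\alpha$ in $S$ can cross an $S$-path and none can cross a con-edge for $\gamma$ in the same $K_A$-component by Property~\ref{pr:no-two-edges-same-structure}); hence each of ${\cal C}^i_\gamma$ and ${\cal C}^{i+1}_\gamma$ lies entirely inside $f^{i+1}_\alpha$. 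Also ${\cal C}^i_\gamma$ and ${\cal C}^{i+1}_\gamma$ do not cross each other, since $P^i_\gamma,P^{i+1}_\gamma$ are $S$-paths and $e^i_\gamma,e^{i+1}_\gamma$ belong to the same $K_A$-component without crossing (Property~\ref{pr:no-two-edges-same-structure}). Then I would locate a spoke $e^j_\alpha$ ($j\neq i,i+1$) carried by the $S$-path joining the end-vertices of $e^i_\alpha$ — this spoke exists because $e^1_\alpha,\dots,e^k_\alpha$ form a separating set for $A[\alpha]$, by Lemma~\ref{le:exactly-one} exactly one spoke is in $S$, and by hypothesis it is not $e^i_\alpha$ or $e^{i+1}_\alpha$; note the donut has at least three spokes in this scenario (if it had only two, then $j$ could not exist, so this case needs {\sc Simplification~8}, not this lemma — I would note that the statement implicitly presupposes $k\geq 3$, matching the guard ``$\exists$ spoke $e^{i+2}_\alpha\neq e^i_\alpha,e^{i+1}_\alpha$'' in Algorithm~\ref{a:alg}). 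Let $e^j_\beta$ be the con-edge for $\beta$ crossing $e^j_\alpha$ (it exists by Lemma~\ref{le:donut}), and let ${\cal C}^i_\alpha=P^i_\alpha\cup e^i_\alpha$ where $P^i_\alpha$ is the $S$-path between the end-vertices of $e^i_\alpha$ through $e^j_\alpha$.

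Now the endgame mirrors Lemma~\ref{le:non-isomorphic-differentT1} but uses $e^{i+1}_\gamma$ to close off the extra escape route. Since ${\cal C}^i_\gamma$ lies inside $f^{i+1}_\alpha$, it does not cross ${\cal C}^i_\alpha$, so I can speak of the ``small'' and ``large'' sides of each as in that proof; I then want a vertex $v_\beta$ of $\beta$ simultaneously on the large side of ${\cal C}^i_\alpha$ and the large side of ${\cal C}^i_\gamma$, which would contradict connectivity of $\beta$ in $S$ because $e^i_\beta$'s two end-vertices sit in the two small sides and no con-edge for $\beta$ in $S$ may cross $P^i_\alpha\cup P^i_\gamma$. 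I claim an end-vertex of $e^j_\beta$ serves: both its end-vertices are on the large side of ${\cal C}^i_\gamma$ because ${\cal C}^i_\gamma$ crosses no con-edge for $\alpha$ (so $e^j_\alpha$'s end-vertices are on the large side) and because $e^j_\beta$ cannot cross $P^i_\gamma$ — here is where I deviate from the earlier proof: if $e^j_\beta$ crossed $P^i_\gamma$, then since $P^i_\gamma\subseteq f^{i+1}_\alpha$ we would need $j=i+1$, which is excluded by choice of $j$; so no case distinction on whether $e^{i+1}_\beta$ crosses a con-edge for $\gamma$ is needed (that was the only place Lemma~\ref{le:non-isomorphic-differentT1} used its hypothesis, and the argument that $e^j_\beta$ avoids $P^i_\gamma$ is now purely combinatorial via $j\neq i+1$). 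Then, since $e^j_\beta$ crosses $P^i_\alpha$, one of its end-vertices is on the small side of ${\cal C}^i_\alpha$ and the other, call it $v_\beta$, is on the large side of ${\cal C}^i_\alpha$, hence on the large side of both cycles, giving the contradiction. I expect the main obstacle to be the opening bookkeeping step — rigorously justifying that $e^{i+1}_\gamma$ lies in $f^{i+1}_\alpha$ (not in the other face incident to $e^{i+1}_\alpha$) from the donut conditions and the placement of $e^i_\gamma$ — since everything downstream is a faithful adaptation of the Jordan-curve argument of Lemma~\ref{le:non-isomorphic-differentT1}; a secondary subtlety is confirming that the lemma's intended regime has $k\geq3$ so that the spoke $e^j_\alpha$ is available.
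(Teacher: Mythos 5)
Your proof is correct, but it takes a genuinely different route from the paper's. The paper does not argue by contradiction on the spokes of the $\alpha$-donut at all: it observes that, by Lemma~\ref{le:one-or-the-other}, the claim is equivalent to ``at most one of $e^i_\beta$ and $e^{i+1}_\beta$ is in $S$,'' and it then proves that $e^{i+1}_\beta$ is a spoke of the \emph{$\beta$-donut} for $e^i_\beta$ (via a separation argument on $A[\beta]$ and $A[\alpha]$), so that Lemma~\ref{le:exactly-one} applied to $\beta$ finishes the job; the hypothesis $e^{i+1}_\gamma\conf e^{i+1}_\beta$ is exactly what makes $e^{i+1}_\beta$ eligible to be such a spoke. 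You instead rerun the Jordan-curve endgame of Lemma~\ref{le:non-isomorphic-differentT1} with the $S$-spoke $e^j_\alpha$, $j\neq i,i+1$, and your key observation --- that $e^j_\beta$ meets $A[\alpha]$ only at its crossing with $e^j_\alpha$ and hence cannot enter $f^{i+1}_\alpha$, so it cannot cross $P^i_\gamma$ --- correctly replaces the only step of that proof which used {\sc Simplification 6}'s hypothesis. The paper's route buys a uniform argument for all $k\geq 2$ and reuses the donut machinery on $\beta$; your route buys a near-verbatim adaptation of an existing proof and in fact yields a slightly stronger statement (the existence of $e^{i+1}_\gamma$ is never used), which is consistent with the paper since {\sc Simplification 6} already covers the complementary hypothesis with a stronger conclusion. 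Two small clean-ups: the opening effort to locate $e^{i+1}_\gamma$ inside $f^{i+1}_\alpha$ is dead weight (you never use it, so drop it rather than leave a vaguely justified step); and for $k=2$ you should simply note that the conclusion is immediate from Lemma~\ref{le:exactly-one}, since then $e^i_\alpha$ and $e^{i+1}_\alpha$ are all the spokes --- the statement does not fail there, it is just not where your spoke $e^j_\alpha$ lives.
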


\begin{proof}
Suppose that a planar set $S$ of spanning trees for $A$ exists. By Lemma~\ref{le:exactly-one}, {\em exactly} one out of $e^1_\alpha,\dots,e^{k}_\alpha$ belongs to $S$. Hence, {\em at most} one out of $e^i_\alpha$ and $e^{i+1}_\alpha$ belongs to $S$. It remains to prove that {\em at least} one out of $e^i_\alpha$ and $e^{i+1}_\alpha$ belongs to $S$. By Lemma~\ref{le:one-or-the-other}, this is equivalent to prove that {\em at most} one out of $e^i_\beta$ and $e^{i+1}_\beta$ belongs to $S$.

We prove that $e^{i+1}_\beta$ is a spoke of the $\beta$-donut for $e^i_\beta$. By Lemma~\ref{le:exactly-one}, the statement implies that at most one out of $e^i_\beta$ and $e^{i+1}_\beta$ belongs to $S$, and hence implies the lemma.

\begin{figure}[tb]
\begin{center}
\mbox{\includegraphics[scale=0.38]{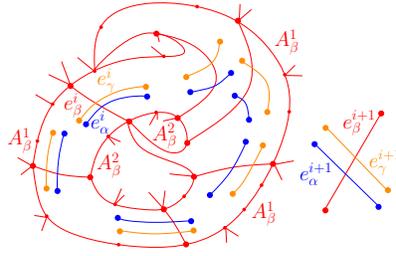}}
\caption{Illustration for the proof of Lemma~\ref{le:non-isomorphic-sameT1}.}
\label{fig:non-isomorphic-two}
\end{center}
\end{figure}

Suppose, for a contradiction, that $e^{i+1}_\beta$ is not a spoke of the $\beta$-donut for $e^i_\beta$. Refer to Fig.~\ref{fig:non-isomorphic-two}. First, the $\beta$-donut for $e^i_\beta$ exists by Lemma~\ref{le:donut}, given that $e^i_\beta$ crosses at least two con-edges $e^i_\alpha$ and $e^i_{\gamma}$ for clusters $\alpha$ and $\gamma$, respectively, and given that {\sc Simplifications 1--4} do not apply to $A$ and {\sc Tests 1--4} fail on $A$. Second, denote by $\tau_1,\dots,\tau_m$ the clusters whose con-edges cross $e^i_\beta$, ordered as they cross $e^i_\beta$ when clockwise traversing one of the two faces incident to $e^i_\beta$. Observe that $\alpha$ and $\gamma$ are among $\tau_1,\dots,\tau_m$. Third, we define two subgraphs $A^{1}_\beta$ and $A^{2}_\beta$ of $A[\beta]$, as the subgraphs of $A[\beta]$ whose edges delimit the $\beta$-donut for $e^i_\beta$. That is, consider the faces of $A[\beta]$ incident to spokes of the $\beta$-donut for $e^i_\beta$; the union of the boundaries of such faces defines a connected subgraph of $A[\beta]$, from which we remove the spokes of the $\beta$-donut for $e^i_\beta$, thus obtaining a subgraph $A^*_{\beta}$ of $A[\beta]$ composed of two connected components, that we denote by $A^{1}_\beta$ and $A^{2}_\beta$. By Lemma~\ref{le:donut}, the edges of $A^{1}_\beta$ are not crossed by any con-edge for $\tau_2,\dots,\tau_m$, and the edges of $A^{2}_\beta$ are not crossed by any con-edge for $\tau_1,\dots,\tau_{m-1}$ (up to renaming $A^{1}_\beta$ with $A^{2}_\beta$). Denote by $f^A_{\beta}$ the connected region defined by $A^*_{\beta}$ that used to contain the spokes of the $\beta$-donut for $e^i_\beta$.

If $e^{i+1}_\beta$ is not a spoke of the $\beta$-donut for $e^i_\beta$, then either $A^{1}_\beta$ or $A^{2}_\beta$ separates $e^{i+1}_\beta$ from $f^A_{\beta}$, given that the only edges of $A[\beta]$ in $f^A_{\beta}$ are the spokes of the $\beta$-donut for $e^i_\beta$. Suppose w.l.o.g. that $A^{1}_\beta$ separates $e^{i+1}_\beta$ from $f^A_{\beta}$. Observe that at least one of $\alpha$ and $\gamma$ is in $\tau_2,\dots,\tau_m$, say that $\alpha$ is in $\tau_2,\dots,\tau_m$. Then, either $A[\alpha]$ is disconnected, or there exists a path that is composed of con-edges for $\alpha$, that connects an end-vertex of $e^{i+1}_\alpha$ with an end-vertex of $e^{i}_\alpha$, and that contains an edge crossing an edge of $A^{1}_\beta$. In both cases we have a contradiction, which proves the statement and hence the lemma. Together with Lemma~\ref{le:one-or-the-other}, this lemma establishes the correctness of {\sc Simplification~7}.
\end{proof}

Observe that Simplification 7 can be applied in the case in which the $\alpha$-donut for $e_{\alpha}$ has at least three spokes. Namely, in that case, by Lemmata~\ref{le:exactly-one} and~\ref{le:non-isomorphic-sameT1} all the spokes different from $e^i_\alpha$ and $e^{i+1}_\alpha$ can be removed from $A$.



Next, assume that there exists an $\alpha$-donut with exactly two spokes $e^1_\alpha$ and $e^2_\alpha$. Consider the smallest $j\geq 1$ such that one of the following holds:

\begin{enumerate}
\item there exist con-edges $e_{\mu}\in L_j(e^a_{\alpha})$ and $e_{\nu}\in H_{j-1}(e^a_{\alpha})$ for clusters $\mu$ and $\nu$, resp., such that $e_{\mu}\conf e_{\nu}$, and there exists no con-edge $g_{\mu}\in L_j(e^b_{\alpha})$ for $\mu$ such that $g_{\mu}\conf g_{\nu}$ with $g_{\nu}$ con-edge for $\nu$ in $H_{j-1}(e^b_{\alpha})$, for some $a,b\in\{1,2\}$ with $a\neq b$; or
\item there exist con-edges $e_{\mu}\in H_j(e^a_{\alpha})$ and $e_{\nu}\in L_{j}(e^a_{\alpha})$ for clusters $\mu$ and $\nu$, resp., such that $e_{\mu}\conf e_{\nu}$, and there exists no con-edge $g_{\mu}\in H_j(e^b_{\alpha})$ for $\mu$ such that $g_{\mu}\conf g_{\nu}$ with $g_{\nu}$ con-edge for $\nu$ in $L_{j}(e^b_{\alpha})$, for some $a,b\in\{1,2\}$ with $a\neq b$.
\end{enumerate}

We have the following.

\begin{lemma}[{\sc Simplification 8}] \label{le:non-isomorphic-differentTj}
Assume that a planar set $S$ of spanning trees for $A$ exists. Then, $e_{\mu}\in S$.
\end{lemma}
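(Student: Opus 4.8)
The plan is to mimic the structure of the proof of Lemma~\ref{le:isomorphic} (resp. Lemma~\ref{le:non-isomorphic-differentT1}), but now exploiting the \emph{failure} of the symmetry between the two spokes at layer $j$. Suppose, for a contradiction, that a planar set $S$ of spanning trees for $A$ exists with $e_\mu\notin S$. By Lemma~\ref{le:exactly-one}, exactly one of $e^1_\alpha,e^2_\alpha$ is in $S$; by Lemma~\ref{le:one-or-the-other} and the absence of conflicting pairs in $S$, this forces $\bigcup_r H_r(e^a_\alpha)\subseteq S$ and $S\cap\bigcup_r L_r(e^a_\alpha)=\emptyset$ for whichever $a\in\{1,2\}$ has $e^a_\alpha\in S$, and symmetrically for the other spoke $e^b_\alpha$ (with $L$ and $H$ exchanged). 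Since $e_\mu\notin S$, the layer $j$ at which $e_\mu$ sits tells us whether $e^a_\alpha\in S$ or $e^b_\alpha\in S$: if $e_\mu\in L_j(e^a_\alpha)$ (case 1) then $e^a_\alpha\in S$, hence $e_\nu\in H_{j-1}(e^a_\alpha)\subseteq S$ and $e_\mu\notin S$ are consistent; if $e_\mu\in H_j(e^a_\alpha)$ (case 2) then $e^b_\alpha\in S$ so that $L_j(e^a_\alpha)$-edges are in $S$ but $H_j(e^a_\alpha)$-edges are not.

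The core of the argument is then to build, exactly as in Lemmata~\ref{le:same-order}, \ref{le:non-isomorphic-differentT1} and in the $H_j$/$L_j$ bullets of Lemma~\ref{le:isomorphic}, two cycles ${\cal C}_\mu$ and ${\cal C}_\nu$ out of $S$-paths closing up $e_\mu$'s and $e_\nu$'s mate, and to locate a vertex of the intermediate cluster (the cluster $\varrho$ of an edge in $L_{j-1}(e^a_\alpha)$ or $H_{j-1}(e^a_\alpha)$ crossing both, in case~1; or the analogous edge one layer up, in case~2) that is trapped in the large side of both cycles, contradicting the connectivity of that cluster in $S$. The novelty compared to Lemma~\ref{le:isomorphic} is that here we \emph{cannot} invoke an isomorphism of conflicting structures to guarantee that the needed crossing con-edge on the $b$-side exists; instead, the \emph{minimality of $j$} gives it to us: for every $j'<j$ the symmetry condition holds, so the BFS layers $H_0,L_1,H_1,\dots$ of $e^a_\alpha$ and $e^b_\alpha$ agree (cluster-wise and conflict-wise) up to layer $j-1$. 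In particular the end-vertices and the conflict pattern of $e_\nu$ and of its predecessor chain back to the spoke are mirrored on the $b$-side, which is precisely what lets us set up the curve ${\cal C}^x$-style closed curves (via Property~\ref{pr:no-two-edges-same-structure}) and run the Jordan-curve trapping argument. The failure of the condition at layer $j$ is exactly the statement that the mate $g_\mu$ on the $b$-side does \emph{not} exist, which is the analogue of ``$e^{i+1}_\beta$ crosses no con-edge for $\gamma$'' in Lemma~\ref{le:non-isomorphic-differentT1}; this is what keeps the small sides of ${\cal C}_\mu$ and ${\cal C}_\nu$ disjoint and forces the contradiction.

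Concretely I would proceed as follows. First, fix $a,b$ and the case ($1$ or $2$) realizing the minimal $j$, and record from Lemma~\ref{le:exactly-one}, Lemma~\ref{le:one-or-the-other}, and {\sc Test 2} which of $\bigcup_r H_r$, $\bigcup_r L_r$ is in $S$ for each spoke. Second, use the minimality of $j$ together with the $\alpha$-donut structure (Lemma~\ref{le:donut}) to show that the layers up to $j-1$ of $M(e^a_\alpha)$ and $M(e^b_\alpha)$ are ``isomorphic up to level $j-1$'', so that the relevant predecessor edges $e_\nu$ (in $H_{j-1}$ or $L_{j-1}$) and, one step further back, an edge $e_\varrho$ in the layer below crossing both, have mirror copies $g_\nu,g_\varrho$ on the $b$-side, all with the same cluster labels. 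Third, invoke Property~\ref{pr:no-two-edges-same-structure} to draw the closed curves isolating the relevant triples of con-edges, then form the $S$-paths $P_\mu$ (closing the mate of $e_\mu$), $P_\nu$ (closing the mate of $e_\nu$) — being careful, exactly as in Lemma~\ref{le:isomorphic}, that $P_\mu$ avoids the edge $\delta(\cdot)$ not in $S$ while $P_\nu$ may legitimately contain its $a$-side counterpart — and derive the two cycles and their small/large sides. Fourth, use the nonexistence of $g_\mu$ (the failure at layer $j$) to conclude that the small sides are disjoint, exhibit a vertex of $\varrho$ in the large side of both, and conclude that $S$ cannot connect $\varrho$, a contradiction. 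Finally, the two sub-cases $e_\mu\in L_j(e^a_\alpha)$ and $e_\mu\in H_j(e^a_\alpha)$ are handled by the same argument with the roles of $H$ and $L$ (and of the two spokes) swapped, exactly the relabeling indicated at the end of Lemma~\ref{le:isomorphic}'s proof; together with Lemma~\ref{le:one-or-the-other} this establishes the correctness of {\sc Simplification 8}.

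The main obstacle I expect is the bookkeeping in the second step: pinning down precisely which edges on the $b$-side are forced to exist and to carry which cluster labels, since unlike the fully isomorphic case we only have agreement of the BFS layers \emph{strictly below} $j$, and we must verify that this partial agreement reaches far enough back (one layer past $e_\nu$) to supply the ``intermediate-cluster'' vertex $v_\varrho$ that gets trapped. Once that is in place, the topological trapping argument is essentially a transcription of the arguments already carried out in Lemmata~\ref{le:same-order}, \ref{le:isomorphic}, and \ref{le:non-isomorphic-differentT1}.
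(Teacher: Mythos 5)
Your opening moves match the paper: assume $e_\mu\notin S$, use Lemmata~\ref{le:exactly-one} and~\ref{le:one-or-the-other} to pin down which layers of $M(e^a_\alpha)$ and $M(e^b_\alpha)$ lie in $S$, and use the minimality of $j$ to extract a chain $e^h_0=e^a_\alpha, e^l_1, e^h_1,\dots,e^h_{j-1}=e_\nu,e^l_j=e_\mu$ together with a cluster-wise mirrored chain $g^h_0=e^b_\alpha,g^l_1,\dots,g^h_{j-1}=g_\nu$ on the other spoke. But the core of your argument is a gap, for two reasons. First, the endgame you propose --- two cycles with disjoint small sides trapping an edge of an ``intermediate cluster $\varrho$'', as in Lemmata~\ref{le:same-order}, \ref{le:isomorphic} and~\ref{le:non-isomorphic-differentT1} --- does not transfer: in all of those proofs the trapped object is a con-edge that crosses \emph{both} cycles, and here the would-be trapped edge $g_\mu$ (a con-edge for $\mu$ crossing $g_\nu$) is precisely the edge whose non-existence is the hypothesis. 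You never say which cluster $\varrho$ is, why a vertex of it lies in the large side of both cycles, or why the small sides are disjoint; the sentence ``this is what keeps the small sides disjoint and forces the contradiction'' is where the proof would have to happen. Second, the step you flag as ``bookkeeping'' is in fact the heart of the matter, and it is not an isomorphism-up-to-level-$j{-}1$ statement: the paper proves, by an induction that alternates between the two spokes, that for every $k$ the cycle ${\cal C}^h_k$ closing $g^h_k$ by an $S$-path \emph{contains the edge $e^h_k$ of the other structure}, and the cycle ${\cal C}^l_k$ closing $e^l_k$ contains $g^l_k$; the base case uses that $e^a_\alpha,e^b_\alpha$ is a separating pair of $A[\alpha]$, and each step is a Jordan-curve argument (the path closing $e^l_{k+1}$ must cross ${\cal C}^h_k$, cannot cross its $S$-subpath, hence crosses $g^h_k$ and therefore contains $g^l_{k+1}$). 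Without this topological entanglement of the two conflicting structures you have no control over where any of your cycles lie.

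The final contradiction in the paper is then not a disconnection of an auxiliary cluster but a direct one about $\mu$ itself: since $e^h_{j-1}\in{\cal C}^h_{j-1}$ and $e_\mu\conf e^h_{j-1}$, the $S$-path $P_\mu$ closing $e_\mu$ must cross ${\cal C}^h_{j-1}$, hence must cross $g^h_{j-1}$. For this to be absurd one needs that \emph{no} con-edge for $\mu$ whatsoever crosses $g^h_{j-1}$ --- strictly stronger than the stated hypothesis, which only excludes such a con-edge in $L_j(e^b_\alpha)$. The paper derives this strengthening in a separate preliminary step from Property~\ref{pr:no-two-edges-same-structure} and the minimality of $j$; your proposal omits it, and without it even a correctly assembled Jordan-curve argument would not close.
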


\begin{proof}
We prove the lemma in the case in which $j$ is determined by (1), i.e. there exist con-edges $e_{\mu}\in L_j(e^a_{\alpha})$ and $e_{\nu}\in H_{j-1}(e^a_{\alpha})$ for clusters $\mu$ and $\nu$, respectively, such that $e_{\mu}\conf e_{\nu}$, and there exists no con-edge $g_{\mu}\in L_j(e^b_{\alpha})$ for $\mu$ such that $g_{\mu}\conf g_{\nu}$ with $g_{\nu}\in H_{j-1}(e^b_{\alpha})$ con-edge for $\nu$, for some $a,b\in\{1,2\}$ with $a\neq b$. The proof for the case in which the value of $j$ is determined by (2) is analogous. Refer to Fig.~\ref{fig:two-spokes}.

Let $e^h_0=e^a_{\alpha}$, $g^h_0=e^b_{\alpha}$, and $e^l_j=e_{\mu}$. Also, for $1\leq k\leq j-1$, denote by $e^l_k\in L_k(e^a_{\alpha})$ and $e^h_k \in H_k(e^a_{\alpha})$ con-edges for clusters $\nu^l_k$ and $\nu^h_k$, respectively, where $e^h_k\conf e^l_{k+1}$ for $0\leq k\leq j-1$ and $e^l_k\conf e^h_k$ for $1\leq k\leq j-1$. Observe that $e^h_{j-1}=e_{\nu}$. Further, for $1\leq k\leq j-1$, denote by $g^l_k\in L_k(e^b_{\alpha})$ and $g^h_k \in H_k(e^b_{\alpha})$ con-edges for clusters $\nu^l_k$ and $\nu^h_k$, respectively, such that $g^h_k\conf g^l_{k+1}$ for $0\leq k\leq j-2$ and $g^l_k\conf g^h_k$ for $1\leq k\leq j-1$. All these edges exist by definition of conflicting structure and by the minimality of $j$. Observe that $g^h_{j-1}=g_{\nu}$.  By assumption, no con-edge $g_{\mu}\in L_j(e^b_{\alpha})$ for $\mu$ exists such that $g_{\mu}\conf g^h_{j-1}$.

\begin{figure}[tb]
\begin{center}
\mbox{\includegraphics[scale=0.38]{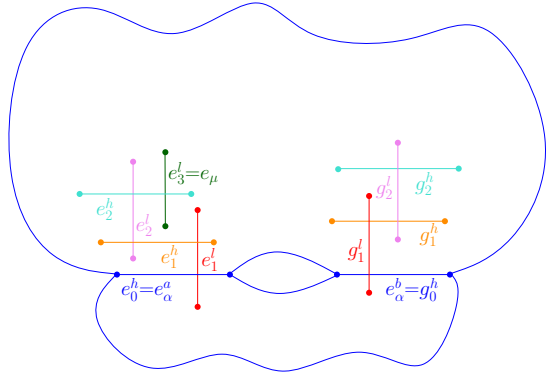}}
\caption{Illustration for the proof of Lemma~\ref{le:non-isomorphic-differentTj}, with $j=3$.}
\label{fig:two-spokes}
\end{center}
\end{figure}

First, we argue that no con-edge $g_{\mu}$ for $\mu$ exists such that $g_{\mu}\conf g^h_{j-1}$. That is, not only $g_{\mu}$ is not in $L_j(e^b_{\alpha})$, but no con-edge $g_{\mu}$ for $\mu$ such that $g_{\mu}\conf g^h_{j-1}$ exists in $M(e^b_{\alpha})$ at all. By definition of conflicting structure and since {\sc Test 2} does not apply to $A$, if $g_{\mu}\in M(e^b_{\alpha})$ and $g_{\mu}\notin L_j(e^b_{\alpha})$, then $g_{\mu}\in L_{j-1}(e^b_{\alpha})$. However, this contradicts the minimality of $j$. Namely,  there exist con-edges $g_{\mu}\in L_{j-1}(e^b_{\alpha})$ and $g_{\tau}\in H_{j-2}(e^b_{\alpha})$ for clusters $\mu$ and $\tau$, respectively, such that $g_{\mu}\conf g_{\tau}$ (in fact, $g_{\tau}$ is any edge in $H_{j-2}(e^b_{\alpha})$ that crosses $g_{\mu}$; this edge exists by definition of conflicting structure), and there exists no con-edge $l_{\mu}\in L_{j-1}(e^a_{\alpha})$ for $\mu$ such that $l_{\mu}\conf l_{\tau}$ with $l_{\tau}\in H_{j-2}(e^a_{\alpha})$ con-edge for $\tau$, since by Property~\ref{pr:no-two-edges-same-structure} no con-edge for $\mu$ different from $e_{\mu}$ belongs to $M(e^a_{\alpha})$.

Now suppose, for a contradiction, that a planar set $S$ of spanning trees for $A$ exists with $e_{\mu} \notin S$. Since no two conflicting edges both belong to $S$ and by Lemma~\ref{le:one-or-the-other}, we have that $e^h_0,e^h_1,\dots,e^h_{j-1}\in S$ and $e^l_1,e^l_2,\dots,e^l_{j}\notin S$. Further, by Lemma~\ref{le:exactly-one} we have $e^b_{\alpha} \notin S$. Since no two conflicting edges both belong to $S$ and by Lemma~\ref{le:one-or-the-other}, we have that $g^l_1,\dots,g^l_{j-1}\in S$ and $g^h_0,g^h_1,\dots,g^h_{j-1}\notin S$.

For each $1\leq k\leq j-1$, denote by $P^l_k$ the path connecting the end-vertices of $e^l_k$ and all of whose edges belong to $S$. By assumption, $P^l_k\neq e^l_k$. Hence, denote by ${\cal C}^l_k$ the cycle composed of $P^l_k$ and $e^l_k$. Analogously, for each $0\leq k\leq j-1$, denote by $P^h_k$ the path connecting the end-vertices of $g^h_k$ and all of whose edges belong to $S$. By assumption, $P^h_k\neq g^h_k$. Hence, denote by ${\cal C}^h_k$ the cycle composed of $P^h_k$ and $g^h_k$.

We will iteratively prove the following statements (see  Fig.~\ref{fig:two-spokes-2}): (I) For every $0\leq k\leq j-1$, edge $e^h_k$ belongs to ${\cal C}^h_k$; (II) for every $1\leq k\leq j-1$, edge $g^l_k$ belongs to ${\cal C}^l_k$.

\begin{figure}[tb]
\begin{center}
\mbox{\includegraphics[scale=0.38]{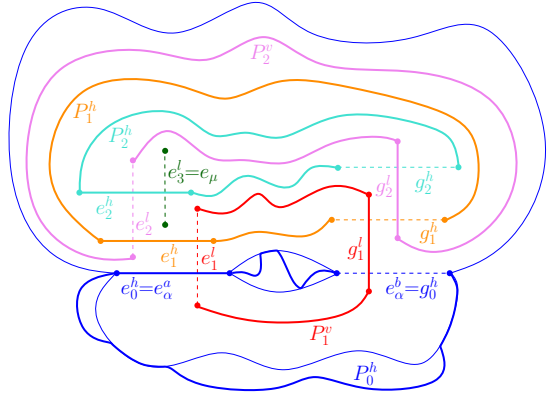}}
\caption{Illustration for statements (I) and (II).}
\label{fig:two-spokes-2}
\end{center}
\end{figure}

First, $e^h_0=e^a_{\alpha}$ belongs to ${\cal C}^h_0$, as a consequence of the fact that $e^a_{\alpha}$ and $e^b_{\alpha}$ form a separating pair of edges for $A[\alpha]$ and that the end-vertices of $e^b_{\alpha}$ are in different connected components resulting from the removal of $e^a_{\alpha}$ and $e^b_{\alpha}$ from $A[\alpha]$.

Now suppose that, for some $0\leq k\leq j-2$, it holds that $e^h_k\in {\cal C}^h_k$; we prove that $g^l_{k+1}\in {\cal C}^l_{k+1}$. Observe that the end-vertices of $e^l_{k+1}$ are on different sides of ${\cal C}^h_k$, given that $e^l_{k+1}\conf e^h_k$ and that $e^h_k\in {\cal C}^h_k$. By the Jordan curve theorem, path $P^l_{k+1}$ crosses an edge of ${\cal C}^h_k$. However, $P^l_{k+1}$ does not cross any edge of $P^h_k$ as all the edges of such paths belong to $S$. Hence, $P^l_{k+1}$ crosses $g^h_k$, hence it contains edge $g^l_{k+1}$, which proves the statement.

Analogously, suppose that, for some $1\leq k\leq j-1$, it holds that $g^l_k\in {\cal C}^l_k$; we prove that $e^h_{k}\in {\cal C}^h_{k}$. Observe that the end-vertices of $g^h_k$ are on different sides of ${\cal C}^l_k$, given that $g^h_k\conf g^l_k$ and that $g^l_k\in {\cal C}^l_k$. By the Jordan curve theorem, path $P^h_k$ crosses an edge of ${\cal C}^l_k$. However, $P^h_k$ does not cross any edge of $P^l_k$ as all the edges of such paths belong to $S$. Hence, $P^h_k$ crosses $e^l_k$, and therefore it contains edge $e^h_k$, which proves the statement.

This proves statements (I) and (II). Now observe that the end-vertices of $e_{\mu}$ are on different sides of ${\cal C}^h_{j-1}$, given that $e_{\mu}$ has a conflict with $e^h_{j-1}$ and that $e^h_{j-1}$ belongs to ${\cal C}^h_{j-1}$. Consider a path $P_{\mu}$ that connects the end-vertices of $e_{\mu}$ and all of whose edges belong to $S$. Since the end-vertices of $e_{\mu}$ are on different sides of ${\cal C}^h_{j-1}$, by the Jordan curve theorem $P_{\mu}$ crosses an edge of ${\cal C}^h_{j-1}$. However, $P_{\mu}$ does not cross any edge of $P^h_{j-1}$ as all the edges of such paths belong to $S$. Hence, $P_{\mu}$ crosses $g^h_{j-1}$, a contradiction to the assumption that no con-edge for $\mu$ crosses $g^h_{j-1}$. This proves the lemma. Together with Lemma~\ref{le:one-or-the-other}, this lemma establishes the correctness of {\sc Simplification~8}.
\end{proof}

We now prove that our simplifications form a ``complete set''.
\begin{lemma} \label{le:no-simplification-single-conflict}
Suppose that {\sc Simplifications 1--8} do not apply to $A$ and that {\sc Tests 1--4} fail on $A$. Then, every con-edge in $A$ crosses exactly one con-edge in $A$.
\end{lemma}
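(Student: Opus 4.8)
The statement is the natural ``completeness'' claim that closes the case analysis: after all tests and simplifications are inapplicable, $A$ must be single-conflict. I would prove the contrapositive. Suppose some con-edge in $A$ crosses at least two con-edges. Since {\sc Test 2} fails, $K_A$ is bipartite, so within any connected component of $K_A$ the con-edges split into an ``$H$-side'' and an ``$L$-side'', and a con-edge crossing two others must have both its crossing partners on the opposite side. By symmetry (swapping the roles of the two sides) I may assume there is a con-edge $e_{\alpha}$ for a cluster $\alpha$ that crosses $m\geq 2$ con-edges for other clusters; here I use Property~\ref{pr:no-two-edges-same-structure} to see that those $m$ con-edges are for $m$ \emph{distinct} clusters $\beta_1,\dots,\beta_m$. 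The goal is then to derive a contradiction: one of the simplifications/tests must apply.

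First I would invoke Lemma~\ref{le:donut} to obtain the $\alpha$-donut for $e_{\alpha}$, with spokes $e^1_{\alpha},\dots,e^k_{\alpha}$, $k\geq 2$. Now the argument branches on the number of spokes and on whether consecutive spokes have isomorphic conflicting structures. If two consecutive spokes $e^i_{\alpha},e^{i+1}_{\alpha}$ have isomorphic conflicting structures, then {\sc Simplification 5} (Lemma~\ref{le:isomorphic}) applies, a contradiction. Otherwise, take any two consecutive spokes $e^i_{\alpha},e^{i+1}_{\alpha}$ that are \emph{not} isomorphic. I would argue that the non-isomorphism must manifest in one of the precise forms targeted by {\sc Simplifications 6--8}: walking through the BFS layers $H_0,L_1,H_1,L_2,\dots$ of the two conflicting structures in parallel, there is a first layer at which they differ; because each $L_i$ and each $H_i$ contains at most one con-edge per cluster (Property~\ref{pr:no-two-edges-same-structure}) and no two edges in the same layer conflict ({\sc Test 2} fails), the first discrepancy is exactly a pair $(e_{\mu},e_{\nu})$ of the type described just before Lemma~\ref{le:non-isomorphic-differentT1} (for layer $1$, giving {\sc Simplification 6} or {\sc Simplification 7} depending on whether the matching con-edge for $\gamma$ exists in $M(e^{i+1}_{\alpha})$) or of the type in items (1)--(2) preceding Lemma~\ref{le:non-isomorphic-differentTj} (for a deeper layer, giving {\sc Simplification 8}, which additionally requires that the donut have exactly two spokes). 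The case of $\geq 3$ spokes is handled by the remark after Lemma~\ref{le:non-isomorphic-sameT1}: {\sc Simplification 7} already applies there (via Lemmata~\ref{le:exactly-one} and~\ref{le:non-isomorphic-sameT1} all but two spokes get removed), so we may assume exactly two spokes, which is precisely the hypothesis of {\sc Simplification 8}. In every case a simplification applies, contradicting the assumption.

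**The main obstacle.** The delicate part is the bookkeeping that shows the ``first discrepancy'' between two non-isomorphic conflicting structures is always of one of the three specific shapes the simplifications are designed to catch --- and, in particular, that a discrepancy in layer $L_1$ (rather than $H_1$ or deeper) is forced when the donut has $\geq 3$ spokes, so that only {\sc Simplifications 6} and~7 are needed there, whereas {\sc Simplification 8} (which is stated only for two-spoke donuts) suffices for the remaining two-spoke case. One has to be careful that ``non-isomorphic'' is witnessed by a \emph{local} defect of the matching between the layers, using that Property~\ref{pr:no-two-edges-same-structure} forces each cluster to contribute at most one edge to each conflicting structure, so a bijection respecting clusters and layers either exists or fails at an identifiable spot. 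I would also double-check the reduction to the case ``some con-edge for a cluster crosses $\geq 2$ con-edges for distinct other clusters'': the bipartiteness of $K_A$ lets us orient any crossing so that the ``$\geq 2$'' edge is on the side we want, and the assumption that {\sc Simplifications 1--4} are inapplicable is what keeps us in the regime where Lemma~\ref{le:donut} is usable. Once these points are nailed down, the proof is just ``pick the right simplification,'' so the write-up should be short: assume not single-conflict, build the donut, split on isomorphic vs.\ not, and in each branch name the simplification that fires.
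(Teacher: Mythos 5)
Your overall strategy matches the paper's: assume some con-edge $e_{\alpha}$ crosses $m\geq 2$ con-edges (for distinct clusters, by Property~\ref{pr:no-two-edges-same-structure}), build the $\alpha$-donut via Lemma~\ref{le:donut}, and then argue that one of {\sc Simplifications 5--8} must fire. However, the point you yourself flag as ``the main obstacle'' is a genuine gap, and your proposed patch does not close it. You claim that for a donut with $\geq 3$ spokes ``{\sc Simplification 7} already applies there,'' citing the remark after Lemma~\ref{le:non-isomorphic-sameT1}; but that remark is made \emph{under the preconditions} of {\sc Simplifications 6/7} (namely that some $e^i_{\gamma}\in H_1(e^i_{\alpha})$ crosses $e^i_{\beta}\in L_1(e^i_{\alpha})$ and lies in $f^{i+1}_{\alpha}$, and that a matching $e^{i+1}_{\gamma}$ conflicting with $e^{i+1}_{\beta}$ exists). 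Having $\geq 3$ spokes alone does not trigger it, so your reduction ``we may assume exactly two spokes'' is unsupported, and your layer-by-layer ``first discrepancy'' scheme then leaves the case of a $\geq 3$-spoke donut whose consecutive spokes differ only at a deep layer uncovered (since {\sc Simplification 8} is stated only for two-spoke donuts).

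The paper's resolution is different and avoids the discrepancy hunt entirely in the many-spoke case: for $\geq 3$ spokes it observes that if $H_1(e^i_{\alpha})=H_1(e^{i+1}_{\alpha})=\emptyset$ for consecutive spokes then the conflicting structures are isomorphic and {\sc Simplification 5} applies; otherwise $H_1(e^i_{\alpha})\neq\emptyset$, and then {\sc Simplification 6} or {\sc 7} fires depending \emph{only} on whether a con-edge for $\gamma$ in $H_1(e^{i+1}_{\alpha})$ crosses $e^{i+1}_{\beta_\ell}$ --- no appeal to non-isomorphism or to deeper layers is needed, and the third spoke required by {\sc Simplification 7} is available by hypothesis. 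The ``first discrepancy'' argument (your items (1)--(2)) is reserved exclusively for the two-spoke case, where it feeds {\sc Simplification 8}. Two minor further points: the lemma asserts \emph{exactly} one conflict, so you also need the ``at least one'' half, which follows from {\sc Simplifications 2} and {\sc 3} being inapplicable and which your write-up omits; and the bipartiteness/``two sides'' framing in your opening reduction does no work --- ``some con-edge crosses at least two others'' is already the negation you need.
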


\begin{proof}
Since {\sc Simplification 2} and {\sc Simplification 3} do not apply to $A$, every con-edge in $A$ has a conflict with at least one con-edge in $A$. Hence, we need to prove that there exists no con-edge in $A$ that has a conflict with two or more con-edges in $A$. Suppose, for a contradiction, that there exists a con-edge $e_{\alpha}\in A$ for a cluster $\alpha$ that has a conflict with con-edges for clusters $\beta_1,\dots,\beta_m$, for some $m\geq 2$.

Since {\sc Simplifications 1--4} do not apply to $A$ and {\sc Tests 1--4} fail on $A$, by Lemma~\ref{le:donut} there exists an $\alpha$-donut $D_{\alpha}$ having $e_{\alpha}$ as one of its spokes.

Suppose first that $D_{\alpha}$ has more than two spokes. Consider any two spokes $e^i_{\alpha}$ and $e^{i+1}_{\alpha}$ in $D_{\alpha}$. If $H_1(e^i_{\alpha})=H_1(e^{i+1}_{\alpha})=\emptyset$, then $e^i_{\alpha}$ and $e^{i+1}_{\alpha}$ have isomorphic conflicting structures, hence {\sc Simplification 5} applies to $A$, a contradiction. Hence, $H_1(e^i_{\alpha})\neq \emptyset$ or $H_1(e^{i+1}_{\alpha})\neq \emptyset$, say w.l.o.g. $H_1(e^i_{\alpha})\neq \emptyset$. Let $e^i_{\gamma}\in H_1(e^i_{\alpha})$ be any con-edge for a cluster $\gamma$, and let $e^i_{\beta_\ell}\in L_1(e^i_{\alpha})$ be a con-edge for a cluster $\beta_\ell$, for some $1\leq l\leq m$, such that $e^i_{\gamma}\conf e^i_{\beta_\ell}$. Also, $e^i_{\beta_\ell} \conf e^i_{\alpha}$, given that $e^i_{\beta_\ell}\in L_1(e^i_{\alpha})$. Since $e^i_{\gamma}$ and $e^i_{\alpha}$ belong to the same connected component of $K_A$ and do not cross, by Property~\ref{pr:no-two-edges-same-structure}, it follows that $e^i_{\gamma}$ does not cross any con-edge for $\alpha$, hence it lies either in the face $f^{i+1}_{\alpha}$ shared by $e^i_{\alpha}$ and $e^{i+1}_{\alpha}$ or in the face $f^{i}_{\alpha}$ shared by $e^i_{\alpha}$ and $e^{i-1}_{\alpha}$, say w.l.o.g. that $e^i_{\gamma}$ lies in $f^{i+1}_{\alpha}$. If the con-edge $e^{i+1}_{\beta_\ell}\in L_1(e^{i+1}_{\alpha})$ for $\beta_\ell$ has no conflict with any con-edge for $\gamma$, then {\sc Simplification 6} applies to $A$, while if $e^{i+1}_{\beta_\ell}$ has a conflict with a con-edge for $\gamma$, then {\sc Simplification 7} applies to $A$. In both cases, we get a contradiction to the fact that {\sc Simplifications 1--8} do not apply to $A$.

Suppose next that $D_{\alpha}$ has exactly two spokes $e^1_{\alpha}$ and $e^2_{\alpha}$. If $e^1_{\alpha}$ and $e^2_{\alpha}$ have isomorphic conflicting structures, then {\sc Simplification 5} applies to $A$, a contradiction. Otherwise, consider a minimal index $j$ such that either (1) there exists a con-edge $e_{\mu}\in L_j(e^a_{\alpha})$ for a cluster $\mu$ that crosses a con-edge $e_{\nu}\in H_{j-1}(e^a_{\alpha})$ for a cluster  $\nu$, and there exists no con-edge $g_{\mu}\in L_j(e^b_{\alpha})$ for $\mu$ that crosses a con-edge $g_{\nu}\in H_{j-1}(e^b_{\alpha})$ for $\nu$, for some $a,b\in\{1,2\}$ with $a\neq b$, or (2) there exists a con-edge $e_{\mu}\in H_j(e^a_{\alpha})$ for a cluster $\mu$ that crosses a con-edge $e_{\nu}\in L_{j}(e^a_{\alpha})$ for a cluster  $\nu$, and there exists no con-edge $g_{\mu}\in H_j(e^b_{\alpha})$ for $\mu$ that crosses a con-edge $g_{\nu}\in L_{j}(e^b_{\alpha})$ for $\nu$, for some $a,b\in\{1,2\}$ with $a\neq b$. Observe that (1) or (2) has to apply (as otherwise $e^1_{\alpha}$ and $e^2_{\alpha}$ would have isomorphic conflicting structures). But then {\sc Simplification 8} applies to $A$, a contradiction that proves the lemma.
\end{proof}

A linear-time algorithm to determine whether a planar set $S$ of spanning trees exists for a single-conflict graph is known~\cite{df-ectefcgsf-09}. We thus finally get:

\begin{theorem} \label{th:main}
There exists an $O(|C|^3)$-time algorithm to test the $c$-planarity of an embedded flat clustered graph $C$ with at most two vertices per cluster on the boundary of each face.
\end{theorem}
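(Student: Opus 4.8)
By Lemma~\ref{le:pssttm}, to establish the theorem it suffices to show that the {\sc pssttm} problem can be solved in $O(|A|^3)$ time on instances $A$ that satisfy Property~\ref{pr:no-two-edges-same-structure}, since then the overall running time is $O(|A|^3 + |C|^2) = O(|C|^3)$ (recall $|A| = O(|C|)$). The plan is therefore to analyze Algorithm~\ref{a:alg}: I would argue that (i) it is correct, (ii) its main loop performs $O(|C|)$ iterations, and (iii) each iteration costs $O(|C|^2)$ time.

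For correctness, I would first observe that the four {\sc Tests} only ever report ``$\nexists$ planar set of spanning trees for $A$'', and that each such report is sound by the corresponding lemma (Lemmas~\ref{le:disconnected}, \ref{le:conflicts-are-bipartite}, \ref{le:same-order}, and~\ref{le:ababa}). Next, I would argue that each of the eight {\sc Simplifications} is equivalence-preserving, i.e.\ $A$ admits a planar set of spanning trees if and only if the modified multigraph does. Concretely, when a lemma asserts that some con-edge $e$ belongs to every planar set $S$ (Lemmas~\ref{le:bridge}, \ref{le:non-isomorphic-differentT1}, \ref{le:non-isomorphic-sameT1}, \ref{le:non-isomorphic-differentTj}, \ref{le:isomorphic}), contracting $e$ is safe, and by Lemma~\ref{le:one-or-the-other} every con-edge crossing $e$ can simultaneously be removed; symmetrically, when a lemma asserts $e\notin S$ for every $S$ (Lemmas~\ref{le:self-loop}, \ref{le:deviation}), removing $e$ is safe; Lemma~\ref{le:non-conf} handles the free con-edges directly. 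Since all contractions are applied only to con-edges that cross no other con-edge, $K_A$ changes solely by deletion of isolated vertices, so Property~\ref{pr:no-two-edges-same-structure} is preserved throughout. Finally, when the loop exits --- i.e.\ no con-edge crosses more than one con-edge --- Lemma~\ref{le:no-simplification-single-conflict} guarantees that the current $A$ is single-conflict (using that none of {\sc Tests 1--4} succeeded and none of {\sc Simplifications 1--8} applied), so the linear-time algorithm of~\cite{df-ectefcgsf-09} invoked in the last line returns the correct answer for the reduced instance, hence for the original one.

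For the running time, I would note that the initial multigraph has $|A| = O(|C|)$ con-edges and that each {\sc Simplification} removes at least one con-edge from $A$ (via a removal or a contraction, noting that a contracted edge is also deleted); a {\sc Test} that succeeds terminates the algorithm. Hence the main loop executes at most $O(|C|)$ times. Within a single iteration, one must test whether each substructure is present: Lemma~\ref{le:no-simplification-single-conflict}'s dichotomy shows it suffices to check the listed substructures in order. Each check is carried out in $O(|C|^2)$ time --- including computing, for a given con-edge $e_\alpha$, its $\alpha$-donut and the conflicting structures $M(e^i_\alpha)$ of its spokes, which are BFS layerings of connected components of $K_A$ (of size $O(|C|^2)$), and including the isomorphism tests between two conflicting structures, which are linear in those components' sizes. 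Multiplying, the loop costs $O(|C|^3)$; the final call to~\cite{df-ectefcgsf-09} is linear; and adding the $O(|C|^2)$ preprocessing from Lemma~\ref{le:pssttm} (which also checks conditions (1) and (2) of Lemma~\ref{le:acyclic-saturator} in $O(|C|)$ time) yields the claimed $O(|C|^3)$ bound.

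The main obstacle in assembling this argument is purely organizational rather than mathematical: almost all the combinatorial content sits in the already-proven Lemmas~\ref{le:disconnected}--\ref{le:no-simplification-single-conflict}. What remains to be checked carefully is that the loop is genuinely well-founded --- that on every branch we either halt or strictly shrink $A$ while preserving Property~\ref{pr:no-two-edges-same-structure} and the answer --- and that the more elaborate detections (donuts, and isomorphism of conflicting structures) really do fit within the $O(|C|^2)$ per-iteration budget. A secondary point worth spelling out is why restarting all tests after each simplification (the \texttt{goto} to line~\eqref{step:restart}) is necessary and sufficient: simplifications can create new bridges, self-loops, or donut substructures, but since every simplification decreases $|A|$, the total number of restarts is still $O(|C|)$, so re-running the tests does not affect the asymptotic bound.
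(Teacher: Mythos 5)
Your proposal is correct and follows essentially the same route as the paper's own proof: reduce via Lemma~\ref{le:pssttm} to solving {\sc pssttm} on $A$, invoke Lemmata~\ref{le:disconnected}--\ref{le:no-simplification-single-conflict} for the correctness of Algorithm~\ref{a:alg}, and bound the running time by $O(|A|)$ iterations each costing $O(|A|^2)$. The only quibble is a misgrouping in one parenthetical (Lemma~\ref{le:isomorphic} asserts that $e^i_{\alpha}$ can be assumed \emph{not} to be in $S$, and Lemma~\ref{le:non-isomorphic-sameT1} is used to exclude the \emph{other} spokes from $S$, rather than asserting membership), which does not affect the argument.
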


\begin{proof}
The multigraph $A$ of the con-edges can be easily constructed in $O(|C|^2)$ time, so that $A$ has $O(|C|)$ vertices and edges and satisfies Property~\ref{pr:no-two-edges-same-structure}. By Lemma~\ref{le:pssttm}, it suffices to show how to solve the {\sc pssttm} problem for $A$ in $O(|C|^3)$ time.

Algorithm~1 correctly determines whether a planar set $S$ of spanning trees for $A$ exists, by Lemmata~\ref{le:disconnected}--\ref{le:no-simplification-single-conflict}. By suitably equipping each con-edge $e$ in $A$ with pointers to the edges in $A$ that have a conflict with $e$, it can be easily tested in $O(|A|^2)$ time whether the pre-conditions of each of {\sc Simplifications 1--8} and {\sc Tests 1--4} are satisfied; also, the actual simplifications, that is, removing and contracting edges in $A$, can be performed in $O(|A|)$ time. Furthermore, the algorithm in~\cite{df-ectefcgsf-09} runs in $O(|A|)$ time. Since the number of performed tests and simplifications is in $O(|A|)$, the total running time is $O(|A|^3)$, and hence $O(|C|^3)$.
\end{proof}

\section{Conclusions} \label{se:conclusions}

We presented a polynomial-time algorithm for testing $c$-planarity of embedded flat clustered graphs with at most two vertices per cluster on each face. An interesting extension of our results would be to devise an FPT algorithm to test the $c$-planarity of embedded flat clustered graphs, where the parameter is the maximum number $k$ of vertices of the same cluster on the same face. Even an algorithm with running time $n^{O(f(k))}$ seems to be an elusive goal. Several key lemmata (e.g. Lemmata~\ref{le:one-or-the-other} and~\ref{le:conflicts-are-bipartite}) do not apply if $k>2$, hence a deeper study of the combinatorial properties of embedded flat clustered graphs may be necessary.





\bibliographystyle{splncs_srt}
\bibliography{journal}


\end{document}